\documentclass[
reprint,
superscriptaddress,
 amsmath,amssymb,
 aps,
twocolumn,
prx,
floatfix,
noeprint
]{revtex4-2}

\usepackage{graphicx}
\usepackage{dcolumn}
\usepackage{bm}
\usepackage{braket}
\usepackage{amsmath}
\usepackage{textcomp}
\usepackage{textgreek}
\usepackage{hyperref}
\hypersetup{colorlinks=true, citecolor=blue, urlcolor=blue, linkcolor=blue}

\usepackage{mathtools}

\usepackage{amsfonts}
\usepackage[bb=boondox]{mathalfa}
\usepackage{braket}
\usepackage{dsfont}
\usepackage{xcolor}
\usepackage{multirow}

\DeclareMathOperator{\Tr}{Tr}

\DeclareMathOperator{\sign}{sign}

\usepackage{amsthm}
\newtheorem{theorem}{Theorem}
\newtheorem{lemma}[theorem]{Lemma}
\newtheorem{problem}[theorem]{Problem}
\newtheorem{definition}[theorem]{Definition}

\begin{document}


\title{Machine learning of quantum data using optimal similarity measurements}


\author{Zhenghao Li}\thanks{These authors contributed equally to this work.}
\affiliation{Department of Physics, Imperial College London, Prince Consort Road, London, SW7 2AZ, UK}

\author{Hao Zhan}\thanks{These authors contributed equally to this work.}
\affiliation{Department of Physics, Imperial College London, Prince Consort Road, London, SW7 2AZ, UK}
\affiliation{College of Engineering and Applied Sciences, Nanjing University, 163 Xianlin Road, Nanjing 210093, China}

\author{Shana H. Winston}
\affiliation{Department of Physics, Imperial College London, Prince Consort Road, London, SW7 2AZ, UK}

\author{Ewan Mer}
\affiliation{Department of Physics, Imperial College London, Prince Consort Road, London, SW7 2AZ, UK}

\author{Zhenghao Yin}
\affiliation{University of Vienna, Faculty of Physics, Vienna Center for Quantum Science and Technology (VCQ), Boltzmanngasse 5, Vienna A-1090, Austria}

\author{Shang Yu}
\affiliation{Department of Physics, Imperial College London, Prince Consort Road, London, SW7 2AZ, UK}
 \affiliation{Centre for Quantum Engineering, Science and Technology (QuEST),
Imperial College London, Prince Consort Rd, London, SW7 2AZ, United Kingdom}

\author{Yazeed K. Alwehaibi}
\affiliation{Department of Physics, Imperial College London, Prince Consort Road, London, SW7 2AZ, UK}

\author{Gerard J. Machado}
\affiliation{Department of Physics, Imperial College London, Prince Consort Road, London, SW7 2AZ, UK}
\affiliation{Clarendon Laboratory, University of Oxford, Parks Road, Oxford OX1 3PU, UK}

\author{Dayne Marcus Lopena}
\affiliation{Department of Physics, Imperial College London, Prince Consort Road, London, SW7 2AZ, UK}
 \affiliation{Centre for Quantum Engineering, Science and Technology (QuEST),
Imperial College London, Prince Consort Rd, London, SW7 2AZ, United Kingdom}

\author{Lijian Zhang}
\affiliation{College of Engineering and Applied Sciences, Nanjing University, 163 Xianlin Road, Nanjing 210093, China}

\author{M. S. Kim}
\email[]{m.kim@imperial.ac.uk}
\affiliation{Department of Physics, Imperial College London, Prince Consort Road, London, SW7 2AZ, UK}
 \affiliation{Centre for Quantum Engineering, Science and Technology (QuEST),
Imperial College London, Prince Consort Rd, London, SW7 2AZ, United Kingdom}

\author{Aonan Zhang}
\email[]{aonan.zhang@physics.ox.ac.uk}
\affiliation{Department of Physics, Imperial College London, Prince Consort Road, London, SW7 2AZ, UK}
\affiliation{Clarendon Laboratory, University of Oxford, Parks Road, Oxford OX1 3PU, UK}

\author{Ian A. Walmsley}
\email[]{ian.walmsley@physics.ox.ac.uk}
\affiliation{Department of Physics, Imperial College London, Prince Consort Road, London, SW7 2AZ, UK}
 \affiliation{Centre for Quantum Engineering, Science and Technology (QuEST),
Imperial College London, Prince Consort Rd, London, SW7 2AZ, United Kingdom}
\affiliation{Clarendon Laboratory, University of Oxford, Parks Road, Oxford OX1 3PU, UK}

\author{Raj B. Patel}
\email[]{raj.patel1@imperial.ac.uk}
\affiliation{Department of Physics, Imperial College London, Prince Consort Road, London, SW7 2AZ, UK}
 \affiliation{Centre for Quantum Engineering, Science and Technology (QuEST),
Imperial College London, Prince Consort Rd, London, SW7 2AZ, United Kingdom}


\begin{abstract}
Quantum machine learning seeks a computational advantage in data processing by evaluating functions of quantum states, such as their similarity, that can be classically intractable to compute. For quantum advantage to be possible, however, it is essential to bypass costly characterisation of individual data instances in favour of efficient, direct similarity evaluation. Here we demonstrate a sample-optimal, hardware-efficient protocol for estimating quantum similarity -- the state overlap -- using bosonic quantum interference. The sample complexity of this approach is independent of the system dimension and is information-theoretically optimal up to a constant factor. Experimentally, we implement the scheme on \emph{Prakash-1}, a quantum computing platform based on a fully programmable integrated photonic processor. By preparing and interfering qudit states on the chip to directly extract their overlap, we demonstrate classification and online learning of quantum data with high accuracy in realistic noisy experiments. Our results establish joint overlap measurements as a scalable pathway to efficient quantum data analysis and a practical building block for network-integrated quantum machine learning.
\end{abstract}

\maketitle

\section{Introduction}

Assessing the similarity between objects is an instinctive process fundamental to human cognition. Mirroring this, the computation of similarity drives modern machine learning~\cite{bishop2006pattern, hastieElementsStatisticalLearning2009}, and more recently, guides attention in transformer-based large language models through vector inner products~\cite{vaswani2017attention,Brown2020}.
Quantum machine learning (QML) extends this concept to an exponentially large Hilbert space, where the overlap between two quantum states---the squared inner product between state vectors---plays the same organising role, serving, for instance, as the kernel function in quantum classifiers or the cost function in training quantum models (Fig.~\ref{fig: conceptual})~\cite{Biamonte_2017_QML,Havl_ek_2019_supervised_learning,Rebentrost2014_quantum_svm_big_data,Jerbi_2023_QML_beyond_kernel,Cerezo_2021_VQA,Beer2020_training_deep_qnn}. Evaluating the quantum state overlap is a BQP-complete problem~\cite{rethinasamyEstimatingDistinguishabilityMeasures2023}, which forms the computational foundation for potential quantum advantage in machine learning~\cite{Liu_2021_rigorous_speedup, Huang_2021_power_of_data}. 

Unlocking this quantum advantage relies on efficient measurement of similarities between quantum data at scale. Quantum mechanics enables direct processing of joint information between two unknown states, departing from the classical intuition that comparing objects first requires characterising them individually. The latter is challenging in emerging quantum networks, where data are natively generated and transmitted as quantum states, such as the outputs of quantum simulators, processors, and sensors (Fig.~\ref{fig: conceptual}a). While quantum kernels and variational learners have recently been implemented across various platforms~\cite{Peters_2021_ml_high_dim,Abbas2021_power_of_qnn,Pan2023_deep_qnn_superconducting,Yin_2025_experimental_kernel,Hoch_2025}, scaling up these protocols remains constrained by the unfavourable cost of overlap measurements. Protocols based on tomographic reconstruction or distributed measurements of individual data instances generally require a number of samples that grows exponentially with system size~\cite{Haah-2016-optimal_tomo,Anshu-2022-distributed_quantum_inner_product, Zhan_2025_experimental_benchmarking}. Alternative strategies attempt to bypass this complexity by encoding classical data into quantum circuits and concatenating one unitary with the inverse of another, but the experimental overhead of implementing deep unitaries becomes prohibitive at large scale~\cite{Shende_2005}. Moreover, circuit-encoding methods require \textit{a priori} knowledge of the classical description of the data, rendering them inapplicable to quantum-native data increasingly targeted by QML~\cite{Cerezo_2022_challenges_qml}. Direct sample- and hardware-efficient similarity evaluation becomes essential for scalable QML.

\begin{figure*}
    \includegraphics[width=\textwidth]{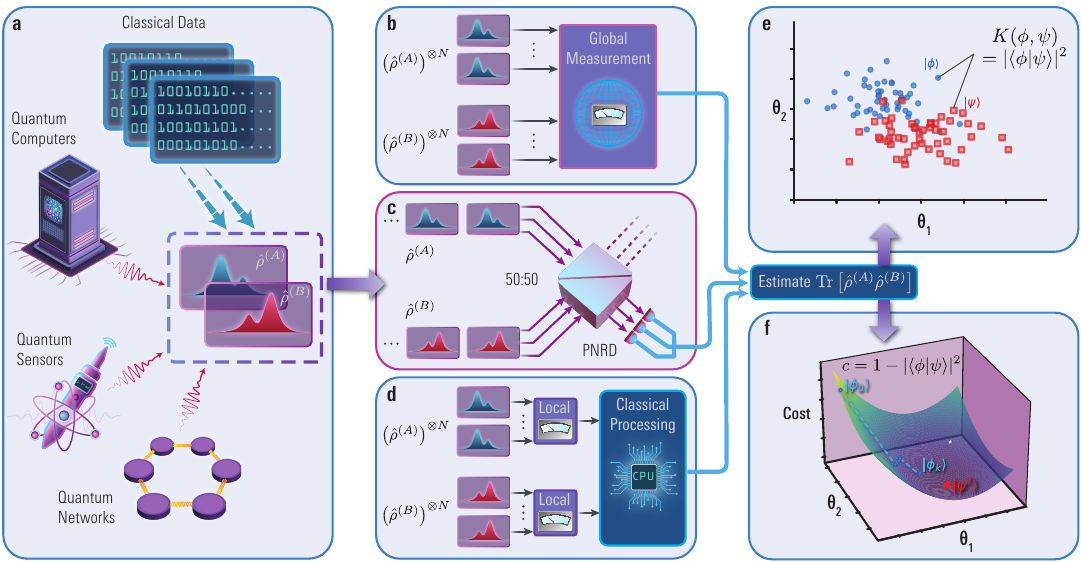}
    \caption{\label{fig: conceptual} Quantum machine learning (QML) with similarity measurements. (a) QML can process quantum states that encode classical data, though it also shows potential in learning native quantum data, such as the output states from fault-tolerant quantum computers, analogue quantum simulators, quantum sensors or information passed through large-scale quantum networks. A hierarchy of measurements exists to measure the similarity between two unknown quantum states via: (b) Global measurement on $N$ copies of each state; (c) Sequential joint measurement using multi-mode bosonic interference on single copy of each state, derived and demonstrated in this work; (d) Distributed local measurements on $N$ copies of each state separately followed by classical post-processing, an example of which we derive for CV states. Within this hierarchy, the interference-based joint measurement combines optimal sample complexity with experimental accessibility. For application in QML, we demonstrate the measurement as evaluation of (e) kernel functions for quantum data classification and (f) cost function for online learning. PNRD: photon-number-resolving detector.}
\end{figure*}

In this study, we establish a sample-optimal and scalable overlap estimation scheme for arbitrary photonic states based on bosonic quantum interference and photon-number parity readout. Unlike recent entanglement-assisted protocols, designed to learn individual quantum states or channels~\cite{Huang-2022-quantum_advantage_learning_from_experiments, Oh_2024_random_displacement_channel, Liu_2025_advantage_scalable, coroi2025exponentialadvantage}, our work exploits entangled measurements to directly learn the joint properties between two states. In addition to its experimental simplicity, the scheme provably achieves the optimal $O(\epsilon^{-2})$ sample complexity for an $\epsilon$-additive error precision. The scaling is optimal up to a constant factor across all possible measurement strategies, even when multi-copy global measurements are considered, whereas any distributed local measurement likely incurs an exponential cost. Crucially, the sample complexity is independent of the mode number, photon number or energy, thereby permitting scalable processing of high-dimensional quantum data.

Photons are ideal carriers of quantum data in future quantum networks, offering low decoherence, room-temperature operation, and compatibility with telecom infrastructure, along with the rich capability of encoding both discrete-variable (DV) states and continuous-variable (CV) states. In photonic systems, similarity measurements have been previously derived for single-mode Gaussian states by linear-optical interference and Wigner function estimation~\cite{Kim-2002-experimentally}, and later for DV states using two-photon interference as ancilla-free SWAP tests~\cite{GarciaEscartin2013_swap_test_HOM, Patel_2016, Zhang_2021_np_problems, Zhan_2025_experimental_benchmarking} with recent generalisation to CV states~\cite{volkoff-2022-AncillafreeContinuousvariableSWAP}. Building on these foundations, we show that bosonic interference provides a universal and hardware-native solution for measuring similarities between arbitrary multi-photon multi-mode states with optimal information-theoretic complexity, highlighting a hierarchy in information extraction where joint measurements may have an exponential advantage over distributed strategies.

Experimentally, we validate the measurement scheme and its application to QML on \emph{Prakash-1}, a photonic quantum computing platform with a fully programmable photonic integrated circuit (PIC) at its core, which implements a state-of-the-art compact rectangular decomposition of universal linear optics~\cite{bell-2021-further_compact}. We prepare phase-encoded multi-mode qudit states and directly measure their overlap from the photon statistics. With experimentally estimated overlaps, we perform two QML tasks: classification of quantum datasets with over 90\% accuracy and online learning of unknown target data with 98.3\% median fidelity. Our results establish the joint overlap measurement via quantum interference as a hardware-efficient, sample-optimal enabler for scalable quantum data processing. 

\section{Overlap measurement scheme}

Photonic quantum data can be DV states, such as $M$-mode $K$-photon Fock states with Hilbert space dimension $d={M+K-1 \choose K}$, or CV states with theoretically infinite Hilbert space dimension that is only bounded by energy, including Gaussian states or non-Gaussian states such as Gottesman-Kitaev-Preskill (GKP) states~\cite{gottesmanEncodingQubitOscillator2001}. We consider two general $M$-mode states $\hat{\rho}^{(A)}$ and $\hat{\rho}^{(B)}$ with their trace overlap given by $\Tr\left[\hat{\rho}^{(A)} \hat{\rho}^{(B)}\right]$, which reduces to the squared overlap $\left|\langle \psi^{(A)}|\psi^{(B)}\rangle\right|^2$ for pure states $\hat{\rho}^{(A/B)}=|\psi^{(A/B)}\rangle\langle\psi^{(A/B)}|$.

Given $N$ copies of each state, we consider three types of measurement schemes for estimating the overlap: 
\begin{enumerate}
    \item Global measurements on all $2N$ state copies, $\left(\hat{\rho}^{(A)}\otimes \hat{\rho}^{(B)}\right)^{\otimes N}$;
    \item Joint measurement via bosonic interference on only one copy of each state at a time, $\left(\hat{\rho}^{(A)}\otimes \hat{\rho}^{(B)}\right)$;
    \item Distributed measurements that perform local measurements on $\left(\hat{\rho}^{(A)}\right)^{\otimes N}$ and $\left(\hat{\rho}^{(B)}\right)^{\otimes N}$ respectively followed by post-processing of the outcomes. 
\end{enumerate}
The three measurement strategies are visualised in Fig.~\ref{fig: conceptual}b-d. Both global and distributed schemes can allow arbitrary, adaptive, multi-round measurements.

Any $M$-mode photonic state can be described by $\hat{\rho} = \frac{1}{\pi^M} \int_{\mathds{C}^M} d^{2M}\bm{\alpha} \chi(\bm{\alpha}) \hat{D}(-\bm{\alpha})$, where $\chi$ is its characteristic function and  $\hat{D}$ is the $M$-mode displacement operator. The overlap between two states, $\hat{\rho}^{(A/B)}$, is given by,
\begin{equation}\label{eqn: overlap definition}
    \Tr\left[\hat{\rho}^{(A)} \hat{\rho}^{(B)}\right]
    =
    \frac{1}{\pi^M} \int_{\mathds{C}^M} d^{2M}\bm{\alpha} \chi^{(A)}(\bm{\alpha}) \chi^{(B)} (-\bm{\alpha}).
\end{equation}

The two states, originating in registers labelled by $A/B$, are interfered on balanced beamsplitters (BSs). The BSs are described by a unitary operator $\hat{U}_{\textup{BS}}$ that combines the modes pairwise between the two registers: $\hat{U}_{\textup{BS}}^\dagger \hat{a}^{(A)}_{i}\hat{U}_{\textup{BS}} = \frac{1}{\sqrt{2}}\left(\hat{a}^{(A)}_i + \hat{a}^{(B)}_i \right), 
\hat{U}_{\textup{BS}}^\dagger \hat{a}^{(B)}_{i}\hat{U}_{\textup{BS}} = \frac{1}{\sqrt{2}}\left(\hat{a}^{(A)}_i - \hat{a}^{(B)}_i \right)$, where the $i$-th mode in a register is labelled by subscript $i=1,\hdots,M$. 
We then measure the photon-number parity, $\hat{\Pi} = (-1)^{\sum_{i=1}^M\hat{n}_i^{(B)}}$, on output register $B$, where $\hat{n}_i^{(B)}$ is the photon number operator in the $i$-th mode. This gives a direct estimate of the two-state overlap:
\begin{equation}\label{eqn: overlap measurement by parity}
    \Tr\left[\hat{\rho}^{(A)} \hat{\rho}^{(B)}\right] 
    = 
    \Tr_{(B)}\left[
    \Tr_{(A)} \left(\hat{U}_{\textup{BS}}^\dagger \hat{\rho}^{(A)} \hat{\rho}^{(B)} \hat{U}_{\textup{BS}} \right) \hat\Pi \right].
\end{equation}
The detailed derivation of Equation~\ref{eqn: overlap measurement by parity} is provided in Methods and Supplementary Information. While Equation~\ref{eqn: overlap measurement by parity} holds true for any general mixed states, we note that the trace overlap is a good measure of similarity when at least one of two states is pure.

Operationally, we can measure the output state using photon-number-resolving detectors (PNRDs). We sample $N$ shots and record the number of even- and odd-total photon number events, from which we construct an unbiased overlap estimator. Using Hoeffding's bound, we can prove its sample complexity in Theorem~\ref{thm: sample complexity joint overlap estimation} below. 

\begin{theorem}\label{thm: sample complexity joint overlap estimation}
    Given $N$ copies of $\hat{\rho}^{(A)} \otimes \hat{\rho}^{(B)}$ and use of balanced BSs and PNRDs, the overlap $\Tr\left[\hat{\rho}^{(A)} \hat{\rho}^{(B)}\right]$ can be estimated to within additive error $\epsilon$ with success probability of at least $1-\delta$ for sample complexity $N=O\left(\epsilon^{-2}\log(2\delta^{-1})\right)$. 
\end{theorem}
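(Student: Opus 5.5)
The plan is to reduce the statement to the concentration of a bounded i.i.d. sample mean around the overlap, using Equation~\ref{eqn: overlap measurement by parity} as the identity that makes the parity readout unbiased. Fix one copy of $\hat{\rho}^{(A)}\otimes\hat{\rho}^{(B)}$. Send it through the balanced beamsplitter network $\hat{U}_{\textup{BS}}$, measure all $M$ modes of output register $B$ with PNRDs, and record
\[
X = (-1)^{\sum_{i=1}^M n_i^{(B)}} \in \{-1,+1\},
\]
where $n_i^{(B)}$ are the detected photon numbers. Register $A$ is discarded, i.e.\ traced out.

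\textbf{Step 1: the single-shot outcome is unbiased.} Photon-number-resolving detection is a projective measurement in the Fock basis, and $\hat\Pi$ is diagonal in that basis with eigenvalue equal to the parity of the total photon number. Hence
\[
\mathbb{E}[X]=\Tr_{(B)}\!\left[\Tr_{(A)}\!\left(\hat{U}_{\textup{BS}}^\dagger(\hat{\rho}^{(A)}\otimes\hat{\rho}^{(B)})\hat{U}_{\textup{BS}}\right)\hat\Pi\right],
\]
which equals $\Tr[\hat{\rho}^{(A)}\hat{\rho}^{(B)}]$ by Equation~\ref{eqn: overlap measurement by parity}. This identity is taken as given from the derivation deferred to Methods. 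As a sanity check on that derivation, conjugating the displacement expansion by the beamsplitter maps $\hat D_A(\bm\alpha)\otimes\hat D_B(\bm\beta)$ to displacements with arguments $(\bm\alpha\pm\bm\beta)/\sqrt2$. The parity operator has characteristic-function representation $\Tr[\hat\Pi \hat D(\bm\gamma)]\propto\delta$-like, namely $\pi^M 2^{-M}$ times the displacement kernel. Together these collapse the double integral to Equation~\ref{eqn: overlap definition}. I will not redo this calculation.

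\textbf{Step 2: concentration.} Repeating the procedure on the $N$ independent copies gives i.i.d.\ $X_1,\dots,X_N\in[-1,1]$. Take the estimator
\[
\hat O=\frac1N\sum_k X_k=\frac{N_{\rm even}-N_{\rm odd}}{N},
\]
which is unbiased by Step 1. Hoeffding's inequality for variables of range $2$ gives
\[
\Pr\!\left(|\hat O-\Tr[\hat{\rho}^{(A)}\hat{\rho}^{(B)}]|\ge\epsilon\right)\le 2\exp(-N\epsilon^2/2).
\]
Setting the right-hand side to at most $\delta$ yields $N\ge 2\epsilon^{-2}\log(2\delta^{-1})$, which is $O(\epsilon^{-2}\log(2\delta^{-1}))$ as claimed. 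This bound is independent of $M$, of the photon number, and of the energy, because only the boundedness of $X$ enters. Optionally one can clip $\hat O$ to $[0,1]$, which can only reduce the error.

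\textbf{Main obstacle.} The only nontrivial ingredient is Step 1, specifically that Equation~\ref{eqn: overlap measurement by parity} holds for arbitrary, possibly infinite-dimensional mixed states. In particular, exchanging trace and integral in the characteristic-function manipulation must be justified. I would handle this by first proving the identity for finite-rank trace-class operators, where all integrals converge absolutely. I would then extend it by continuity: both sides are continuous in trace norm, since $\hat\Pi$ is bounded and the overlap satisfies $|\Tr[\hat\rho\hat\sigma]|\le\|\hat\rho\|_1\|\hat\sigma\|_\infty$. Given that, Step 2 is routine.
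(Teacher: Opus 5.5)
Your proof is correct and follows the paper's argument essentially line by line: the $\pm1$ parity outcome is unbiased by Equation~\ref{eqn: overlap measurement by parity}, and Hoeffding's inequality with range $2$ gives $2e^{-N\epsilon^2/2}\le\delta$, i.e.\ $N=2\epsilon^{-2}\log(2\delta^{-1})$. One slip in your side remark, which does not affect the proof: the characteristic function of the parity operator is not $\delta$-like but formally the constant $\Tr[\hat\Pi\hat D(\bm\gamma)]=2^{-M}$ (it is the Wigner function of $\hat\Pi$ that is a $\delta$ at the origin), and this constancy is exactly what turns $\Tr[\hat\rho^{(\mathrm{out})}\hat\Pi]$ into $(2\pi)^{-M}\int d^{2M}\bm\gamma\,\chi^{(\mathrm{out})}(\bm\gamma)$.
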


The sample complexity simplifies to $N=O(\epsilon^{-2})$ for a constant success probability $1-\delta>1/2$. The full proof is provided in the Supplementary Information. Importantly, the sample complexity is independent of the mode number, photon number, or energy, and requires only a single copy of each state per measurement. In the next section, we show that this performance is not only efficient, but also optimal within the full hierarchy of measurement strategies. 

\section{Sample complexity optimality}

\begin{table*}[htbp]
\caption{\label{tab: complexity comparison} Comparison of sample complexity for joint overlap estimation and distributed overlap estimation of discrete-variable (DV) states and continuous-variable (CV) states, respectively, to additive error $\epsilon$ and success probability $1-\delta$. We summarise the information-theoretic lower bounds for comparison with the schemes derived in this work. Symbol $d$ denotes dimension number of DV states, $M$ mode number of photonic states. See main text and Methods for details.}
\renewcommand{\arraystretch}{1.5}
\begin{ruledtabular}
\begin{tabular}{cccc} 
 & & 
 \begin{tabular}[c]{@{}c@{}} Joint overlap estimation \\ on $(\hat{\rho}^{(A)}\otimes \hat{\rho}^{(B)})^{\otimes N}$ \end{tabular}
 & 
 \begin{tabular}[c]{@{}c@{}} Distributed overlap estimation \\ on  $(\hat{\rho}^{(A)})^{\otimes N}$, $(\hat{\rho}^{(B)})^{\otimes N}$ \end{tabular}
 \\ \hline
 \multirow{2}{*}{Lower bound} & DV states
 & $\Omega\left(\epsilon^{-2} (1-2\delta)^2/4\right)$ \cite{Anshu-2022-distributed_quantum_inner_product} 
 & $\Omega\left(\max(\epsilon^{-2}, \sqrt{d}\epsilon^{-1})\right)$~\cite{Anshu-2022-distributed_quantum_inner_product} 
 \\ 
  & CV states 
 & $\Omega\left(\epsilon^{-2} (1-2\delta)^2/4\right)$ 
 \footnote{Extended in our work to CV states from the proof of \cite{Anshu-2022-distributed_quantum_inner_product}.}
 & 
 Lower bound unknown 
 \\
 \hline 
 Our work & Any bosonic state & $O\left(\epsilon^{-2}\log(2\delta^{-1})\right)$
 & $\tilde{O}\left(\epsilon^{-4}(e\kappa)^{4M}/M^2\right)$ for self-reflective states 
 \footnote{Logarithmic terms ignored in big-$\tilde{O}$ notation for simplicity and $\kappa$ is a mode-extensive energy bound.}
\\ 
\end{tabular}
\end{ruledtabular}
\end{table*}

The significance of the $O(\epsilon^{-2})$ scaling is best understood by comparing it to the fundamental limit and other measurement strategies of overlap estimation. Using Helstrom's bound~\cite{helstrom_1969}, we can prove the following theorem. 
\begin{theorem}\label{thm: optimality of joint overlap estimation}
    Given the $N$-copy state, $\left(\rho^{(A)} \otimes \rho^{(B)}\right)^{\otimes N}$, if an algorithm estimates $\Tr\left[\rho^{(A)} \rho^{(B)}\right]$ to within additive error $\epsilon$ with success probability of at least $1-\delta$, then $N=\Omega\left(\epsilon^{-2} (1-2\delta)^2/4\right)$. 
\end{theorem}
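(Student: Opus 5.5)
The plan is a reduction from estimation to binary hypothesis testing, followed by Helstrom's bound. Fix two pairs of states whose overlaps differ by more than $2\epsilon$ but whose $N$-copy product states are nearly indistinguishable. The simplest choice, which also works in the CV setting because it lives in a two-dimensional subspace of any bosonic Hilbert space, is the following. Let $\rho^{(A)}=|\psi\rangle\langle\psi|$ be fixed in both hypotheses. Let $\rho^{(B)}$ be either $|\phi_0\rangle\langle\phi_0|$ or $|\phi_1\rangle\langle\phi_1|$, with
\begin{equation*}
\left|\langle\psi|\phi_0\rangle\right|^2=\tfrac12+\epsilon', \qquad \left|\langle\psi|\phi_1\rangle\right|^2=\tfrac12-\epsilon',
\end{equation*}
where $\epsilon'$ is slightly larger than $\epsilon$. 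For a concrete realisation, pick two orthonormal Fock or coherent-superposition states $|0\rangle,|1\rangle$ in the $M$-mode space, set $|\psi\rangle=|0\rangle$, and set $|\phi_j\rangle=\cos\theta_j|0\rangle+\sin\theta_j|1\rangle$. An algorithm that estimates the overlap to additive error $\epsilon$ with probability at least $1-\delta$ then decides which hypothesis holds: it outputs $j=0$ if and only if the estimate exceeds $1/2$. This test errs with probability at most $\delta$ under either hypothesis.

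Next, apply Helstrom's bound. For equal priors, any measurement on the $2N$ copies, including adaptive or global ones, which can always be absorbed into a single POVM, has average success probability at most
\begin{equation*}
\tfrac12+\tfrac14\bigl\| \sigma_0^{\otimes N}-\sigma_1^{\otimes N}\bigr\|_1, \qquad \sigma_j=\rho^{(A)}\otimes|\phi_j\rangle\langle\phi_j|.
\end{equation*}
Combined with the previous step, this gives $1-2\delta\le \tfrac12\|\sigma_0^{\otimes N}-\sigma_1^{\otimes N}\|_1$. Since all states are pure, the trace distance equals $\sqrt{1-F^N}$, where
\begin{equation*}
F=\left|\langle\phi_0|\phi_1\rangle\right|^2=\cos^2(\theta_0-\theta_1)
\end{equation*}
is multiplicative over tensor powers. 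The fixed factor $\rho^{(A)}$ contributes fidelity $1$.

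The remaining computation is elementary. Write $\cos^2\theta_j=\tfrac12\pm\epsilon'$. Near $\theta=\pi/4$ we have $\tfrac{d}{d\theta}\cos^2\theta=-\sin 2\theta\approx-1$, so $|\theta_0-\theta_1|\approx 2\epsilon'$ and $1-F\approx 4\epsilon'^2$. Then $1-F^N\le N(1-F)\lesssim 4N\epsilon'^2$. Substituting into $(1-2\delta)^2\le 1-F^N$ yields $N\ge(1-2\delta)^2/(4\epsilon'^2)$, and letting $\epsilon'\downarrow\epsilon$ gives $N=\Omega\left(\epsilon^{-2}(1-2\delta)^2/4\right)$.

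The main obstacle is bookkeeping the constants precisely. One needs an exact bound $1-F\le 4\epsilon'^2(1+o(1))$, obtained either from $\sin^2(\theta_0-\theta_1)$ via the exact $\arccos$ relation or from a cleaner parametrisation chosen so that the overlaps differ by exactly $2\epsilon'$. One must also handle the strict-versus-non-strict inequality at the threshold $1/2$. The CV extension needs no separate argument, since the construction embeds in any two-dimensional subspace; this is the point that should be stated explicitly.
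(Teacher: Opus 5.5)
Your proposal is correct and follows essentially the same route as the paper. Both arguments use a two-dimensional embedding with overlaps $\tfrac12\pm\epsilon$; the paper varies $\rho^{(A)}$ and fixes $\rho^{(B)}=|\phi_0\rangle\langle\phi_0|$, whereas you do the reverse. Both then apply Helstrom's bound, the pure-state trace-distance formula and Bernoulli's inequality, and arrive at the same constant, $N\ge\epsilon^{-2}(\tfrac12-\delta)^2$.

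The constant bookkeeping you flag is not actually an obstacle. With $\cos\theta_0=\sin\theta_1=\sqrt{1/2+\epsilon'}$ one gets exactly $F=1-4\epsilon'^2$, as in the paper. Your $\epsilon'\downarrow\epsilon$ device also cleanly handles the tie at the threshold $\tfrac12$, which the paper leaves implicit.
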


The proof of Theorem~\ref{thm: optimality of joint overlap estimation} is first given in Ref.~\cite{Anshu-2022-distributed_quantum_inner_product} for the SWAP test in DV systems, which can be extended to arbitrary states, including CV states, with full proof provided in Supplementary Information. If given a constant success probability, such as $1-\delta = 2/3$, Theorems~\ref{thm: sample complexity joint overlap estimation} and \ref{thm: optimality of joint overlap estimation} provide a tight bound of $N=\Theta(\epsilon^{-2})$, proving that the interference-based measurement is already optimal up to a constant scaling, even when considering experimentally more complex global measurements.

We also motivate the optimality of our scheme by considering the distributed overlap estimation problem~\cite{Anshu-2022-distributed_quantum_inner_product} and argue that joint measurements likely have exponential advantage, as summarised in Table~\ref{tab: complexity comparison}. For the distributed problem, Alice and Bob are given $\left(\hat{\rho}^{(A/B)}\right)^{\otimes N}$ respectively and are required to estimate their overlap by only local quantum operations and classical communication channels, but with adaptive, multi-round measurements allowed.

For DV states, any distributed approach scales exponentially with system size. Full tomographic reconstruction of at least one state requires a sample complexity of $O(d)$~\cite{Haah-2016-optimal_tomo}, and distributed overlap estimation requires a sample complexity rigorously lower-bounded by $N=\Omega(\max(\epsilon^{-2}, \sqrt{d} \epsilon^{-1}))$~\cite{Anshu-2022-distributed_quantum_inner_product}. For an $M$-mode, $K$-photon state with Hilbert space dimension $d=\binom{M+K-1}{K}$, both complexities scale exponentially with the photon number and mode number.

For CV states with infinite density matrix dimension, it is often practical to assume some mode-extensive energy constraint, under which tomography is also inefficient with sample complexity being exponential in the mode number~\cite{mele2024learningquantumstatescontinuous}. For distributed CV overlap estimation, no rigorous lower bound exists to the best of our knowledge. However, since energy bounded CV states can be approximated by truncated DV states~\cite{mele2024learningquantumstatescontinuous}, we conjecture that the CV problem suffers from a similar exponential lower bound as its DV counterpart.

To explicitly illustrate the curse of dimensionality, we consider a special class of self-reflective states, which include many experimentally interesting CV states such as coherent states, squeezed vacuum states, cat states and GKP states~\cite{Wu-2024-efficient_learning}. For these states, distributed learning of \textit{individual} properties is `super-efficient': Only constant sample complexity is required to estimate their characteristic functions for an exponential number of points in phase space~\cite{Wu-2024-efficient_learning, coroi2025exponentialadvantage}, which can then be used to estimate the overlap in Equation~\ref{eqn: overlap definition} by Monte-Carlo integration.

However, the overall sample complexity is still inefficient. For an $M$-mode CV state, we also assume a mode-extensive energy constraint $\kappa$, such that the phase-space integral in Equation~\ref{eqn: overlap definition} is negligible for $|\bm{\alpha}| \geq \sqrt{\kappa M}$, thus defining a $2M$-dimensional hypersphere in phase space. Even though we can efficiently estimate the characteristic functions of two states, a constant precision in overlap estimation requires the precision of $\chi^{(A/B)}(\bm{\alpha}_i)$ at each point inside the hypersphere to scale inversely with the hypersphere volume, resulting in an overall sample complexity that remains exponential in $M$. We state the result informally in Theorem~\ref{theorem: informal distributed}.

\begin{theorem}\label{theorem: informal distributed}
    \textup{\textbf{(Informal.)}} 
    Given $N$ copies of two $M$-mode self-reflected states, $\hat{\rho}^{(A)}, \hat{\rho}^{(B)}$, whose characteristic functions are limited within a phase-space hypersphere with radius $|\bm{\alpha}| = \sqrt{\kappa M}$, the distributed overlap estimation problem can be solved with additive error $\epsilon$ and success probability of at least $2/3$ for sample complexity $N=\tilde{O}\left(\epsilon^{-4}(e\kappa)^{4M}/M^2\right)$. 
\end{theorem}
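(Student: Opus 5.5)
The plan is a Monte-Carlo estimator for the integral in Equation~\ref{eqn: overlap definition}, with the characteristic-function values supplied by the self-reflective learning protocol of Refs.~\cite{Wu-2024-efficient_learning, coroi2025exponentialadvantage}. Under the stated energy assumption, the integral restricted to the hyperball $\mathcal{B}=\{\bm{\alpha}\in\mathds{C}^M : |\bm{\alpha}|\le\sqrt{\kappa M}\}$ equals the overlap up to a negligible truncation error, which I absorb into $\epsilon/3$. Denote by $V_M=\pi^M(\kappa M)^M/M!$ the volume of this $2M$-dimensional ball. Then
\begin{equation*}
\Tr\left[\hat\rho^{(A)}\hat\rho^{(B)}\right]\approx \frac{V_M}{\pi^M}\,\mathbb{E}_{\bm{\alpha}\sim\mathrm{Unif}(\mathcal{B})}\left[\chi^{(A)}(\bm{\alpha})\chi^{(B)}(-\bm{\alpha})\right],
\end{equation*}
where the prefactor is $V_M/\pi^M=(\kappa M)^M/M!\approx(e\kappa)^M/\sqrt{2\pi M}$ by Stirling's formula. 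Alice and Bob share (through the classical channel) $L$ uniformly random points $\bm{\alpha}_1,\dots,\bm{\alpha}_L\in\mathcal{B}$. They then estimate $\chi^{(A)}(\bm{\alpha}_\ell)$ and $\chi^{(B)}(-\bm{\alpha}_\ell)$ locally, and output $\frac{V_M}{\pi^M L}\sum_\ell \hat\chi^{(A)}_\ell\hat\chi^{(B)}_\ell$.

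The error splits into two terms.

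\emph{Sampling error.} Since $|\chi|\le 1$, each summand is bounded by $1$. Hoeffding's inequality therefore gives sampling error at most $\epsilon/3$ once $L=O\!\left((V_M/\pi^M)^2\epsilon^{-2}\right)=O\!\left(\epsilon^{-2}(e\kappa)^{2M}/M\right)$.

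\emph{Estimation error.} Write $\hat\chi=\chi+\eta$ with $|\eta|\le\eta_0$ uniformly over all $L$ points. Then $|\hat\chi^{(A)}\hat\chi^{(B)}-\chi^{(A)}\chi^{(B)}|\le 2\eta_0+\eta_0^2$. After multiplying by the prefactor, this contributes at most $\epsilon/3$ if $\eta_0=O\!\left(\epsilon\, M!/(\kappa M)^M\right)=O\!\left(\epsilon\sqrt{M}(e\kappa)^{-M}\right)$.

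For the learning step, I invoke the guarantee of Refs.~\cite{Wu-2024-efficient_learning, coroi2025exponentialadvantage} for self-reflective states. It estimates the characteristic function at $L$ arbitrary phase-space points simultaneously to precision $\eta_0$ with failure probability $\delta'$, using $O\!\left(\eta_0^{-4}\log(L/\delta')\right)$ copies. The quartic dependence comes from the Bell-type measurement on two copies yielding $|\chi|^2$-type estimates, with the sign or phase fixed by self-reflectivity, so taking square roots costs a squared precision. The $\log L$ is only logarithmic in the exponential number of points, which is the "super-efficient" feature. Substituting gives
\begin{equation*}
N=O\!\left(\epsilon^{-4}M^{-2}(e\kappa)^{4M}\log(L/\delta')\right).
\end{equation*}
Since $\log L=O\!\left(M\log(e\kappa)+\log\epsilon^{-1}\right)$, this is $\tilde O\!\left(\epsilon^{-4}(e\kappa)^{4M}/M^2\right)$. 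A union bound with $\delta'$ and the Hoeffding failure probability, each set to $1/6$, yields success probability at least $2/3$. The $L$ random points need no extra quantum samples, since all characteristic-function estimates reuse the same copies.

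The main obstacle is pinning down the exact precision-versus-copies scaling of the self-reflective learning protocol, in particular the $\eta_0^{-4}$ rather than $\eta_0^{-2}$ dependence. This requires checking that a uniform $\eta_0$-error on $\chi$ itself, not just on $|\chi|^2$, follows from the cited results. The delicate case is points where $|\chi|$ is small, since there the square root amplifies errors: an error $\eta_0^2$ in $\chi^2$ only guarantees an error $\eta_0$ in $\chi$. I would first try to use the cited results as a black box and, if needed, rederive this square-root error propagation directly. The bookkeeping of the Stirling constant that produces the $M^{-2}$ factor is the second point to verify.
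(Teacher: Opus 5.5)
Your proposal is correct and follows essentially the same route as the paper. Both use Monte-Carlo integration of Equation~\ref{eqn: overlap definition} over the hypersphere of volume $|\mathcal{A}|=(\pi\kappa M)^M/M!$, with Theorem~1 of Ref.~\cite{Wu-2024-efficient_learning} used as a black box at per-point precision $\tilde\epsilon\sim\epsilon\,\pi^M/|\mathcal{A}|$, which gives $N=O(\tilde\epsilon^{-4}\log L)=\tilde{O}(\epsilon^{-4}(e\kappa)^{4M}/M^2)$. That result already bounds the additive error on $\chi$ itself, so the square-root issue you flag need not be rederived.

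Your deviations are improvements. The Hoeffding bound using $|\chi^{(A)}\chi^{(B)}|\le 1$ replaces the paper's Monte-Carlo error built from the estimated sample variance $\tilde\sigma_L$, which is why the paper's formal statement only holds approximately ($\gtrsim\epsilon$). Your $2\eta_0+\eta_0^2$ bound avoids the paper's implicit assumption $|\tilde\chi|\le 1$. One small fix: your union bound has three failure events (Alice's learning, Bob's learning, Hoeffding), so set each to $1/9$ rather than $1/6$.
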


For notational simplicity, we drop logarithmic terms in the big-$\tilde{O}$ notation from the already exponential complexity. The formal statement and full proof are given in the Supplementary Information.

\section{Experimental implementation}

\begin{figure*}[!t]
    \centering
    \includegraphics[width=0.98\textwidth]{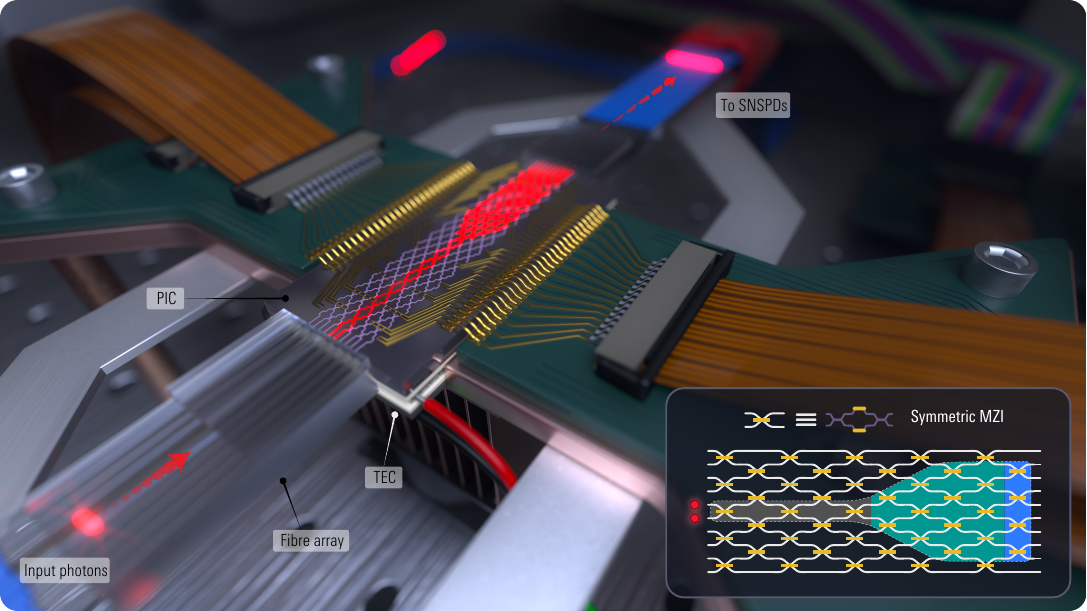}
    \caption{An illustration of the fully-programmable $\text{Si}_3\text{N}_4$ integrated photonic processor at the heart of the quantum computing platform \emph{Prakash-1}. A photonic integrated circuit (PIC), whose temperature is maintained by a thermo-electric cooling (TEC) unit, employs a square mesh of symmetric Mach-Zehnder interferometers (MZIs), with heating elements acting as thermo-optical phaseshifters in each arm of the MZIs. Collectively, the MZIs plus ten stand-alone on-chip phaseshifters (not used in this work and not shown) can implement an arbitrary $10\times10$ unitary transformation according to the Bell-scheme (see inset). The dashed line encloses the MZIs that are tuned in the experiment. The grey, teal, and blue areas indicate the MZIs used to route the photon pair, prepare the qudits, and perform the multi-mode interference, respectively. The output eight modes are fibre-coupled and routed to superconducting nanowire single-photon detectors (SNSPDs).
    }
    \label{fig:chip}
\end{figure*}

We experimentally demonstrate the joint overlap estimation scheme with application to QML routines. Our experimental set-up relies on spontaneous parametric downconversion (SPDC) for preparing an indistinguishable photon pair at $1550$~nm wavelength. The photon pair is produced by pumping a periodically-poled potassium titanyl phosphate (ppKTP) waveguide with 1~ps laser pulses at $775$~nm wavelength. 

A silicon nitride-based PIC implements a $10\times 10$ square mesh of tunable phaseshifters as shown in Fig.~\ref{fig:chip}, enabling universal linear optical operations. The PIC layout---designed in-house---implements a rectangular mesh of thermo-optic phaseshifters that can decompose arbitrary $10\times 10$ unitary transformations following the Bell-scheme in Ref.~\cite{bell-2021-further_compact}. The PIC is programmed to encode each photon into a qudit state that is a spatial superposition over four spatial modes with three tunable relative phases:
\begin{equation}\label{eqn: encoded qudit state}
    |\psi(\bm{\theta})\rangle 
    =
    \left(
    A_0 \hat{a}_{0}^\dagger + 
    \sum_{k=1}^3 A_k e^{i\theta_{k}}
    \hat{a}_{k}^\dagger
    \right) |0\rangle,
\end{equation}
where $\hat{a}_{k}^\dagger$ for $k\in[0,3]$ are the bosonic creation operators for the four spatial modes and $A_k$ are the real amplitudes. Vector $\bm{\theta}=[\theta_1, \theta_2,\theta_3]$ are the tunable parameters for either data encoding or model training. 

The two qudits occupy eight modes in total, which we split into two registers. According to Equation~\ref{eqn: overlap measurement by parity}, our general overlap estimation scheme requires four PNRDs on one of the registers. However, when the total photon number is at most two, an odd-parity measurement in one register is equivalent to a coincidence detection between the two registers, and the overlap is equal to one minus twice the probability of these coincidence events, $\langle \hat{\Pi} \rangle = 1 - 2 P_{\textup{odd}}$. Measurement in our experiment is performed by click-detection on all eight output modes by superconducting nanowire single-photon detectors (SNSPDs), which are non-photon-number-resolving detectors. 

\section{Classification of quantum data}

\begin{figure*}[th]
    \includegraphics[width=1.0\textwidth]{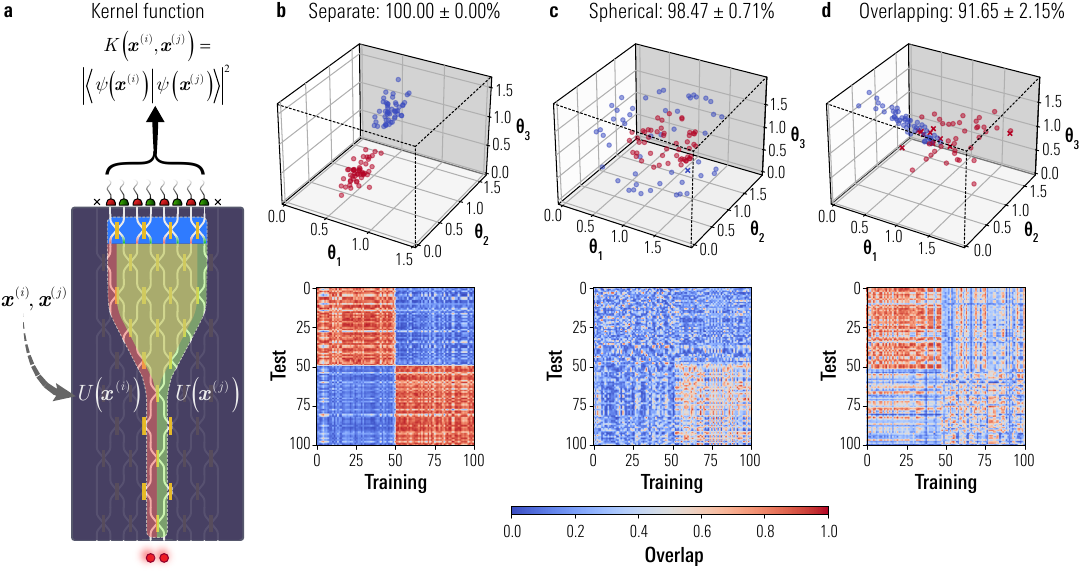}
    \caption{\label{fig: svm results} Quantum data classification. (a) Schematic of experimental routine for kernel function evaluation. (b-d) Classification results for linearly separable data, spherically separable data and overlapping data respectively. The top panels visualise the test datasets. Each dataset contains $100$ datapoints, which are three-dimensional vectors $\bm{x}=[\theta_1, \theta_2, \theta_3]$ represented by points in 3D space. The colour of each point indicates its groundtruth label, and is plotted with a circle marker if it is correctly classified, or a cross marker if it is misclassified. The bottom panels show the experimentally estimated overlap values (colour bar) between the test data and training data for each dataset. The datapoints are sorted by their labels for visualisation purposes.}
\end{figure*}

To demonstrate the utility of the optimal similarity measurement, we first implement a supervised kernel-based classification routine on quantum data using overlap-based kernel evaluation. Kernel methods play a foundational role in statistical learning theory, offering nonlinear decision boundaries with strong generalisation guarantees. Notably, training sufficiently wide neural networks has been found to be equivalent to kernel regression with the neural tangent kernel~\cite{Arthur2018_neural}.

In the quantum setting, kernel methods arise naturally: encoding data $\bm{x}^{(i)} \in \mathds{R}^d$ into quantum states $|\psi(\bm{x}^{(i)})\rangle$ defines a quantum feature map from the real space to Hilbert space, and the associated quantum kernel is evaluated by the state overlap, $K(\bm{x}^{(i)}, \bm{x}^{(j)}) = |\langle \psi(\bm{x}^{(i)}) | \psi(\bm{x}^{(j)}) \rangle|^2$. The task is to assign binary labels on unseen data $|\psi(\bm{x})\rangle$. We consider a setting in which training and test data are in the form of quantum states $|\psi(\bm{x})\rangle$ and all kernel entries $K(\bm{x}^{(i)}, \bm{x}^{(j)})$ are evaluated experimentally via bosonic interference. Given a set of training data $\{|\psi(\bm{x}^{(i)})\rangle\}_{i=1}^m$, each with label $y^{(i)}$, a support vector machine (SVM) constructs a linear classifier in the induced feature space, which predicts $\hat y(\bm{x})=\mathrm{sign}\!\left(\sum_{i=1}^m \beta^{(i)}y^{(i)}K(\bm{x},\bm{x}^{(i)})+b\right)$ for unseen data,
with coefficients $\{\beta^{(i)},b\}$ determined during training~\cite{hastieElementsStatisticalLearning2009} (see Methods for details).

In our implementation, we generate three labelled datasets of $200$ quantum states. Three-dimensional parameters are encoded in the qudit phases in Equation~\ref{eqn: encoded qudit state}, $\bm{x}^{(i)}=[\theta_1^{(i)}, \theta_2^{(i)}, \theta_3^{(i)}]$, as illustrated in Fig.~\ref{fig: svm results}a. The three datasets comprise linearly separable clusters, concentric spheres, and partially overlapping clusters in the original parameter space (Fig.~\ref{fig: svm results}). Each dataset is split randomly into $100$ training and $100$ test states. Using experimentally measured overlaps with $N=10^3$ samples per kernel evaluation, we obtain test accuracies of $100\%$ (separate), $98.47\pm0.71\%$ (spherical), and $91.65\pm2.15\%$ (overlapping), with uncertainties estimated by bootstrapping. The latter two datasets, which are not linearly separable in the original parameter space, become separable through the quantum kernel. While the datasets used here are not specifically designed to maximise the performance separation over classical algorithms~\cite{Huang_2021_power_of_data, Yin_2025_experimental_kernel}, the experiment demonstrates a scalable protocol for supervised learning in the Hilbert space using experimentally realistic resources.

\section{Online learning of quantum data}

\begin{figure*}[th]
    \centering
    \includegraphics[width=0.9\linewidth]{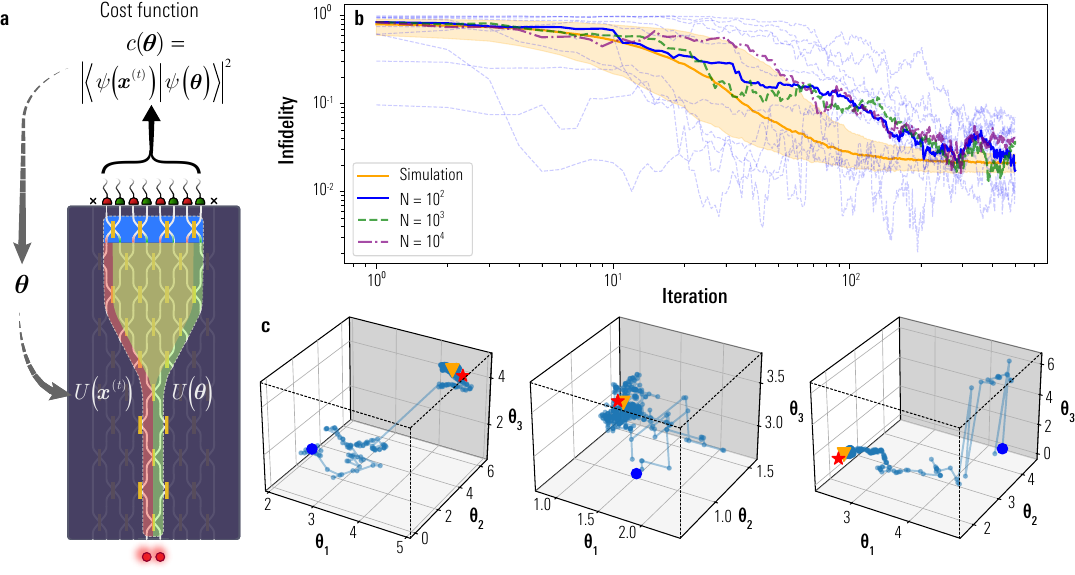}
    \caption{\label{fig: spsa results} Quantum data online learning. (a) Schematic of experimental routine for cost function evaluation. (b) Infidelity between the learned qudit state and the target qudit state against iteration number. Ten random target states are iteratively learned starting from random initial states, with $N=10^2$ state copies used per overlap evaluation (light blue dashed lines). The median infidelity across the ten learning tasks is shown by the dark blue solid line. The experiment is repeated for different number of state copies per overlap evaluation, $N=10^3$ and $N=10^4$, with median performance shown in green and purple dashed lines respectively. The shaded regions indicate the interquartile range (i.e. between the 25\% and 75\% quantiles) across simulations over $500$ randomised learning tasks, with median shown by the solid yellow line. (c) Optimisation paths for three example experiments in (b) with $N=10^2$, visualised in the three-dimensional space of qudit phases. The red star markers plot the target point, the blue circle markers plot the initial point of each path, and the yellow triangle markers plot the optimal point found after $500$ iterations. }
\end{figure*}

In contrast to batch learning, online learning receives input data sequentially and processes them on the fly. For quantum data, online learning is often practical as it is difficult to prepare a large batch of identical state copies on noisy quantum hardware~\cite{aaronsonOnlineLearningQuantum2019}. This scenario captures core tasks in QML, including training quantum neural networks~\cite{Beer2020_training_deep_qnn} and implementing quantum generative models such as quantum generative adversarial networks~\cite{Lloyd_2018_QGAN}, where a quantum model is trained to reproduce quantum data.

As a second application, we demonstrate overlap-based cost function evaluation in an online learning setting, where quantum data arrive sequentially as noisy copies of an unknown target state $|\psi(\bm{x}^{(t)})\rangle$. The objective is to train the output state $|\psi(\bm{\theta})\rangle$ of a parametrised quantum model with parameters $\bm{\theta}$ to approximate the target by minimising the infidelity $c(\bm{\theta}) = 1-|\langle\psi(\bm{x}^{(t)})|\psi(\bm{\theta})\rangle|^2$. In each experiment, a randomly initialised qudit is prepared on chip as the target state, while a second independently tunable qudit serves as the output of the trainable model. Parameters $\bm{\theta}$ are updated using simultaneous perturbation stochastic approximation (SPSA)~\cite{spall1998-overview-spsa}, with gradients estimated from two overlap evaluations per iteration (Fig.~\ref{fig: spsa results}a). This algorithm has been used in self-guided tomography for full state reconstruction, exhibiting superior efficiency and noise robustness compared to standard tomography methods~\cite{Ferrie-2014-SGQT, rambach-2021-self-guided, serino-2025-selfguided}. Thermal crosstalk between phaseshifters induces fluctuations in target preparation, mimicking realistic hardware noise in which successive copies are not identical.

Despite this noise, the optimisation converges reliably. Across ten random targets, the infidelity decreases to the few-percent level within $500$ iterations, with a median final value of $1.7\times10^{-2}$ (Fig.~\ref{fig: spsa results}b). At each iteration, the overlap evaluation consumes only $N=10^2$ copies of the target state. Moreover, increasing $N$ to $10^3$ or $10^4$ does not significantly change convergence rate or final precision, indicating robustness of the approach to shot noise. The convergence rate and final precision are in line with simulation results with a crosstalk noise model comparable with experimental parameters (see Supplementary Information for detailed results). Three representative optimisation trajectories in the three-dimensional parameter space of $\bm{\theta}$ are visualised in Fig.~\ref{fig: spsa results}c. These results highlight the robustness of the cost-function evaluation and its suitability for learning from quantum data in realistic, noisy settings.

\section{Discussion}

In certain problems of structured data processing, QML has shown provable advantage over classical methods~\cite{Liu_2021_rigorous_speedup,Huang_2021_power_of_data,Yin_2025_experimental_kernel}.
For efficient implementation, our optimal similarity measurement using bosonic interference achieves an exponential advantage over conventional methods that extract information about each state separately. In future quantum networks, where data are available naturally as quantum states, joint overlap estimation provides a minimal interface to extract relational information between quantum data, enabling applications such as distributed or blind quantum computing and learning~\cite{Knorzer2025distributedQIP, fitzsimonsPrivateQuantumComputation2017}..

The experimental demonstrations of QML on the \emph{Prakash-1} platform showcase the effectiveness of our approach in realistic, noisy experimental settings with modest quantum resources. For truly scalable QML, open questions of trainability in the form of barren plateaus or exponential concentration of kernel function values~\cite{thanasilp_exponential_2024, Larocca_2025_BP_in_variational} persist. Furthermore, the present implementation measures single-photon states evolved under linear-optical circuits, which have recently been shown to be classically simulable~\cite{lim-2025-efficientclassical}. The seek for generalisable end-to-end advantages in QML remains an open frontier. Nevertheless, we demonstrate a sample-optimal overlap estimation that addresses the measurement part of the challenge. The scheme is equally applicable to photonic states with non-Gaussianity and high entanglement that are expected to be classically intractable~\cite{konno2024logical,larsen2025integrated,yu2026extensible}. Scaling up the scheme and integrating such complex states are the next steps for achieving quantum advantage.

For future work, extending the two-state interference to multi-state interference, which has been experimentally explored in photonic systems~\cite{jones-2020-multiparticle}, could enable access to higher-order nonlinear functionals, expanding the expressive power of quantum models. Recent theoretical work has identified a hierarchy of learning capabilities enabled by increasing the number of jointly measured copies~\cite{noller-2025-infinite-hierarchy}. Higher-order functions between distinct quantum states could serve as the quantum analogue of multi-body attention mechanisms in transformer architectures, which potentially reduce hallucination and bias in generative artificial intelligence~\cite{huo-2025-capturing}. Looking ahead, the direct access to joint information between quantum objects enabled by quantum interference bypasses the costly need to characterise each object individually and opens new possibilities for scalable quantum intelligent networks.

\begin{acknowledgments}
The authors acknowledge Adam Taylor and Changhun Oh for useful discussions. R.B.P. thanks Bryn Bell for useful discussions regarding the PIC and acknowledges LioniX International for their service. MSK thanks Gyoyoung Jin for discussions on machine learning.
This work was supported by UK Research and Innovation Future Leaders Fellowship (project: MR/W011794/1) and Guarantee Postdoctoral Fellowship (projects: EP/Y029127/1, EP/Y029631/1), Engineering and Physical Sciences Research Council (EPSRC) UK Quantum Technologies Program's hubs for Quantum Computing \& Simulation (project: EP/T001062/1), Quantum Computing via Integrated and Interconnected Implementations (project: EP/Z53318X/1) and EPSRC projects (EP/W032643/1, EP/Y004752/1, and EP/W524323/1 [2928584]), EU Horizon 2020 Marie Sklodowska-Curie Innovation Training Network (project: 956071, `AppQInfo'), National Research Council of Canada (project QSP 062-2), and KIST through the Open Innovation fund and the National Research Foundation of Korea grant funded by the Korean government (MSIT) (No. RS-2024-00413957). Z.L. acknowledges partial funding support from ORCA Computing.
H.Z. and L.Z. acknowledge funding support from National Natural Science
Foundation of China (Grant Nos.~12347104, U24A2017), and the Natural Science Foundation of Jiangsu Province (Grant No.~BK20243060).
\end{acknowledgments}

\bibliography{references}

\section{Methods}
\subsection{Joint overlap estimation scheme}

In our overlap estimation scheme, we interfere the two photonic states on balanced BSs before measuring one output register by PNRDs. The balanced BSs are described by unitary operator $\hat{U}_{\textup{BS}}$ that combine the modes pairwise: 
\begin{eqnarray}
    \hat{U}_{\textup{BS}}^\dagger \hat{a}^{(A)}_{i} \hat{U}_{\textup{BS}}
    &= \frac{1}{\sqrt{2}}\left(\hat{a}^{(A)}_i + \hat{a}^{(B)}_i \right),
    \\
    \hat{U}_{\textup{BS}}^\dagger \hat{a}^{(B)}_{i} \hat{U}_{\textup{BS}} 
    &= \frac{1}{\sqrt{2}}\left(\hat{a}^{(A)}_i - \hat{a}^{(B)}_i \right).
\end{eqnarray}

Taking a partial trace over register $(A)$ after the BS operation, the output state in register $(B)$, denoted $\hat{\rho}^{\textup{(out, B)}}$, has characteristic function, 
\begin{equation}\label{eqn: output characteristic function}
    \chi^{\textup{(out, B)}}(\bm{\alpha}) 
    =
    \chi^{(A)}\left(\frac{\bm{\alpha}}{\sqrt{2}}\right)
    \chi^{(B)} \left(-\frac{\bm{\alpha}}{\sqrt{2}}\right).
\end{equation}
With a multiplicative rescaling, integrating Equation~\ref{eqn: output characteristic function} over phase space recovers the overlap expression in Equation~\ref{eqn: overlap definition}. This in fact equals the Wigner function of the output state at the origin of phase space, which can be measured by a total photon-number parity measurement across the $M$ modes in register $(B)$: 
\begin{eqnarray}
    W^{\textup{(out)}}(\bm{0})
    &=
    \frac{1}{\pi^{2M}}
    \int_{\mathds{C}^M} d^{2M}\bm{\alpha}\, 
    \chi^{\textup{(out)}}(\bm{\alpha})
    \nonumber 
    \\
    &=
    \Tr_{(B)}\left[
    \hat{\rho}^{\textup{(out, B)}}
    \hat{\Pi}
    \right]
\end{eqnarray}
This recovers our overlap measurement scheme described by Equation~\ref{eqn: overlap measurement by parity}. 

Operationally, we can sample $N$ shots and record $\{x_k\}_{k=1}^N$, where $x_k=+1$ for the $k$-th shot if the PNRDs measure an even total photon number and $x_k=-1$ if odd. From the binary outcomes, we build an unbiased estimator for the overlap value by $\widetilde{Y}=\frac{1}{N}\sum_{k=1}^N x_k$, with expectation value $\langle\widetilde{Y}\rangle = \Tr\left[\hat{\rho}^{(A)} \hat{\rho}^{(B)}\right]$ by Equation~\ref{eqn: overlap measurement by parity}. 

\begin{figure*}[!t]
    \centering
    \includegraphics[width=0.98\textwidth]{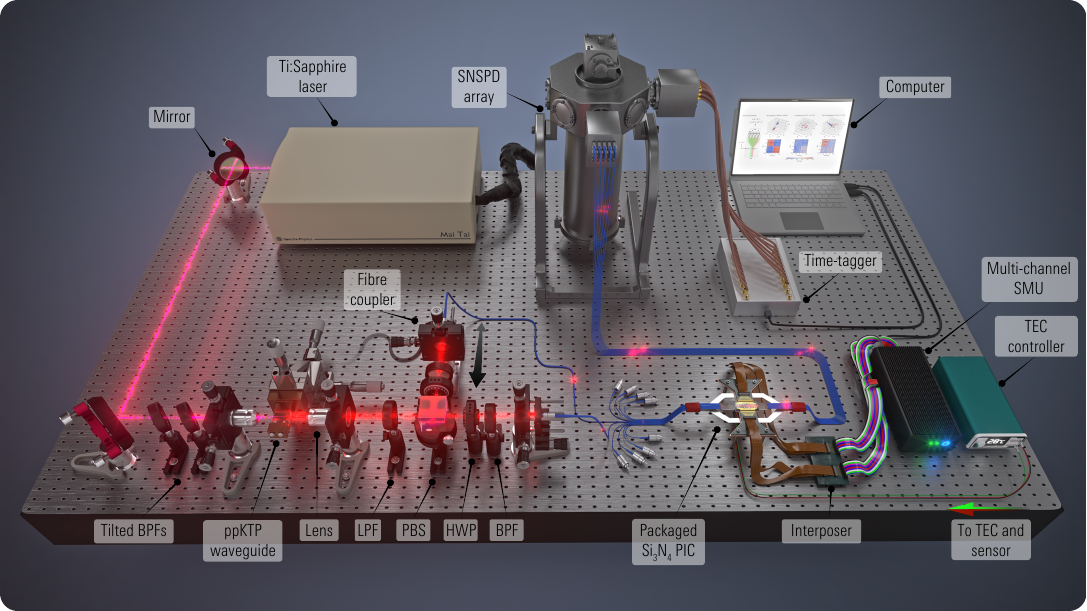}
    \caption{An illustration of the experimental setup. A Ti:Sapphire laser produces 775~nm wavelength, $\sim$1~ps long pulses, at 80~MHz, which are spectrally filtered to 0.9~nm bandwidth (full-width at half-maximum) using a pair of angle-tuned bandpass filters (BPF). The pulses are coupled into a 10~mm long ppKTP waveguide that is phase-matched for type-II SPDC. The residual pump is filtered using a long-pass filter (LPF) and a polarising beamsplitter (PBS) separates signal and idler pulses along separate paths, each containing a HWP and a BPF with a matched bandwidth to eliminate spectral correlations between signal and idler. The photon pair is delivered to the fully programmable $\text{Si}_3\text{N}_4$ PIC via polarisation-maintaining (PM) fibre. The thermo-optic phaseshifters on the PIC are programmed by a multi-channel source-measure unit (SMU), and a thermoelectric cooler (TEC) maintains the temperature of the PIC at $28^{\circ}\text{C}$. The eight output channels of the PIC are coupled to PM fibre, enabling detection by superconducting nanowire single-photon detectors (SNSPDs).
    }
    \label{fig:setup}
\end{figure*}

\subsection{Sample complexity optimality}
The proof for Theorem~\ref{thm: optimality of joint overlap estimation} relies on converting the overlap estimation problem into a decisional quantum state discrimination problem, whose sample complexity can be bounded by applying Helstrom's bound~\cite{helstrom_1969, Anshu-2022-distributed_quantum_inner_product}.

In a quantum state discrimination problem, Alice chooses and prepares one state from $\hat{\sigma}_{-}^{\otimes N}=(\hat{\rho}_{-}\otimes |\phi_0\rangle\langle\phi_0|)^{\otimes N}$ or $\hat{\sigma}_{+}^{\otimes N}=(\hat{\rho}_{+}\otimes |\phi_0\rangle\langle\phi_0|)^{\otimes N}$ with equal prior probability, and sends it to Bob, where $\hat{\rho}_{\pm}=|\psi_{\pm}\rangle\langle\psi_{\pm}|$ are two pure states defined by, 
\begin{equation}
    |\psi_{\pm}\rangle
    =
    \sqrt{\frac{1}{2}\pm\epsilon} |\phi_0\rangle + 
    \sqrt{\frac{1}{2}\mp\epsilon} |\phi_1\rangle. 
\end{equation}
The pure states $|\phi_{0/1}\rangle$ need to be orthogonal---they can be the computational basis states in DV systems or squeezed vaccum and single-photon-subtracted  squeezed vacuum states in CV systems. 

For the quantum state discrimination problem, Bob can distinguish between $\hat{\sigma}_{\pm}^{\otimes N}$ with success probability $P_\text{success}\geq 1-\delta$ if $N$ suffices to estimate $\Tr\left[\hat{\rho}_{-}\otimes |\phi_0\rangle\langle\phi_0|\right]$ to within $\epsilon$ error with $1-\delta$ probability. However, by Helstrom's bound, the success probability is also upper-bounded by the trace distance between $\hat{\sigma}_{\pm}^{\otimes N}$,
\begin{equation}
    P_{\textup{success}}
    \leq 
    \frac{1}{2}
    \left(1 + \frac{1}{2} ||\hat{\sigma}_{-}^{\otimes N}-\hat{\sigma}_{+}^{\otimes N}||_{1}\right)
    \leq 
    \frac{1}{2} + \epsilon \sqrt{N},
\end{equation}
thus giving the lower bound $N=\Omega(\epsilon^{-2}(1-2\delta)^2/4)$ in Theorem~\ref{thm: optimality of joint overlap estimation}, which simplifies to $N=\Omega(\epsilon^{-2})$ if we for example set $\delta=\frac{1}{3}$. For any method that solves the overlap estimation problem, the special states we defined above, $\hat{\sigma}_{\pm}^{\otimes N}$, constitute worst-case instances where no method can guarantee a sample complexity better than $N=O(\epsilon^{-2})$ without violating the Helstrom's bound. The formal proof of Theorem~\ref{thm: optimality of joint overlap estimation} is provided in the Supplementary Information. 

\subsection{Distributed overlap estimation for CV states}

In the main text, we consider a specific distributed overlap estimation method for self-reflective CV states. For these states, independent characterisation is efficient, whereas distributed overlap estimation remains inefficient.

A state is self-reflective if there exists a set of axes in phase space through the origin, about which the reflection of the state is itself. The worst-case complexity of estimating the characteristic function of CV states is exponential in the mode number~\cite{coroi2025exponentialadvantage}. However, for self-reflective states, an efficient $N$-copy measurement exists that can estimate their $\chi(\bm{\alpha}_i)$ up to additive error $\tilde{\epsilon}$ for $L$ points in phase space given only logarithmic sample complexity $N = O\left(\tilde{\epsilon}^{-4}\log(L)\right)$~\cite{Wu-2024-efficient_learning}. 

This provides a route to distributed overlap estimation. Alice and Bob respectively produce characteristic function estimators, $\tilde{\chi}^{(A/B)}(\bm{\alpha}_i)$, with error $\tilde{\epsilon}$ for $L$ points in some sub-space $\mathcal{A}$. Then they estimate the overlap by Monte-Carlo approximation to Equation~\ref{eqn: overlap definition}: $\tilde{Y} = \frac{|\mathcal{A}|}{\pi^M L} \sum_{i=1}^L \tilde{\chi}^{(A)}(\bm{\alpha}_i) \tilde{\chi}^{(B)}(-\bm{\alpha}_i)$. If we bound the energy of the CV state and assume the integral of Equation~\ref{eqn: overlap definition} is negligible outside $\mathcal{A}$, then $\tilde{Y}$ estimates the overlap with error $\epsilon$, if $\tilde{\epsilon} = \frac{\pi^M}{8|\mathcal{A}|}\epsilon$ and $L = \left(\frac{4 \sigma_L |\mathcal{A}|}{\pi^M \epsilon}\right)^2$, where $M$ is the mode number. The Monte-Carlo error is given by $(\sigma_L)^2 = \sum_{i=1}^L \left( f(\bm{\alpha}_i) - \sum_{j=1}^L f(\bm{\alpha}_j)/L \right)/(L-1)$ for $f(\bm{\alpha}_i) = \chi^{(A)}(\bm{\alpha}_i) \chi^{(B)}(-\bm{\alpha}_i)$ and can be estimated using the $\tilde{\chi}^{(A/B)}$ estimators. The resulting sample complexity is $N = O\left( |\mathcal{A}|^4/\epsilon^4 \pi^{4M} \log(|\mathcal{A}|^2\sigma_L^2/\epsilon^2\pi^{2M}) \right)$. 

However, if we adopt a commonly used mode-extensive energy constraint on the CV state~\cite{mele2024learningquantumstatescontinuous,coroi2025exponentialadvantage}, such that the sub-space $\mathcal{A}$ is a $2M$-dimensional hypersphere with radius $|\bm{\alpha}|=\sqrt{\kappa M}$, then it has volume, $|\mathcal{A}| = \frac{(\pi \kappa M)^M}{M!} = O\left(\frac{(\mathrm{e}\pi\kappa)^M}{\sqrt{2\pi M}}\right)$. As a result, the overall sample complexity is still exponential in $M$, as stated informally in Theorem~\ref{theorem: informal distributed} (formal proof in Supplementary Information). 

\subsection{Photon source}
Photon pairs are generated using a type‑II SPDC source based on a ppKTP waveguide (AdvR) pumped by a mode‑locked Ti:Sapphire laser (Spectra-Physics Mai Tai, 80~MHz repetition rate, 775~nm), with the pump spectrum filtered to a $0.9$~nm FWHM using a pair of 3~nm bandpass filters (Thorlabs FBH780-3). The source produces collinear, orthogonally polarised signal and idler photons at 1550~nm, which are separated by a polarising beam splitter and mildly filtered using a 10~nm bandpass filter, before being coupled into polarisation‑maintaining (PM) fibres, and injected into a chip‑based photonic circuit with output modes detected by SNSPDs. At a pump power 65~$\mu$W, the detected signal–idler coincidence rate is 5.5~kHz, while higher‑order photon‑number contributions are suppressed to below $50$ per second. Two‑photon indistinguishability is characterised via on‑chip Hong–Ou–Mandel interference with a variable delay in the signal mode, yielding a visibility of $95.4\%$.

\subsection{Photonic chip calibration}
The silicon nitride chip used in this work, fabricated by LioniX, comprises a 10$\times$10 array of thermo‑optically tunable phaseshifters and is packaged with PM fibre arrays for both input and output coupling (PHIX Photonics Assembly BV). 
The chip implements a 10-mode universal linear-optical interferometer.
Before performing QML tasks, each phaseshifter is carefully calibrated, with directional coupler imperfections and thermal crosstalk taken into account. We further develop a crosstalk‑mitigation method that improves the average fidelity of the output coincidence distribution to $0.980\pm 0.006$. In addition, a digital simulator of the chip is designed to quantify the detrimental effects of thermal crosstalk and to validate the effectiveness of our calibration. Further details of the calibration methods and the simulator are provided in the Supplementary Information. 
\par

\subsection{Support vector machine}

Not all datasets are linearly separable in their original space. In this scenario, the trick is to find a non-linear kernel function that maps the data into a high-dimensional space where the task can be reduced to linear classification. For data vectors with dimension $d$, denoted $\bm{x}^{(i)} \in \mathds{R}^d$, one can encode the data into some quantum state $|\psi(\bm{x}^{(i)})\rangle$, which constitutes a quantum feature map into Hilbert space $\mathcal{H}$. 

For each dataset classified by SVM in main text, we generate a training subset of $100$ binary-labelled data instances, $\{|\psi(\bm{x}^{(i)})\rangle, y^{(i)}\}_{i=1}^{100}$. By evaluating the overlap between each pair of data instances, we construct a $100\times 100$ overlap matrix, $\bm{K}$, whose elements are given by $K_{ij} = |\langle\psi(\bm{x}^{(i)})|\psi(\bm{x}^{(j)})\rangle|^2$. We aim to find a vectorised classifier weight, $\bm{\beta}=(\beta^{(1)}, \hdots, \beta^{(m)})^T$, which maximises the hyperplane separation between the two different classes. We do so by solving the following constrained optimisation problem: 
\begin{equation}
\begin{split}
    &\min_{\bm{\beta}} 
    \frac{1}{2} \bm{\beta}^T \left(\bm{y}^T \bm{K} \bm{y}\right) \bm{\beta}
    - \sum_{i} \beta_i ,
    \\
    \textup{subject to }
    & 0 \leq \beta^{(i)} \leq C,
    \\
    & \bm{y}^T\bm{\beta} = 0,
\end{split}
\end{equation}
where vector $\bm{y} = (y^{(1)}, \hdots, y
^{(m)})^T$ is the vectorised training labels, and $C$ is a slack constant for soft-margin optimisation, which we set to $C=0.8$. Then we define a bias correction term, $b$, by 
\begin{equation}
    b = \frac{1}{100} 
    \sum_{j=1}^{100} 
    \left(y^{(j)} - \sum_{i=1} \beta^{(i)} y^{(j)} K_{ji}\right).
\end{equation}

As a result, for any new unseen data, $\bm{x}$, encoded in qudit $|\psi(\bm{x})\rangle$, we compute its overlap with each training qudit and predict its label by, 
\begin{equation}
    \hat{y}(\bm{x}) 
    = 
    \sign
    \left(
    \sum_{i=1}^{100} \beta^{(i)} y^{(i)} 
    \left|\langle\psi(\bm{x}^{(i)})|\psi(\bm{x})\rangle\right|^2
    + b
    \right).
\end{equation}
In practice, however, the classifier weight $\beta^{(i)}$ is typically non-zero for a subset of the training data, which is referred to as the `support vectors' and to which we can limit our overlap evaluations. 

For the three datasets shown in Fig.~\ref{fig: svm results} (linearly separable, concentric spheres and overlapping blobs), we generate $200$ phase vectors according to the dataset structure, which are randomly divided into training and test subsets. The amplitudes of the qudit, $A_j$ in Equation~\ref{eqn: encoded qudit state}, are chosen to optimise the classification accuracy across the three datasets in numerical simulations. Details of the SVM implementation are provided in the Supplementary Information.

\subsection{Simultaneous perturbation stochastic approximation}
For online learning of quantum data, we implement the SPSA algorithm proposed and detailed in Ref.~\cite{spall1998-overview-spsa, spall1998-spsa-implementations}. The routine works by randomly perturbing the reference state at each iteration, and use the overlap evaluation at both directions of the perturbation for gradient estimation. Then the state is updated along the descent direction, before being perturbed at the next iteration.

In SPSA, we seek to minimise a cost function, which in this implementation is defined as one minus the overlap between the target and reference qudit states: 
\begin{equation}
    c(\bm{\theta}) = 1 - \left|\langle \psi(\bm{x}^{(t)})|\psi(\bm{\theta})\rangle\right|^2 .
\end{equation}

In the $k$-th iteration, where the reference qudit has phase, $\bm{\theta}^{(k)}$, we evaluate the cost function at $c\left(\bm{\theta}^{(k)} \pm \bm{\Delta}^{(k)} t^{(k)} \right)$, where $\bm{\Delta}^{(k)} \in \{\pm 1 \}^{\otimes 3}$ is a random perturbation vector. This allows us to approximate the local gradient (vectorised) to be, 
\begin{equation}
    \bm{g}(\bm{\theta}^{(k)}) = 
    \frac{
    c\left(\bm{\theta}^{(k)} + \bm{\Delta}^{(k)} t^{(k)}\right) 
    - c\left(\bm{\theta}^{(k)} - \bm{\Delta}^{(k)} t^{(k)}\right)
    }{
    2 t^{(k)} \bm{\Delta}^{(k)}
    }.
\end{equation}
Then, we update the phase vector for the $(k+1)$-th iteration to be, $\bm{\theta}^{(k+1)} = \bm{\theta}^{(k)} - a^{(k)} \bm{g}(\bm{\theta}^{(k)})$. 

The gain coefficient and perturbation coefficient decay with the number of iterations as: 
\begin{align}
    a^{(k)} &= \frac{a}{(A+k+1)^{\alpha}}, \\
    t^{(k)} &= \frac{t}{(k+1)^{\gamma}}.
\end{align}
For constants, we choose $\alpha=0.602$, $\gamma=0.101$, $A=10$, $a=1.6$ for $500$ iterations per optimisation. At the start of each optimisation, we evaluate the initial cost function at the $0$-th iteration five times and set coefficient $t$ to be twice the standard deviation of the function value. 

We repeat the optimisation for ten target states with randomly selected phases, and for different number of state copies per overlap evaluation, as explained in the main text. The results are summarised in Fig.~\ref{fig: spsa results} with further details provided in the Supplementary Information. 

\subsection{Data collection}
The overlap estimation in our experiment consists of measuring eight optical modes with SNSPDs (Photon Spot), which are non-photon-number-resolving detectors but suffice for measuring the overlap between two single-photon qudit states. The two output registers, $A/B$, labelled in Equation~\ref{eqn: overlap measurement by parity}, correspond to the odd-number-labelled modes $\{1,3,5,7\}$ and even-number-labelled modes $\{2,4,6,8\}$, respectively. Consequently, an odd-parity total-photon-number measurement corresponds to a coincidence detection between two modes $k$ and $l$ where $k+l$ is an odd number. 

In measurement, we post-select outputs on two-photon coincidence events. Due to the suppression of higher-order photon-number components in the photon source, post-selection eliminates the effect of photon loss without introducing unwanted terms to the computation. For a total of $N$ two-photon detection events, comprising $N_{\text{odd}}$ odd-parity events, the overlap estimator is constructed by, 
\begin{equation}\label{eqn: practical formula for overlap estimation}
    \tilde{Y}= 
    1 - 2(1-R) \frac{N_\text{odd}}{N},
\end{equation}
where $R=\sum_{i=0}^3 A_i^4 < 1$. Here, $A_i$ is the encoded amplitude of the qudit states (Equation~\ref{eqn: encoded qudit state}) and are experimentally validated. We treat these amplitudes as known quantities, as data is encoded exclusively in the phase-degrees of freedom. Physically, the term $R$ represents the bunching probability, which is the probability of two photons arriving in the same output mode. These instances contribute to even-parity events but are discarded by our post-selection process. This term can be removed in an improved future implementation with intrinsic PNRDs. 
Additionally, Equation~\ref{eqn: practical formula for overlap estimation} can be rescaled by the Hong-Ou-Mandel interference visibility, though this rescaling was found to have negligible impact on the final accuracy of the QML tasks demonstrated in this work. 

For the quantum data classification experiment, we limit each overlap estimation to $N=1000$ detected coincidences. During data collection, we continuously monitor the coincidence rate and adjust the collection window to measure $15000$ coincidence events per measurement. In post-processing, we randomly sample $1000$ events to perform the overlap estimation, and bootstrap the process $1000$ times. This provides an error estimate for each overlap value, which is then propagated using Monte-Carlo simulations to estimate the error of the final classification accuracy (shown in Fig.~\ref{fig: svm results}). 

In the quantum data online learning experiment, the adaptive nature of the measurement process precludes the use of downsampling and bootstrapping. Instead, we lower the coincidence rate such that each collection window directly produces the target total coincidence number per overlap evaluation. This collection window is continuously monitored and adjusted to accommodate count-rate fluctuations in the photon source. The experiment was repeated for total coincidence numbers $N=10^2, 10^3$ and $10^4$, as presented in Fig.~\ref{fig: spsa results}.

\medskip

\clearpage
\widetext
    
\begin{center}
\textbf{\large Supplementary Information: Machine learning of quantum data using optimal similarity measurements}
\end{center}
\setcounter{equation}{0}
\setcounter{figure}{0}
\setcounter{section}{0}
\setcounter{theorem}{0}
\makeatletter

\renewcommand{\theequation}{S\arabic{equation}}
\renewcommand{\thesection}{S\Roman{section}}
\renewcommand{\thefigure}{S\arabic{figure}}
\renewcommand{\thetheorem}{S\arabic{theorem}}

\section{Photonic states}

\subsection{Preliminary}
In this section, we give an overview of bosonic systems and derive some useful identities. More details can be found in Refs.~\cite{barnett_radmore2002methods, serafini2023quantum, schleichQuantumOpticsPhase2001}.

In the rest of the document, without further explanation, we consider $M$-mode bosonic states described by their density operators $\hat{\rho}$. For discrete-variable (DV) states, the density operator $\hat{\rho}$ acts on a $d$-dimensional Hilbert space. For continuous-variable (CV) states, the Hilbert space dimension is theoretically infinite. 

An $M$-mode bosonic system is defined by its bosonic annihilation and creation operators. For the $k$-th mode, the creation and annihilation operators are denoted as $\hat{a}_k^{\dagger}, \hat{a}_k$ respectively, with commutation relations given by
\begin{equation}
    [\hat{a}_k, \hat{a}_{k'}^\dagger] = \delta_{k,k'}, 
    \quad 
    [\hat{a}_k, \hat{a}_{k'}] = 
    [\hat{a}_k^{\dagger}, \hat{a}_{k'}^{\dagger}]=0.
\end{equation}
The photon number operator of the $k$-th mode is denoted,
\begin{equation}
    \hat{n}_k \coloneqq \hat{a}_k^\dagger \hat{a}_k.
\end{equation}
And we also define a total photon number parity operator: 
\begin{equation}\label{supp eqn: total photon number parity operator}
    \hat{\Pi} \coloneqq (-1)^{\sum_{k=1}^M \hat{n}_k}. 
\end{equation}

The $M$-mode displacement operator is defined as
\begin{equation}\label{supp eqn: multimode displacement operator}
    \hat{D}(\bm{\alpha}) 
    \coloneqq e^{
    \sum_{k=1}^M 
    \left(
    \alpha_k\hat{a}_k^{\dagger} - 
    \alpha_k^*\hat{a}_k
    \right)
    }
\end{equation}
for a complex displacement vector, $\bm{\alpha}=(\alpha_1, \hdots, \alpha_M) \in\mathds{C}^M$, by which the state is displaced in phase space. 
Acting the displacement operator onto the vacuum state generates an $M$-mode coherent state, $|\bm{\alpha}\rangle=\hat{D}(\bm{\alpha})|0\rangle$. Coherent states are not orthogonal but form an overcomplete basis of the Hilbert space, allowing the identity operator to be decomposed as 
\begin{equation}
    \hat{\bm{I}} = 
    \frac{1}{\pi^M}
    \int_{\mathds{C}^M} d^{2M}\bm{\alpha}\, |\bm{\alpha}\rangle\langle\bm{\alpha}|,
\end{equation}
where the integral over the multi-mode phase space is $\frac{1}{\pi^M} \int_{\mathds{C}^M} d^{2M} \bm{\alpha}= \left(\frac{1}{\pi} \int_{\mathds{C}} d^{2} \alpha_1\right) \hdots \left(\frac{1}{\pi} \int_{\mathds{C}} d^{2} \alpha_M\right)$. 

The trace of any trace class operator, $\hat{O}$ can be expressed as an integral over the coherent state basis: 
\begin{equation}\label{supp eqn: trace as integral over coherent state basis}
    \Tr[\hat{O}] = 
    \frac{1}{\pi^M} \int_{\mathds{C}^M} d^{2M}\bm{\alpha}\, \langle \bm{\alpha}|\hat{O}| \bm{\alpha}\rangle.
\end{equation}

The integral of the displacement operator over phase space is~\cite{barnett_radmore2002methods},
\begin{equation}\label{supp eqn: integral of multimode displacement operator}
\begin{split}
    \frac{1}{(2\pi)^M}
    \int_{\mathds{C}^M} d^{2M}\bm{\alpha}\, 
    \hat{D}(\bm{\alpha}) 
    &=
    \prod_{k=1}^M
    \left(
    \frac{1}{2\pi} \int_{\mathds{C}} d^2\alpha_k\, 
    \hat{D}_k(\alpha_k) 
    \right)
    \\
    &=
    \prod_{k=1}^M
    \left(
    \sum_{m=0}^\infty 
    \frac{(-2)^m}{m!} (\hat{a}_k^\dagger)^m \hat{a}_k^m
    \right)
    \\
    &=
    \prod_{k=1}^M
    \left( (-1)^{\hat{a}_k^\dagger \hat{a}_k} \right)
    =
    (-1)^{\sum_{k=1}^M \hat{n}_k} \equiv \hat{\Pi}
\end{split}
\end{equation}
which is the total photon number parity operator in Equation~\ref{supp eqn: total photon number parity operator}.

Using Equation~\ref{supp eqn: trace as integral over coherent state basis}, the trace of the displacement operator is
\begin{equation}\label{supp eqn: trace of displacement operator}
\begin{split}
    \Tr\left[\hat{D}(\bm{\alpha})\right]
    &=
    \frac{1}{\pi^M}
    \int_{\mathds{C}^{M}} d^{2M} \bm{\beta}\,
    \langle\bm{\beta}|\hat{D}(\bm{\alpha})|\bm{\beta}\rangle
    \\
    &=
    \frac{1}{\pi^M}
    \int_{\mathds{C}^{M}} d^{2M} \bm{\beta}\,
    e^{
    \bm{\beta}^{\dagger}\bm{\alpha}-\bm{\alpha}^{\dagger}\bm{\beta}}
    e^{-\frac{1}{2}\bm{\alpha}^\dagger \bm{\alpha}}
    =
    \pi^M \delta^{(2M)}(\bm{\alpha}). 
\end{split}
\end{equation}

\subsection{Description of photonic states}
Leveraging the completeness relation of the coherent states, any $M$-mode state with density operator $\hat{\rho}$ can be decomposed as, 
\begin{equation}\label{supp eqn: photonic state in characteristic function}
    \hat{\rho}
    =
    \frac{1}{\pi^M}
    \int_{\mathds{C}^M}
    d^{2M}\bm{\alpha}\,
    \chi(\bm{\alpha}) \hat{D}(-\bm{\alpha}),
\end{equation}
for characteristic function 
\begin{equation}\label{supp eqn: characteristic function definition}
    \chi(\bm{\alpha}) \coloneqq \Tr\left[\hat{D}(\bm{\alpha})\hat{\rho}\right].
\end{equation}
A photonic state can be fully described by its characteristic function.  

Alternatively, we can take the complex Fourier transform of Equation~\ref{supp eqn: characteristic function definition} and derive the Wigner function description of the state, which is
\begin{equation}\label{supp eqn: wigner function definition}
    W(\bm{\alpha})
    =
    \frac{1}{\pi^{2M}}
    \int_{\mathds{C}^M} d^{2M}\bm{\beta}\,
    e^{\bm{\beta}^\dagger \bm{\alpha} - \bm{\alpha}^\dagger \bm{\beta}}
    \chi(\bm{\beta}). 
\end{equation}
Specifically, we have,
\begin{equation}
\begin{split}
    e^{\bm{\beta}^\dagger\bm{\alpha}-\bm{\alpha}^\dagger\bm{\beta}}
    \hat{D}(\bm{\beta})
    &=
    e^{
    \sum_{k=1}^M 
    \left[
    \beta_k(\hat{a}_k^{\dagger}-\alpha_k^*) - 
    \beta_k^*(\hat{a}_k-\alpha_k)
    \right]
    }
    \\
    &=
    e^{
    \sum_{k=1}^M 
    \hat{D}_k^\dagger(-\alpha_k)
    \left(
    \beta_k \hat{a}_k^{\dagger}
    -\beta_k^* \hat{a}_k 
    \right)
    \hat{D}_k(-\alpha_k)
    }
    \\
    &=
    \hat{D}^\dagger(-\bm{\alpha})
    \hat{D}(\bm{\beta}) \hat{D}(-\bm{\alpha}),
\end{split}
\end{equation}
which, after substituting into Equation~\ref{supp eqn: wigner function definition} and using the definition of Equation~\ref{supp eqn: characteristic function definition}, allows us to write the Wigner function explicitly as, 
\begin{equation}\label{supp eqn: explicit form of Wigner function in parity operator}
\begin{split}
    W(\bm{\alpha})
    &=
    \frac{1}{\pi^{2M}}
    \int_{\mathds{C}^M} d^{2M}\bm{\beta}\,
    \Tr\left[
    e^{\bm{\beta}^\dagger \bm{\alpha} - \bm{\alpha}^\dagger \bm{\beta}}
    \hat{D}(\bm{\beta})
    \hat{\rho}
    \right]
    \\
    &=
    \left(\frac{2}{\pi}\right)^M 
    \Tr\left[
    \hat{D}(\bm{\alpha})
    \hat{\rho} \hat{D}^\dagger(\bm{\alpha})
    \frac{1}{(2\pi)^M}
    \int_{\mathds{C}^M} d^{2M}\bm{\beta}\,\hat{D}(\bm{\beta})
    \right]
    \\
    &=
    \left(\frac{2}{\pi}\right)^M 
    \Tr\left[
    \hat{D}(\bm{\alpha})
    \hat{\rho} \hat{D}^\dagger(\bm{\alpha})
    \hat{\Pi}
    \right], 
\end{split}
\end{equation}
where we used Equation~\ref{supp eqn: integral of multimode displacement operator} to evaluate the integral of the displacement operator. 
Operationally, this allows us to directly measure the Wigner function of a quantum state by first displacing it in phase space, and then measuring its total photon number parity. 
In particular, at origin of phase space, the displacement is zero and we can directly measure the Wigner function by PNRDs without displacement: 
\begin{equation}\label{supp eqn: wigner function is photon number parity measurement}
    W(\bm{0})=\left(\frac{2}{\pi}\right)^{M}
    \Tr\left[\hat{\rho}\hat{\Pi}\right]. 
\end{equation}

\subsection{Overlap between two states}
For two $M$-mode states, $\hat{\rho}^{(A)}$ and $\hat{\rho}^{(B)}$, their overlap can be defined as $\Tr\left[\hat{\rho}^{(A)}\hat{\rho}^{(B)}\right]$. For two pure states, $\hat{\rho}^{(A)}=|\psi\rangle\langle\psi|, \hat{\rho}^{(B)}=|\phi\rangle\langle\phi|$, this definition equals $|\langle\psi|\phi\rangle|^2$. Expressed in terms of their characteristic functions, $\chi^{(A)}(\bm{\alpha}), \chi^{(B)}(\bm{\alpha})$, this is 
\begin{equation}\label{supp eqn: overlap in characteristic function form}
\begin{split}
    \Tr\left[\hat{\rho}^{(A)}\hat{\rho}^{(B)}\right]
    &=
    \frac{1}{\pi^{2M}}
    \int_{\mathds{C}^M}
    d^{2M}\bm{\alpha}\,
    \int_{\mathds{C}^M}
    d^{2M}\bm{\beta}\,
    \chi^{(A)}(\bm{\alpha}) 
    \chi^{(B)}(\bm{\beta})
    \Tr\left[
    \hat{D}(-\bm{\alpha}) 
    \hat{D}(-\bm{\beta})
    \right]
    \\
    &=
    \frac{1}{\pi^{2M}}
    \int_{\mathds{C}^{2M}}
    d^{2M}\bm{\alpha}
    d^{2M}\bm{\beta}\,
    \chi^{(A)}(\bm{\alpha}) 
    \chi^{(B)}(\bm{\beta}) 
    \Tr\left[
    \hat{D}(-\bm{\alpha}-\bm{\beta})
    \right]
    e^{\frac{1}{2}(\bm{\beta}^{\dagger}\bm{\alpha}
    -\bm{\alpha}^{\dagger}\bm{\beta})}
    \\
    &=
    \frac{1}{\pi^M}
    \int_{\mathds{C}^{2M}}
    d^{2M}\bm{\alpha}\,
    \chi^{(A)}(\bm{\alpha}) 
    \chi^{(B)}(-\bm{\alpha}), 
\end{split}
\end{equation}
where in the second line we used the combination of two displacement operators, $\hat{D}(\bm{\alpha})\hat{D}(\bm{\beta}) = e^{\frac{1}{2}(\bm{\beta}^{\dagger}\bm{\alpha}-\bm{\alpha}^{\dagger}\bm{\beta})}\hat{D}(\bm{\alpha}+\bm{\beta})$.

\section{Joint overlap estimation}
In this section, we explicitly derive the joint overlap estimation scheme given in the main text, including its sample complexity and the optimality of the sample complexity. 

\subsection{Derivation}
We consider two $M$-mode states, $\hat{\rho}^{(A)}$ and $\hat{\rho}^{(B)}$, described by characteristic functions $\chi^{(A/B)}(\bm{\alpha})$. Each state occupies half of a total $2M$-mode system. If we divide the $2M$ modes into two registers, we will use superscripts $(A/B)$ to label the register where the input state originally sits in, and subscripts $i=1, \hdots, M$ to label the modes within each register. 

We interfere the two $M$-mode states pairwise on balanced beamsplitters (BSs) described by unitary operator, $\hat{B}$, with action
\begin{equation}
    \hat{B}^\dagger \hat{a}_i^{(A)} \hat{B}
    = 
    \frac{1}{\sqrt{2}}
    \left(
    \hat{a}_i^{(A)} + \hat{a}_i^{(B)}
    \right)
    ,\quad 
    \hat{B}^\dagger \hat{a}_i^{(B)} \hat{B}
    = 
    \frac{1}{\sqrt{2}}
    \left(
    \hat{a}_i^{(A)} - \hat{a}_i^{(B)}
    \right).
\end{equation}
For example, a $2M$-mode displacement described by $\hat{D}^{(A)}(\bm{\alpha})\otimes \hat{D}^{(B)}(\bm{\beta})$ for complex vectors $\bm{\alpha}, \bm{\beta} \in \mathds{C}^M$ undergoes the following transformation under the beamsplitters: 
\begin{equation}\label{supp eqn: displacement after beamsplitters}
    \hat{B}^\dagger 
    \left(
    \hat{D}^{(A)}(\bm{\alpha})\otimes \hat{D}^{(B)}(\bm{\beta})
    \right)
    \hat{B}
    =
    \hat{D}^{(A)}\left(\frac{\bm{\alpha}+\bm{\beta}}{\sqrt{2}}\right)
    \otimes 
    \hat{D}^{(B)}\left(\frac{\bm{\alpha}-\bm{\beta}}{\sqrt{2}}\right).
\end{equation}

Expanding the states, $\hat{\rho}^{(A/B)}$, in terms of Equation~\ref{supp eqn: photonic state in characteristic function} and using Equation~\ref{supp eqn: displacement after beamsplitters}, we derive, 
\begin{equation}
    \hat{B}^\dagger 
    \left(\hat{\rho}^{(A)} \otimes \hat{\rho}^{(B)} \right)
    \hat{B}
    =
    \frac{1}{\pi^{2M}}
    \int_{\mathds{C}^M} d^{2M}\bm{\alpha}\, 
    \int_{\mathds{C}^M} d^{2M}\bm{\beta}\, 
    \chi^{(A)}(\bm{\alpha}) \chi^{(B)}(\bm{\beta})
    \hat{D}^{(A)}\left(-\frac{\bm{\alpha}+\bm{\beta}}{\sqrt{2}}\right)
    \hat{D}^{(B)}\left(-\frac{\bm{\alpha}-\bm{\beta}}{\sqrt{2}}\right).
\end{equation}
Taking a partial trace over the register $(A)$, and using Equation~\ref{supp eqn: trace of displacement operator}, the output state in the register $(B)$ is 
\begin{equation}
    \hat{\rho}^{\textup{(out)}}
    =
    \Tr_{(A)} 
    \left[\hat{B}^\dagger
    \left(\hat{\rho}^{(A)} \otimes \hat{\rho}^{(B)} \right) \hat{B}
    \right]
    =
    \frac{1}{\pi^M}
    \int_{\mathds{C}^M} d^{2M}\bm{\gamma}\, 
    \chi^{(A)}\left(\frac{\bm{\gamma}}{\sqrt{2}}\right)
    \chi^{(B)}\left(-\frac{\bm{\gamma}}{\sqrt{2}}\right)
    \hat{D}^{(B)}(-\bm{\gamma}), 
\end{equation}
which means it has characteristic function $\chi^{\textup{(out)}}(\bm{\gamma}) = \chi^{(A)}\left(\frac{\bm{\gamma}}{\sqrt{2}}\right) \chi^{(B)}\left(-\frac{\bm{\gamma}}{\sqrt{2}}\right)$. 

Using Equation~\ref{supp eqn: wigner function definition}, the Wigner function of the output state at origin of phase space is given by 
\begin{equation}\label{supp eqn: wigner function is overlap}
\begin{split}
    W^{\textup{(out)}}(\bm{0})
    &=
    \frac{1}{\pi^{2M}}
    \int_{\mathds{C}^M} d^{2M}\bm{\gamma}\, 
    \chi^{\textup{(out)}}(\bm{\gamma})
    \\
    &=
    \frac{2^M}{\pi^{2M}}
    \int_{\mathds{C}^M} d^{2M}\bm{\alpha}\,
    \chi^{(A)}(\bm{\alpha}) \chi^{(B)}(-\bm{\alpha})
    \\
    &=
    \left(\frac{2}{\pi}\right)^M 
    \Tr\left[\hat{\rho}^{(A)}\hat{\rho}^{(B)}\right].
\end{split}
\end{equation}
In the second line, we made a substitution of integration variable, $\bm{\gamma}=\sqrt{2}\bm{\alpha}$, and in the third line we used the definition of overlap in Equation~\ref{supp eqn: overlap in characteristic function form}. 

Finally, we can equate Equation~\ref{supp eqn: wigner function is overlap} to Equation~\ref{supp eqn: wigner function is photon number parity measurement}. This gives, 
\begin{equation}\label{supp eqn: joint overlap estimation method}
    \Tr\left[\hat{\rho}^{(A)}\hat{\rho}^{(B)}\right]
    =
    \Tr_{(B)}\biggr[
    \Tr_{(A)} 
    \left[\hat{B}^\dagger
    \left(\hat{\rho}^{(A)} \otimes \hat{\rho}^{(B)} \right) \hat{B}
    \right]
    \hat{\Pi}
    \biggr].
\end{equation}
Operationally, Equation~\ref{supp eqn: joint overlap estimation method} means we can estimate the overlap between two photonic states by first interfering them on balanced BSs before measuring half of the output register on PNRDs for their photon-number parity. 

\subsection{Sample complexity}
The parity measurement in Equation~\ref{supp eqn: joint overlap estimation method} is a binary outcome measurement. Since the sample variance is a fixed quantity for this type of measurements, derivation of their sample complexity commonly uses the Hoeffding's inequality: Let $x_1, \hdots, x_N$ be independent random variables such that $a_k \leq x_k \leq b_k$ for all $k\in[1, N]$. For the sum of these random variables, $S_N=\sum_{k=1}^N x_k$, Hoeffding's inequality states that for all $t > 0$, we have~\cite{Hoeffding1963} 
\begin{equation}\label{supp eqn: hoeffding}
    \Pr\left(
    |S_N - \langle S_N\rangle | \geq t
    \right)
    \leq 
    2 \exp\left(
    - \frac{2t^2}{\sum_{k=1}^N (b_k-a_k)^2}
    \right),
\end{equation}
where $\langle S_N\rangle$ is the expectation value of $S_N$. 

The sample complexity of our joint overlap estimation scheme is stated in Theorem 1 of the main text, which we restate below and give a full proof. 
\begin{theorem}
    \textbf{\textup{(Theorem 1 in main text.)}}
    Given $N$ copies of $\hat{\rho}^{(A)} \otimes \hat{\rho}^{(B)}$ and some constants $\epsilon,\delta\in(0,\frac{1}{2})$, by use of balanced BSs and PNRDs, the overlap $\Tr\left[\hat{\rho}^{(A)} \hat{\rho}^{(B)}\right]$ can be estimated to within additive error $\epsilon$ with success probability at least $1-\delta$ for sample complexity $N=2\epsilon^{-2}\log(2\delta^{-1})$. 
\end{theorem}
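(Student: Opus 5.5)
We will prove the theorem by combining the identity of Equation~\ref{supp eqn: joint overlap estimation method} with Hoeffding's inequality, Equation~\ref{supp eqn: hoeffding}. For each of the $N$ copies of $\hat{\rho}^{(A)}\otimes\hat{\rho}^{(B)}$ we apply the balanced beamsplitters $\hat{B}$, measure all $M$ modes of output register $(B)$ with PNRDs, and record $x_k=+1$ if the total photon number is even and $x_k=-1$ if it is odd. The measurement is projective onto the even and odd subspaces, and $\hat{\Pi}$ is the difference of these two projectors. Hence $\langle x_k\rangle=\Tr_{(B)}[\hat{\rho}^{\textup{(out)}}\hat{\Pi}]$, and Equation~\ref{supp eqn: joint overlap estimation method} shows this equals $\Tr[\hat{\rho}^{(A)}\hat{\rho}^{(B)}]$. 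Since the copies are prepared and measured independently, the $x_k$ are i.i.d.\ random variables in $[-1,1]$. The estimator $\widetilde{Y}=S_N/N$ is therefore unbiased for the overlap.

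Next we apply Equation~\ref{supp eqn: hoeffding} with $a_k=-1$ and $b_k=1$, so that $\sum_k(b_k-a_k)^2=4N$, and with threshold $t=N\epsilon$. This gives
\begin{equation*}
\Pr\left(|\widetilde{Y}-\Tr[\hat{\rho}^{(A)}\hat{\rho}^{(B)}]|\geq\epsilon\right)\leq 2\exp\left(-\frac{2N^2\epsilon^2}{4N}\right)=2\exp\left(-\frac{N\epsilon^2}{2}\right).
\end{equation*}
Requiring the right-hand side to be at most $\delta$ yields $N\geq 2\epsilon^{-2}\log(2\delta^{-1})$, which is exactly the stated sample complexity. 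Choosing $N$ equal to this value (rounding up to an integer) gives additive error below $\epsilon$ with probability at least $1-\delta$. The restriction $\epsilon,\delta\in(0,\frac12)$ only ensures the bound is nontrivial, with $\log(2\delta^{-1})>0$.

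The only step needing care is showing that the sample mean of the parity outcomes is a bounded, unbiased estimator for arbitrary (including CV, infinite-dimensional) states. Boundedness is automatic because the outcomes are $\pm1$. Unbiasedness relies on Equation~\ref{supp eqn: joint overlap estimation method}, which we take as given. One should also note that the total photon number is finite almost surely, so the parity outcome is well defined. Since trace-class states have a normalized photon-number distribution, this holds, and no energy or dimension assumption enters. This is why $N$ is independent of $M$, the photon number and the energy.
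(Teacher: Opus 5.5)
Your proposal is correct and follows the paper's proof essentially step for step. The $\pm1$ parity outcomes give an unbiased estimator of $\Tr[\hat{\rho}^{(A)}\hat{\rho}^{(B)}]$ through the beamsplitter--parity identity, and Hoeffding's inequality with $a_k=-1$, $b_k=1$ and $t=N\epsilon$ yields $2e^{-N\epsilon^2/2}\le\delta$, i.e.\ $N=2\epsilon^{-2}\log(2\delta^{-1})$; your remarks on the projective parity measurement and on the parity outcome being well defined for CV states are details the paper leaves implicit.
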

\begin{proof}
The overlap can be estimated by the parity measurement in Equation~\ref{supp eqn: joint overlap estimation method}, which uses only balanced BSs and PNRDs. Suppose we perform the experiment for $N$ shots, and the result of each shot is recorded as $x_k$ for $k\in[1,N]$, where $x_k=1$ if an even photon number is measured and $x_k=-1$ if odd. They satisfy $-1\leq x_k \leq 1$, such that $a_k=-1, b_k=1$ for $k\in[1, N]$ in Equation~\ref{supp eqn: hoeffding}. Then the estimator, $\tilde{Y}=\frac{1}{N}\sum_{k=1}^N x_k$, is an unbiased estimator for the overlap, $\langle \tilde{Y} \rangle = \langle \hat{\Pi} \rangle = \Tr\left[\hat{\rho}^{(A)}\hat{\rho}^{(B)}\right]$. If we want to estimate the overlap within an additive error $\epsilon>0$ with probability at least $1-\delta$, this can be written as 
\begin{equation}
    \Pr\left(
    |\tilde{Y} - \langle \tilde{Y} \rangle| \geq \epsilon
    \right)
    =
    \Pr\left(
    \left|
    \sum_{k=1}^N x_k - \left\langle \sum_{k=1}^N x_k \right\rangle
    \right| \geq N \epsilon
    \right)
    \leq \delta. 
\end{equation}

Using Hoeffding's inequality, we have $\delta \geq 2\mathrm{e}^{-\frac{N\epsilon^2}{2}}$, which means the sample complexity is 
\begin{equation}
    N = 2\epsilon^{-2} \log(2\delta^{-1}), 
\end{equation}
thus completing the proof.
\end{proof}

If we give the success probability a number, e.g. $1-\delta=\frac{2}{3}$, then this is $N=2\log(6)\epsilon^{-2}\approx 3.58 \epsilon^{-2}$. Adopting the big-$O$ notation for asymptotic scaling, the sample complexity simplifies to $N=O(\epsilon^{-2})$.

\subsection{Optimality}
The $N=O(\epsilon^{-2})$ sample complexity is, in fact, optimal up to a constant. A proof is given by Ref.~\cite{Anshu-2022-distributed_quantum_inner_product, Wang-2024-optimal_trace_distance} for DV systems, but is easily extended to arbitrary photonic states. For completeness, we detail the proof in this section, after formally restating the result from Theorem 2 in the main text. The proof works by converting the overlap estimation problem into a decisional problem of quantum state discrimination, whose success probability is bounded by Helstrom's bound~\cite{helstrom_1969}. 

\begin{theorem}\label{supp theorem: overlap estimation lower bound}
    \textbf{\textup{(Theorem 2 of main text.)}}
    Given the $N$-copy state, $\left(\rho^{(A)} \otimes \rho^{(B)}\right)^{\otimes N}$, and some constants $\epsilon, \delta\in(0,\frac{1}{2})$, if an algorithm estimates $\Tr\left[\rho^{(A)} \rho^{(B)}\right]$ to within additive error $\epsilon$ with success probability at least $1-\delta$, then $N=\Omega\left(\epsilon^{-2}\left(\frac{1}{2}-\delta\right)^2\right)$. 
\end{theorem}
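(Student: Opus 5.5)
We reduce the estimation problem to binary state discrimination and then apply Helstrom's bound, following the construction sketched in the Methods. Fix two orthogonal pure states $|\phi_0\rangle,|\phi_1\rangle$ of the $M$-mode system. In DV systems these are two Fock/computational basis states. In CV systems they can be the squeezed vacuum and the single-photon-subtracted squeezed vacuum, which are orthogonal because they have opposite photon-number parity. Define
\begin{equation*}
|\psi_{\pm}\rangle=\sqrt{\tfrac12\pm\epsilon}\,|\phi_0\rangle+\sqrt{\tfrac12\mp\epsilon}\,|\phi_1\rangle ,
\end{equation*}
with $\rho^{(A)}=|\psi_\pm\rangle\langle\psi_\pm|$ and $\rho^{(B)}=|\phi_0\rangle\langle\phi_0|$. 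The two candidate inputs are $\sigma_\pm^{\otimes N}=(|\psi_\pm\rangle\langle\psi_\pm|\otimes|\phi_0\rangle\langle\phi_0|)^{\otimes N}$. Their overlaps are $\tfrac12\pm\epsilon$, which differ by $2\epsilon$. An estimator accurate to within additive error below $\epsilon$ (we can rescale $\epsilon$ by a constant, which only changes constants) therefore lets Bob decide between the two hypotheses by thresholding the estimate at $\tfrac12$. This decision succeeds with probability at least $1-\delta$ under either hypothesis, so also under an equal prior.

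Next we upper bound the success probability of any discrimination strategy. By Helstrom's bound, $P_{\rm success}\le \tfrac12\bigl(1+\tfrac12\|\sigma_+^{\otimes N}-\sigma_-^{\otimes N}\|_1\bigr)$. Both states are pure, so the trace distance equals $\sqrt{1-|\langle\Psi_+|\Psi_-\rangle|^2}$. The $|\phi_0\rangle$ factor contributes overlap one, and $\langle\psi_+|\psi_-\rangle^N=F^N$ with $F=2\sqrt{(\tfrac12+\epsilon)(\tfrac12-\epsilon)}=\sqrt{1-4\epsilon^2}$. This gives
\begin{equation*}
\tfrac12\|\sigma_+^{\otimes N}-\sigma_-^{\otimes N}\|_1=\sqrt{1-(1-4\epsilon^2)^N}\le\sqrt{4N\epsilon^2}=2\epsilon\sqrt N ,
\end{equation*}
using Bernoulli's inequality $(1-x)^N\ge 1-Nx$. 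Hence $P_{\rm success}\le\tfrac12+\epsilon\sqrt N$.

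Combining this with $P_{\rm success}\ge 1-\delta$ yields $\epsilon\sqrt N\ge\tfrac12-\delta$, that is, $N\ge\epsilon^{-2}(\tfrac12-\delta)^2$. This is the claimed $\Omega(\epsilon^{-2}(\tfrac12-\delta)^2)$. The argument is valid for every strategy, including adaptive global measurements, since Helstrom's bound already optimizes over all POVMs on the full $N$-copy state.

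The computations are routine. The main point needing care is the CV extension. One must exhibit explicitly normalizable orthogonal pure states, which the parity argument above provides; after that, the whole argument depends only on the two-dimensional span of $|\phi_0\rangle,|\phi_1\rangle$ and is dimension-independent. A second subtlety is the error convention: with overlaps separated by exactly $2\epsilon$, an error of exactly $\epsilon$ is ambiguous. We fix this by using $\epsilon'=2\epsilon$ in the construction (or strict inequality), which absorbs a constant factor into the $\Omega$.
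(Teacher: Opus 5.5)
Your proposal is correct and follows the paper's proof essentially step for step. It uses the same states $|\psi_\pm\rangle$ paired with $|\phi_0\rangle$ in the second register, Helstrom's bound, the pure-state trace distance with $|\langle\psi_+|\psi_-\rangle|^2=1-4\epsilon^2$, and Bernoulli's inequality to get $P_{\textup{success}}\le\frac12+\epsilon\sqrt N$, hence $N\ge\epsilon^{-2}(\frac12-\delta)^2$. Your two extra remarks are valid refinements that the paper leaves implicit: the tie at threshold $\frac12$, fixed by rescaling $\epsilon$ (harmless for $\epsilon\ge\frac14$, where the bound is a constant), and the photon-number-parity argument for orthogonality of the CV basis states.
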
 

\begin{proof}
We consider two orthogonal basis states $|\phi_0\rangle, |\phi_1\rangle$, such that $\langle \phi_1 | \phi_0\rangle =0$. 
For some $\epsilon>0$, we define two pure states, $\hat{\rho}_{\pm}$, that can be decomposed in terms of $|\phi_0\rangle, |\phi_1\rangle$: 
\begin{align}
    \hat{\rho}_{-}&=|\psi_{-}\rangle\langle\psi_{-}|, \quad \textup{for} \quad
    |\psi_{-}\rangle =
    \sqrt{\frac{1}{2}-\epsilon} |\phi_0\rangle + 
    \sqrt{\frac{1}{2}+\epsilon} |\phi_1\rangle,
    \\
    \hat{\rho}_{+}&=|\psi_{+}\rangle\langle\psi_{+}|, \quad \textup{for} \quad
    |\psi_{+}\rangle =
    \sqrt{\frac{1}{2}+\epsilon} |\phi_0\rangle + 
    \sqrt{\frac{1}{2}-\epsilon} |\phi_1\rangle,
\end{align}
which satisfy
\begin{equation}
    \Tr\left[\hat{\rho}_{-} |\phi_0\rangle\langle\phi_0|\right] = \frac{1}{2}-\epsilon, \quad
    \Tr\left[\hat{\rho}_{+} |\phi_0\rangle\langle\phi_0|\right] = \frac{1}{2}+\epsilon. 
\end{equation}

In a quantum state discrimination problem, Alice picks and prepares one state from $\hat{\sigma}_{-}^{\otimes N}=(\hat{\rho}_{-}\otimes |\phi_0\rangle\langle\phi_0|)^{\otimes N}$ or $\hat{\sigma}_{+}^{\otimes N}=(\hat{\rho}_{+}\otimes |\phi_0\rangle\langle\phi_0|)^{\otimes N}$ with equal prior probability, and sends it to Bob. If an algorithm described in Theorem~\ref{supp theorem: overlap estimation lower bound} exists, then Bob can discriminate between the two non-orthogonal states, $\hat{\sigma}_{\pm}^{\otimes N}$, with at least $1-\delta$ probability. 

Therefore, we convert the question of lower bounding the overlap estimation problem's sample complexity to upper bounding a quantum state discrimination problem's success probability. The latter question can be answered by the Helstrom's bound~\cite{helstrom_1969} (or~\cite{Wilde_2013} for a textbook description):
\begin{equation}\label{supp eqn: Helstrom bound}
    1-\delta \leq 
    P_{\textup{success}}
    \leq 
    \frac{1}{2}
    \left(1 + \frac{1}{2} ||\hat{\sigma}_{-}^{\otimes N}-\hat{\sigma}_{+}^{\otimes N}||_{1}\right),
\end{equation}
for trace distance, 
\begin{equation}\label{supp eqn: trace distance derivation}
\begin{split}
    ||\hat{\sigma}_{-}^{\otimes N}-\hat{\sigma}_{+}^{\otimes N}||_{1}
    &=
    \Tr\left[\sqrt{
    \left(\hat{\sigma}_{-}^{\otimes N}-\hat{\sigma}_{+}^{\otimes N}\right)^\dagger 
    \left(\hat{\sigma}_{-}^{\otimes N}-\hat{\sigma}_{+}^{\otimes N}\right)
    }\right]
    \\
    &\leq 
    2 \sqrt{1 - \Tr\left[\hat{\sigma}_{-}^{\otimes N}\hat{\sigma}_{+}^{\otimes N}\right]}
    \\
    &=
    2 \sqrt{1 - |\langle\psi_{+}|\psi_{-}\rangle|^{2N}}
    = 
    2 \sqrt{1 - (1-4\epsilon^2)^N}
    \leq 
    4 \epsilon \sqrt{N},
\end{split}
\end{equation}
where in the second line we used the relationship between trace distance and trace overlap~\cite{Wilde_2013}, and in the last line we used the fact that $(1-4\epsilon^2)^N \geq 1 - 4 N \epsilon^2$ for $0<\epsilon < \frac{1}{2}$ by Bernoulli's inequality. 
Combining Equations~\ref{supp eqn: Helstrom bound} and \ref{supp eqn: trace distance derivation}, Helstrom's bound stipulates $N\geq \epsilon^{-2} \left(\frac{1}{2}-\delta\right)^2$.
\end{proof}

If we give the success probability a number, e.g. $1-\delta=\frac{2}{3}$, then the lower bound is explicitly $N\geq \epsilon^{-2}/36\approx 0.0278\epsilon^{-2}$. Within the asymptotic big-$\Omega$ notation, the lower bound is equivalent to the sample complexity of Theorem 1 in main text, $N=\Omega(\epsilon^{-2})$, thereby proving the scheme by BSs and PNRDs achieve the optimal complexity up to a constant scaling. 

For any method that solves the overlap estimation problem, the special states we defined above, $\hat{\sigma}_{\pm}^{\otimes N}$, constitute worst-case instances where the method cannot guarantee a sample complexity better than $N=O(\epsilon^{-2})$ without violating the Helstrom's bound. Therefore, the sample complexity provided by our joint overlap estimation method is optimal up to a constant scaling within the big-$O$ notation. 

Note that throughout the proof for Theorem~\ref{supp theorem: overlap estimation lower bound}, no assumptions were made on the type of measurements, which can include any arbitrary combination of multi-copy, entangling or adaptive measurements. The fact that our joint overlap estimation scheme achieves the optimal scaling by use of just linear optics and PNRDs is a further advantage. 

The proof was originally given in Ref.~\cite{Anshu-2022-distributed_quantum_inner_product} for proving the optimality of the swap test in DV systems. In CV systems, the pure states $|\phi_0\rangle, |\phi_1\rangle$ can be the squeezed vacuum state and the single-photon-subtracted squeezed vacuum state, which satisfy the orthogonality condition. 

\section{Distributed overlap estimation}
To the best of our knowledge, no rigorous sample complexity lower bound exists for distributed overlap estimation for photonic states (including CV states). We formally define the distributed overlap estimation problem, and then derive a possible method based on Monte-Carlo integration.

\begin{problem}\label{supp problem: distributed overlap estimation}
    \textup{\textbf{Distributed overlap estimation.}} Alice is given some $N$-copy state $\left(\hat{\rho}^{(A)}\right)^{\otimes N}$, and Bob is given $\left(\hat{\rho}^{(B)}\right)^{\otimes N}$. Given some error $\epsilon\in(0,\frac{1}{2})$, their goal is to estimate $\Tr\left[\hat{\rho}^{(A)} \hat{\rho}^{(B)}\right]$ to within additive error $\epsilon$ with success probability at least $2/3$ by only local quantum operations and classical communication (LOCC).  
\end{problem}

For DV states $\hat{\rho}^{(A/B)}$ acting on $d$-dimensional Hilbert space, Ref.~\cite{Anshu-2022-distributed_quantum_inner_product} proved a rigorous lower bound $N=\Omega(\max(\epsilon^{-2}, \sqrt{d} \epsilon^{-1})$, even if adaptive multi-copy measurements and arbitrary rounds of classical communication are allowed. The proof works by converting the distributed overlap estimation problem into a decisional problem, similar to what we have done in the previous section. 

The decisional problem of distributed overlap estimation probem can be described as follows: Alice has a $d$-dimensional Haar random state, $\hat{\rho}$, and Bob with equal prior probability either has the same state $\hat{\rho}$ or an independent $d$-dimensional Haar random state, $\hat{\sigma}$. Locally, Alice and Bob cannot tell the difference between the two situations. However, if Alice and Bob can solve the distributed overlap estimation problem by LOCC, then they can discriminate which situation they are in with high probability. This is because if Bob also has the same $\hat{\rho}$, then their overlap is $1$, whereas if Bob has a different $\hat{\sigma}$, the overlap scales as $1/d$. Ref.~\cite{Anshu-2022-distributed_quantum_inner_product} proves an upper bound for the success probability of this decisional problem, which lower bounds the sample complexity of the distributed overlap estimation problem. 

It is not immediately obvious how to extend the proof in Ref.~\cite{Anshu-2022-distributed_quantum_inner_product} to CV systems. However, since energy-constrained CV states are approximately DV states, we conjecture that the same curse of dimensionality equally applies to CV systems. 

In what follows, we consider a method for distributed overlap estimation on CV states in a special case when CV states are easy to learn. Our methods serve to upper bound the sample complexity for energy-constrained CV states. We leave the proof of rigorous lower bound for future work. 

As shown in Equation~\ref{supp eqn: overlap in characteristic function form}, the overlap between two CV states are given by the overlap integral of their characteristic functions. Therefore, our method works by estimating the characteristic functions of the two states, $\chi^{(A/B)}(\bm{\alpha})$, at $L$ points in phase space, before performing Monte-Carlo integration of the estimated values to estimate the overlap. 

In general, learning the characteristic function of CV states is inefficient. Ref~\cite{coroi2025exponentialadvantage} proved that in the worst-case, estimation of $\chi^{(A/B)}(\bm{\alpha})$, even at just one point in phase space and allowing multi-copy entangling measurements, requires at least an exponential number of copies of the state. Therefore by extension, any general distributed overlap estimation algorithm by individually learning the characteristic functions must also require exponential sample complexity. 

However, for a special case of self-reflective states, learning their characteristic function can be efficient. These types of states are defined as follows~\cite{Wu-2024-efficient_learning, coroi2025exponentialadvantage}: 
\begin{definition}
    An $M$-mode state $\hat{\rho}$ with characteristic function $\chi(\bm{\alpha})$ is self-reflective if there exists some symmetric unitary matrix $\bm{U}\in\mathds{C}^{M\times M}$ such that $\chi(\bm{\alpha})=\chi(\bm{U\alpha}^*)$. 
\end{definition}

The symmetric unitary matrix $\bm{U}$ defines a set of axes in phase space that passes through the origin point, against which the state is reflected. For a state with reflection symmetry, there exists some set of axes against which the reflection is the state itself. A large class of interesting CV states are self-reflective, such as coherent states, squeezed states, Fock states, Gottesman-Kitaev-Preskill (GKP) states and Schr{\"o}dinger cat states~\cite{Wu-2024-efficient_learning, coroi2025exponentialadvantage}. However, there are also simple states that do not exhibit reflection symmetry, such as displaced squeezed states whose displacement axis and squeezing axis do not align. 

For self-reflective states, Ref.~\cite{Wu-2024-efficient_learning} proved the following lemma for efficient learning:  
\begin{lemma}\label{supp lemma: self reflective states characteristic function estimation}
    \textup{\textbf{(Theorem 1 of Ref~\cite{Wu-2024-efficient_learning}.)}}    
   For a self-reflective state $\hat{\rho}$ with characteristic function $\chi(\bm{\alpha})$, given $\hat{\rho}^{\otimes N}$, $L$ points in phase space, $\{\bm{\alpha}_i\}_{i=1}^{L}$, and some $\tilde{\epsilon}, \tilde{\delta}\in(0,\frac{1}{2})$, the values of $\{\chi(\bm{\alpha}_i)\}_{i=1}^L$ can be estimated to within additive error $\tilde{\epsilon}$ with success probability at least $1-\tilde{\delta}$ for sample complexity $N=O\left(\tilde{\epsilon}^{-4} \log(L/\tilde{\delta})\right)$. 
\end{lemma}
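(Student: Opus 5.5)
We prove the statement by following the strategy of Ref.~\cite{Wu-2024-efficient_learning}: Bell-type measurements on two copies at a time, exploiting the reflection symmetry. The idea is that each round jointly measures two copies of $\hat{\rho}$ and produces a bounded random variable whose mean is $|\chi(\bm{\alpha})|^2$, simultaneously for all $\bm{\alpha}$. The sign or phase of $\chi$ is then fixed by the self-reflective structure.

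First, we take the pair $\hat{\rho}\otimes\hat{\rho}$ and apply a conjugation on the second copy that implements the reflection $\bm{\alpha}\mapsto\bm{U}\bm{\alpha}^*$. Because $\bm{U}$ is symmetric unitary, this can be realised as a passive linear-optical transformation composed with phase-space conjugation. By the self-reflective property, the second copy's characteristic function becomes $\chi(\bm{U}\bm{\alpha}^*)=\chi(\bm{\alpha})$, up to conjugation.

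Next, we interfere the two copies on balanced BSs as in the Derivation subsection, using Eq.~\eqref{supp eqn: displacement after beamsplitters}, and homodyne both output registers in conjugate quadratures (a CV Bell measurement). Each shot yields a phase-space point $\bm{\zeta}$. Taking the random variable $e^{\bm{\alpha}^\dagger\bm{\zeta}-\bm{\zeta}^\dagger\bm{\alpha}}$, which has modulus one, gives an unbiased estimate of $\chi(\bm{\alpha})\chi(-\bm{\alpha})^*$. Under the reflection symmetry this equals $\chi(\bm{\alpha})^2$, up to a known phase, or $|\chi(\bm{\alpha})|^2$.

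Because the same samples serve every $\bm{\alpha}_i$, Hoeffding's inequality (Eq.~\eqref{supp eqn: hoeffding}) plus a union bound over the $L$ points gives error $\eta$ on all $L$ squared values with probability at least $1-\tilde{\delta}$, using $N=O(\eta^{-2}\log(L/\tilde{\delta}))$ pairs.

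Finally, we convert from squares to $\chi$ itself. Taking square roots turns error $\eta$ into error $O(\sqrt{\eta})$, worst case near $\chi\approx 0$, so setting $\eta=\tilde{\epsilon}^2$ gives the claimed $N=O(\tilde{\epsilon}^{-4}\log(L/\tilde{\delta}))$.

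The main obstacle is resolving the sign or phase ambiguity of the square root. For self-reflective states, the symmetry forces $\chi$ to be real up to a known phase along the reflection, leaving only a $\pm$ sign. We would determine this sign with a second Bell-measurement variant that estimates $\chi(\bm{\alpha})\chi(\bm{0})=\chi(\bm{\alpha})$ linearly. Alternatively, we would argue that the sign only matters where $|\chi|>\tilde{\epsilon}$, and there a constant-precision estimate suffices within the same sample budget. Rather than reprove these details, we would invoke Theorem 1 of Ref.~\cite{Wu-2024-efficient_learning} for this step.
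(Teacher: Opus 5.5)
The paper gives no argument for this lemma. It is imported verbatim as Theorem~1 of Ref.~\cite{Wu-2024-efficient_learning}, and your closing sentence puts you in the same position. Read as a self-contained proof, however, your sketch establishes only the weaker claim ``$\chi(\bm\alpha_i)$ up to a sign at each point''. The step you flag as the main obstacle is a genuine gap.

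What does work, after one correction, is the following. A two-copy Bell measurement (balanced BSs, then conjugate homodynes on the two outputs) jointly measures $\hat D(\bm\alpha)\otimes\hat D(\bm R\bm\alpha)$ for all $\bm\alpha$. Here $\bm R\bm\alpha=\pm\bm\alpha^*$ is an anti-symplectic reflection, fixed by which output is homodyned in which quadrature. So on two copies the measurement returns $\chi(\bm\alpha)\chi(\bm R\bm\alpha)$, not $\chi(\bm\alpha)\chi(-\bm\alpha)^*$, which would equal $\chi(\bm\alpha)^2$ for every state. Self-reflectivity rewrites $\chi(\bm R\bm\alpha)$ as $\chi$ at a passive image $\pm\bm U\bm\alpha$. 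A passive unitary on the second copy undoes this, giving unit-modulus unbiased estimators of $\chi(\bm\alpha)^2$. The conjugation is supplied by the measurement; it cannot be applied to a copy, since phase-space conjugation is transposition up to a passive unitary and is not completely positive. Hoeffding on real and imaginary parts plus a union bound then gives accuracy $\eta$ on all $L$ squares from $O(\eta^{-2}\log(L/\tilde\delta))$ pairs. With $w=\chi(\bm\alpha)$ and $w'$ a square root of the estimate, $|w'-w|\,|w'+w|\le\eta$ gives $\min_\pm|w'\mp w|\le\sqrt\eta$, whence $\tilde\epsilon^{-4}$.

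Your suggestions do not recover the sign.

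Self-reflectivity does not make $\chi$ real up to a known phase. A coherent state $|\beta\rangle$ with real $\beta$ obeys $\chi(\alpha)=\chi(-\alpha^*)$, yet $\chi(\alpha)=e^{2i\beta\,\mathrm{Im}\,\alpha-|\alpha|^2/2}$ carries an unknown phase. In any case, the complex number $\chi^2$ already leaves exactly one $\pm$ per point, and that is the whole problem.

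No Bell-type variant returns $\chi(\bm\alpha)\chi(\bm 0)$ for many $\bm\alpha$ at once. A two-copy Gaussian measurement jointly measures $\hat D(\bm\xi)$ only for $\bm\xi$ in a Lagrangian subspace of the doubled phase space. The $\bm\alpha$ with $(\bm\alpha,\bm 0)$ in that subspace form an isotropic subspace of the first copy, so this is just single-copy homodyne. Two generic points have nonzero symplectic product, so each needs its own setting, and $\log L$ becomes $L$.

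Replacing ``$\chi(\bm 0)$'' by a known reference state $\hat\sigma$ forces division by $\chi_\sigma$. Since $\pi^{-M}\int d^{2M}\bm\alpha\,|\chi_\sigma(\bm\alpha)|^2=\Tr\hat\sigma^2\le1$, the mean of $|\chi_\sigma|^2$ over $\mathcal{A}$ is at most $\pi^M/|\mathcal{A}|$. That is exponentially small in $M$ when $\mathrm{e}\kappa>1$.

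Your fallback (constant precision where $|\chi|>\tilde\epsilon$) still needs a sign-sensitive, linear-in-$\chi$ estimate at those points, and meets the same obstruction.

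The sign-fixing mechanism is therefore the nontrivial content of the cited theorem. Deferring to it is an acceptable citation, exactly as in the paper, but it is circular as a proof. Without it, your argument proves only the up-to-sign version.
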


Lemma \ref{supp lemma: self reflective states characteristic function estimation} overcomes the first obstacle of efficient characteristic function learning. In fact, it enables estimation of the characteristic function at exponential number of points in phase space using only a polynomial number of measurements. 

A second obstacle is that the functions $\chi^{(A/B)}(\bm{\alpha})$ are theoretically defined over an infinite phase space, making Monte-Carlo integration impossible. However, in physical relevant contexts, the CV states are invariably bounded by energy, which makes it possible to evaluate the overlap integral in Equation~\ref{supp eqn: overlap in characteristic function form} within a finite volume of phase space. We make the assumption that the functions $\chi^{(A/B)}(\bm{\alpha})$ are only important for $|\bm{\alpha}|^2\leq \kappa M$, where $M$ is the mode number and $\kappa$ is an energy constraint per mode, inspired by similar techniques in theoretical quantum optics literature~\cite{coroi2025exponentialadvantage, Wu-2024-efficient_learning, mele2024learningquantumstatescontinuous}. 

To be more specific, we assume that the overlap is bounded within a $2M$-dimensional hypersphere in phase space that has radius $|\bm{\alpha}|=\sqrt{\kappa M}$, such that: 
\begin{equation}\label{supp eqn: bound on overlap phase space}
    \left|
    \frac{1}{\pi^M} \int_{|\bm{\alpha}|^2 \geq \kappa M} 
    d^{2M} \bm{\alpha} \chi^{(A)}(\bm{\alpha}) \chi^{(B)}(\bm{-\alpha})
    \right|
    \leq \frac{\epsilon}{2}.
\end{equation}
If we denote the hypersphere as $\mathcal{A}$, its volume is given by the hypersphere volume formula: 
\begin{equation}
    |\mathcal{A}|=\int_{|\bm{\alpha}|^2 \leq \kappa M}  d^{2M} \bm{\alpha}
    =\frac{(\pi \kappa M)^M}{M!} 
    = O\left(\frac{(\mathrm{e}\pi\kappa)^M}{\sqrt{2\pi M}}\right), 
\end{equation}
where the asymptotic scaling is derived by using the Stirling's approximation. 

Therefore, we devise the following method to solve the distributed overlap estimation problem, if the states are \textit{a priori} known to be self-reflective: 
\begin{enumerate}
    \item Alice select $L$ points in the hypersphere $\mathcal{A}$, $\{\bm{\alpha}_i\}_{i=1}^L$, that satisfy $|\bm{\alpha}_i|^2 \leq \kappa M$, and communicates the selection to Bob.
    
    \item Alice and Bob each uses Lemma~\ref{supp lemma: self reflective states characteristic function estimation} to learn estimates to the characteristic function of their respective state, $\tilde{\chi}^{(A/B)}$, such that
    \begin{equation}\label{supp eqn: distributed chi errors}
        \Pr\left[
        \bigcup_{i=1}^L 
        \left|\tilde{\chi}^{(A)}(\bm{\alpha}_i) - \chi^{(A)}(\bm{\alpha}_i)\right| \geq \tilde{\epsilon}
        \right] \leq \tilde{\delta}
        ,\quad \textup{and}\quad
        \Pr\left[
        \bigcup_{i=1}^L 
        \left|\tilde{\chi}^{(B)}(-\bm{\alpha}_i) - \chi^{(B)}(-\bm{\alpha}_i)\right| \geq \tilde{\epsilon}
        \right] \leq \tilde{\delta}.
    \end{equation}

    \item They approximate the overlap expression in Equation~\ref{supp eqn: overlap in characteristic function form} using the unbiased estimator $\tilde{Y}$, defined as
    \begin{equation}\label{supp eqn: Monte carlo overlap estimator by chi}
        \tilde{Y}
        =
        \frac{|\mathcal{A}|}{\pi^M L}
        \sum_{i=1}^L
        \tilde{\chi}^{(A)}(\bm{\alpha}_i)
        \tilde{\chi}^{(B)}(-\bm{\alpha}_i).
    \end{equation}
\end{enumerate}

Using techniques from Ref.~\cite{Wu-2024-efficient_learning}, we can bound the estimation error by,  
\begin{equation}\label{supp eqn: derive bound on overlap error by chi monte carlo}
\begin{split}
    \left|\tilde{Y} - \Tr[\hat{\rho}^{(A)} \hat{\rho}^{(B)}]\right|
    =&
    \frac{1}{\pi^M} 
    \left|
    \int_{\mathds{C}^M} 
    d^{2M} \bm{\alpha}\, 
    \chi^{(A)}(\bm{\alpha}) \chi^{(B)}(\bm{-\alpha})
    -
    \frac{|\mathcal{A}|}{L}
    \sum_{i=1}^L
    \tilde{\chi}^{(A)}(\bm{\alpha_i})
    \tilde{\chi}^{(B)}(-\bm{\alpha_i})
    \right|
    \\
    \leq& 
    \frac{1}{\pi^M} 
    \left|
    \int_{\mathcal{A}} 
    d^{2M} \bm{\alpha}\, 
    \chi^{(A)}(\bm{\alpha}) \chi^{(B)}(\bm{-\alpha})
    -
    \frac{|\mathcal{A}|}{L}
    \sum_{i=1}^L
    \tilde{\chi}^{(A)}(\bm{\alpha_i})
    \tilde{\chi}^{(B)}(-\bm{\alpha_i})
    \right|
    +\frac{\epsilon}{2}
    \\
    \leq& 
    \frac{1}{\pi^M} 
    \left|
    \int_{\mathcal{A}} 
    d^{2M} \bm{\alpha}\, 
    \chi^{(A)}(\bm{\alpha}) \chi^{(B)}(\bm{-\alpha})
    -
    \frac{|\mathcal{A}|}{L}
    \sum_{i=1}^L
    {\chi}^{(A)}(\bm{\alpha_i})
    {\chi}^{(B)}(-\bm{\alpha_i})
    \right|
    \\
    &+ 
    \frac{|\mathcal{A}|}{L \pi^M}
    \left|
    \sum_{i=1}^L
    \left[
    \chi^{(A)}(\bm{\alpha}_i)\chi^{(B)}(-\bm{\alpha}_i)
    -
    \tilde{\chi}^{(A)}(\bm{\alpha_i})
    \tilde{\chi}^{(B)}(-\bm{\alpha_i})
    \right]
    \right|
    +\frac{\epsilon}{2}
\end{split}
\end{equation}
The first term in the last line is the estimation error in Monte-Carlo integration, given by $\frac{\sigma_L |\mathcal{A}|}{\pi^M\sqrt{L}}$, where $\sigma_L$ is the population variance estimated by the sample variance:
\begin{equation}
    (\sigma_L)^2 
    =
    \frac{1}{L-1}
    \sum_{i=1}^L
    \left[
    \chi^{(A)}(\bm{\alpha}_i)
    \chi^{(B)}(-\bm{\alpha}_i)
    - \frac{1}{L}
    \left(
    \sum_{j=1}^L
    \chi^{(A)}(\bm{\alpha}_j)
    \chi^{(B)}(-\bm{\alpha}_j)
    \right)
    \right]. 
\end{equation}
As long as this error remains finite, the Monte-Carlo integration error decreases monotonically as $1/\sqrt{L}$. However, in experiment we do not know the true function values, and therefore can only approximate the variance by the estimated function values:
\begin{equation}
    (\tilde{\sigma}_{L})^2
    =
    \frac{1}{L-1}
    \sum_{i=1}^L
    \left[
    \tilde{\chi}^{(A)}(\bm{\alpha}_i)
    \tilde{\chi}^{(B)}(-\bm{\alpha}_i)
    - \frac{1}{L}
    \left(
    \sum_{j=1}^L
    \tilde{\chi}^{(A)}(\bm{\alpha}_j)
    \tilde{\chi}^{(B)}(-\bm{\alpha}_j)
    \right)
    \right]. 
\end{equation}

The third term in the last line of Equation~\ref{supp eqn: derive bound on overlap error by chi monte carlo} comes from the energy constraint assumption in Equation~\ref{supp eqn: bound on overlap phase space}. 

Finally, to bound the second term in the last line of Equation~\ref{supp eqn: derive bound on overlap error by chi monte carlo}, we note that the magnitude of the characteristic function itself is bounded by the H{\"o}lder inequality for Schatten norms:
\begin{equation}
    |\chi(\bm{\alpha})|
    = 
    \Tr\left[\hat{\rho} \hat{D}(\bm{\alpha})\right]
    \leq ||\hat{D}(\bm{\alpha})||_{\infty} ||\hat{\rho}||_1 
    = 1,
\end{equation}
The operator (or spectral) norm is the largest singular value, which for a unitary operator is $||\hat{D}(\bm{\alpha})||_{\infty}=1$, and the trace norm of a density matrix is $||\hat{\rho}||_1=1$. 
This allows us to write, 
\begin{equation}
\begin{split}
    &\left| 
        \sum_{i=1}^L \left[
            \chi^{(A)}(\bm{\alpha}_i) \chi^{(B)}(-\bm{\alpha}_i) - 
            \tilde{\chi}^{(A)}(\bm{\alpha}_i) \tilde{\chi}^{(B)}(-\bm{\alpha}_i) 
        \right]
    \right|
    \\
    &\leq 
    \left|
        \sum_{i=1}^L 
        \chi^{(A)}(\bm{\alpha}_i) 
        \left[\chi^{(B)}(-\bm{\alpha}_i) - 
            \tilde{\chi}^{(B)}(-\bm{\alpha}_i) 
        \right]
    \right|
    +
    \left|
        \sum_{i=1}^L 
        \left[
            \chi^{(A)}(\bm{\alpha}_i) - 
            \tilde{\chi}^{(A)}(\bm{\alpha}_i)
        \right]
        \tilde{\chi}^{(B)}(-\bm{\alpha}_i) 
    \right|
    \\
    &\leq 
    \left|
        \sum_{i=1}^L
        \left[\chi^{(B)}(-\bm{\alpha}_i) - 
            \tilde{\chi}^{(B)}(-\bm{\alpha}_i) 
        \right]
    \right|
    \max\left(
    \left|
    \chi^{(A)}(\bm{\alpha}_i) \right|
    \right)
    + 
    \left|
        \sum_{i=1}^L 
        \left[
            \chi^{(A)}(\bm{\alpha}_i) - 
            \tilde{\chi}^{(A)}(\bm{\alpha}_i)
        \right]
    \right|
    \max\left(\left|
    \tilde{\chi}^{(B)}(-\bm{\alpha}_i) 
    \right|\right)\\
    &\leq 
    2 L \tilde{\epsilon}
\end{split}
\end{equation}
This inequality holds with probability at least $1-2\tilde{\delta}$ by applying the union bound on Equation~\ref{supp eqn: distributed chi errors}. 

Combining all three terms in Equation~\ref{supp eqn: derive bound on overlap error by chi monte carlo}, and if we require the overlap estimator to be within $\epsilon$ additive error with probability at least $1-\delta$, we have: 
\begin{equation}
    \epsilon\approx 
    \frac{\tilde{\sigma}_L |\mathcal{A}|}{\sqrt{L}\pi^M}
    + \frac{2|\mathcal{A}|\tilde{\epsilon}}{\pi^M}
    +\frac{\epsilon}{2}, 
    \quad
    \delta = 2\tilde{\delta},
\end{equation}
where $\approx$ is used because $\tilde{\sigma}_K$ is used to estimate the Monte Carlo estimation error. 
The distributed overlap estimation problem is solved if we choose: 
\begin{equation}
    \tilde{\epsilon}
    = \frac{\pi^M}{8|\mathcal{A}|}\epsilon
    , \quad 
    L = 
    \left(
    \frac{4 \tilde{\sigma}_L |\mathcal{A}|}{\pi^M \epsilon}
    \right)^2
    , \quad 
    \tilde{\delta} = \frac{1}{2}\delta.
\end{equation}
If we substitute this into the sample complexity of Lemma~\ref{supp lemma: self reflective states characteristic function estimation}, we get sample complexity,
\begin{equation}
\begin{split}
    N 
    =
    O\left(\tilde{\epsilon}^{-4} \log(L/\tilde{\delta})\right)
    &=
    O\left(
    \epsilon^{-4}
    \left(\frac{|\mathcal{A}|}{\pi^M}\right)^{4}
    \log 
    \left(\left(
    \frac{ |\mathcal{A}|}{\pi^M}
    \right)^2
    \tilde{\sigma}_L^2
    \epsilon^{-2}
    \delta^{-1}\right)
    \right)
    \\
    &=
    O\left(
    \epsilon^{-4}
    \left(\frac{(\mathrm{e}\kappa)^M}{\sqrt{M}}\right)^{4}
    \log 
    \left(
    \left(\frac{(\mathrm{e}\kappa)^M}{\sqrt{M}}\right)^2
    \tilde{\sigma}_L^2
    \epsilon^{-2}
    \delta^{-1}\right)
    \right)
\end{split}
\end{equation}
where in the last line we substituted the hypersphere formula for $|\mathcal{A}|$. 

Formally, we have the following result:
\begin{theorem}
    For distributed overlap estimation on self-reflective states, $\hat{\rho}^{(A)}$ and $\hat{\rho}^{(B)}$, given $N$ copies of each state respectively and $\epsilon, \delta\in(0, \frac{1}{2})$, if the states satisfy 
    $$
        \left|
        \frac{1}{\pi^M} \int_{|\bm{\alpha}|^2 \geq \kappa M} 
        d^{2M} \bm{\alpha} \chi^{(A)}(\bm{\alpha}) \chi^{(B)}(\bm{-\alpha})
        \right|
        \leq \frac{\epsilon}{2}
    $$
    for some energy constraint $\kappa>0$, then Alice and Bob can compute the following estimator by LOCC, 
    $$
        \tilde{O}
        =
        \frac{(\kappa M)^M}{M!} L^{-1}
        \sum_{i=1}^L
        \tilde{\chi}^{(A)}(\bm{\alpha}_i)
        \tilde{\chi}^{(B)}(-\bm{\alpha}_i),
    $$
    which satisfies 
    $
    \Pr \left[
    \left| \tilde{O} - \Tr[\hat{\rho}^{(A)} \hat{\rho}^{(B)}] \right|
    \gtrsim \epsilon
    \right] \leq \delta$,
    where $\sim$ comes from the approximation of Monte-Carlo integration error, if 
    $$
        N=O\left(
    \epsilon^{-4}
    \left(\frac{(\mathrm{e}\kappa)^M}{\sqrt{M}}\right)^{4}
    \log 
    \left(
    \left(\frac{(\mathrm{e}\kappa)^M}{\sqrt{M}}\right)^2
    \tilde{\sigma}_L^2
    \epsilon^{-2}
    \delta^{-1}\right)
    \right).
    $$
\end{theorem}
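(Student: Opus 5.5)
The proof assembles the three-term error decomposition already derived above and plugs the resulting parameter choices into Lemma~\ref{supp lemma: self reflective states characteristic function estimation}.

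First I would fix the protocol. Alice samples $L$ points $\{\bm{\alpha}_i\}$ uniformly from the hypersphere $\mathcal{A}$ and sends them to Bob, which is one round of classical communication. Each party then runs the learning procedure of Lemma~\ref{supp lemma: self reflective states characteristic function estimation} on their own $N$ copies. Alice estimates at the points $\bm{\alpha}_i$, Bob at the points $-\bm{\alpha}_i$, each to accuracy $\tilde{\epsilon}$ with failure probability $\tilde{\delta}$. Bob then sends his $L$ estimates to Alice, who forms $\tilde{O}$. This is manifestly LOCC. I note that the prefactor $\frac{(\kappa M)^M}{M!}$ in the statement equals $|\mathcal{A}|/\pi^M$, so $\tilde{O}$ coincides with $\tilde{Y}$ in Equation~\ref{supp eqn: Monte carlo overlap estimator by chi}.

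Next I would bound $|\tilde{O}-\Tr[\hat{\rho}^{(A)}\hat{\rho}^{(B)}]|$ by the triangle-inequality split in Equation~\ref{supp eqn: derive bound on overlap error by chi monte carlo}, which has three pieces.
\begin{enumerate}
\item The tail outside $\mathcal{A}$ is at most $\epsilon/2$ by the hypothesis of the theorem.
\item The learning error uses $|\chi|\le 1$ (H\"older) and $|\tilde{\chi}^{(B)}|\le 1+\tilde{\epsilon}$, or simply clips $\tilde{\chi}$ to the unit disk. Together with a union bound over Alice's and Bob's failure events, this gives at most $\frac{|\mathcal{A}|}{\pi^M}\,2\tilde{\epsilon}$, up to a constant, with probability at least $1-2\tilde{\delta}$.
\item The Monte-Carlo error is controlled by $\frac{\sigma_L|\mathcal{A}|}{\pi^M\sqrt{L}}$.
\end{enumerate}
With $\tilde{\epsilon}=\frac{\pi^M\epsilon}{8|\mathcal{A}|}$, the second piece becomes $\epsilon/4$. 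With $L=(4\tilde{\sigma}_L|\mathcal{A}|/\pi^M\epsilon)^2$, the third becomes about $\epsilon/4$. The total is then $\lesssim\epsilon$ with probability $\ge 1-\delta$ when $\tilde{\delta}=\delta/2$.

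Finally I would substitute these choices into $N=O(\tilde{\epsilon}^{-4}\log(L/\tilde{\delta}))$. This gives $N=O(\epsilon^{-4}(|\mathcal{A}|/\pi^M)^4\log(\cdots))$. Using $|\mathcal{A}|/\pi^M=(\kappa M)^M/M!=O((\mathrm{e}\kappa)^M/\sqrt{M})$ from Stirling's formula, this is the stated bound. Alice and Bob each need this many copies, since the procedure is run once per party.

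The main obstacle is the Monte-Carlo term. The error $\sigma_L/\sqrt{L}$ is only a standard-deviation estimate, not a high-probability bound, and $\sigma_L$ is replaced by the empirical $\tilde{\sigma}_L$ built from noisy $\tilde{\chi}$. This is exactly why the statement carries the ``$\gtrsim$'' caveat.

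To make the bound more rigorous, I would note that the summands $\chi^{(A)}(\bm{\alpha}_i)\chi^{(B)}(-\bm{\alpha}_i)$ have modulus at most $1$. Hoeffding's inequality, Equation~\ref{supp eqn: hoeffding}, applied separately to real and imaginary parts, then bounds the sampling error by $O(\sqrt{\log(1/\delta)/L})$ times $|\mathcal{A}|/\pi^M$ with probability $1-\delta/3$. This only changes $L$ by logarithmic factors, which affect $N$ solely through the logarithm. I would also absorb into constants the effect of using $\tilde{\sigma}_L$ in place of $\sigma_L$, since the two differ by $O(\tilde{\epsilon})$.
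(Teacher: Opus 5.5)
Your proposal is correct and follows the paper's proof essentially step for step. It uses the same LOCC protocol (shared sample points, with Lemma~\ref{supp lemma: self reflective states characteristic function estimation} applied locally), the same three-term triangle-inequality split, the same choices $\tilde{\epsilon}=\pi^M\epsilon/(8|\mathcal{A}|)$, $L=(4\tilde{\sigma}_L|\mathcal{A}|/\pi^M\epsilon)^2$ and $\tilde{\delta}=\delta/2$, and the same Stirling substitution for $|\mathcal{A}|/\pi^M=(\kappa M)^M/M!$.

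Your extra remarks tighten two steps the paper treats loosely. The first is bounding $|\tilde{\chi}^{(B)}|\le 1+\tilde{\epsilon}$, or clipping. The second is replacing the heuristic $\sigma_L/\sqrt{L}$ term by a Hoeffding bound, which also removes the circular dependence of $L$ on $\tilde{\sigma}_L$. If you adopt the Hoeffding event, rebalance the failure budget, e.g.\ $\tilde{\delta}=\delta/3$, so that the three failure probabilities still sum to $\delta$.
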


Therefore, if $\kappa \geq \mathrm{e}^{-1}$, this is still an exponential sample complexity, despite the efficient protocol for learning $\chi^{(A/B)}(\bm{\alpha})$ themselves. 

\section{Photon source}

Our experimental apparatus consists of three major components: a spontaneous parametric downconversion (SPDC) photon source, a chip-based photonic circuit and eight superconducting nanowire single-photon detectors (SNSPDs) for photon detection. In this section we give an overview of the photon source, before focusing in the next section on the details of the photonic chip. 

Our SPDC source consists of a periodically-poled potassium titanyl phosphate (ppKTP) waveguide pumped by Spectra-Physics Mai Tai, a femtosecond, wideband, mode-locked Ti:Sapphire Laser. The pump pulses have repetition rate of $80$~MHz. The spectrum of the pump pulses are filtered to $775\pm 0.9$~nm full-width half-maximum using a pair of 3~nm bandpass filters (Thorlabs FBH780-3). The type-II SPDC process on the ppKTP waveguide (AdvR Inc.) generates a pair of collinear, orthogonally polarised signal and idler modes at $1550$~nm. After the pump field is removed by a dichroic mirror followed by optical spectral filters, the signal and idler modes are separated by a polarising beamsplitter (PBS) and coupled into polarisation-maintaining (PM) fibres, which are connected to the photonic chip. The output from the photonic chip are sent by single-mode fibres into $8$ SNSPDs for photon detection. 

The signal and idler modes form a two-mode squeezed vacuum (TMSV) state, which can be described by 
\begin{equation}\label{supp eqn: tmsv}
    |\textup{TMSV}\rangle = \sqrt{1-|\lambda|^2} \sum_{n=0}^{\infty} \lambda^{n} |n\rangle_s |n\rangle_i,
\end{equation}
where $\lambda$ is a parameter that depends on the pump field and the ppKTP crystal, and subscripts $s, i$ are used to denote the signal and idler modes. 

In our experiment, we keep the pump field low at $65\mu$W. The total coincidence rate between the signal and idler modes, sent through the photonic chip and detected on SNSPDs, is measured to be $5500$ coincidences per second. To measure the higher photon number events, we utilise a Hanbury Brown and Twiss style interferometer. We block either signal or idler mode, and split the other mode on a balanced beamsplitter before detection on SNSPDs. The coincidence rate measures the probability of having $|n\rangle_s|n\rangle_i$ components with $n\geq 2$ in Equation~\ref{supp eqn: tmsv}. At the current pump intensity, we measure this coincidence rate to be lower than $50$ per second. 

Neglecting the low-rate high-photon-number events, our SPDC source can approximate a probabilistic two-photon source. During the experiment, we only post-select outcomes where two photons are measured on the detectors. By this post-selection, we can also remove the effects of photon loss in the experiment. 

The indistinguishability of our two-photon source is measured by a Hong-Ou-Mandel (HOM) sweep. The fibre coupler for the signal mode is placed on top of a piezo-electric translation stage. The signal and idler modes are interfered on a balanced beamsplitter implemented on the photonic chip. The output coincidences are measured against translation stage position. The visibility, $V=(P_{\textup{max}}-P_{\textup{min}})/{P_{\textup{max}}}$, of the HOM dip was measured to be $95.40\%$, as shown in Fig.~\ref{supp fig: HOM}. 

\begin{figure*}[h]
    \includegraphics[width=0.9\textwidth]{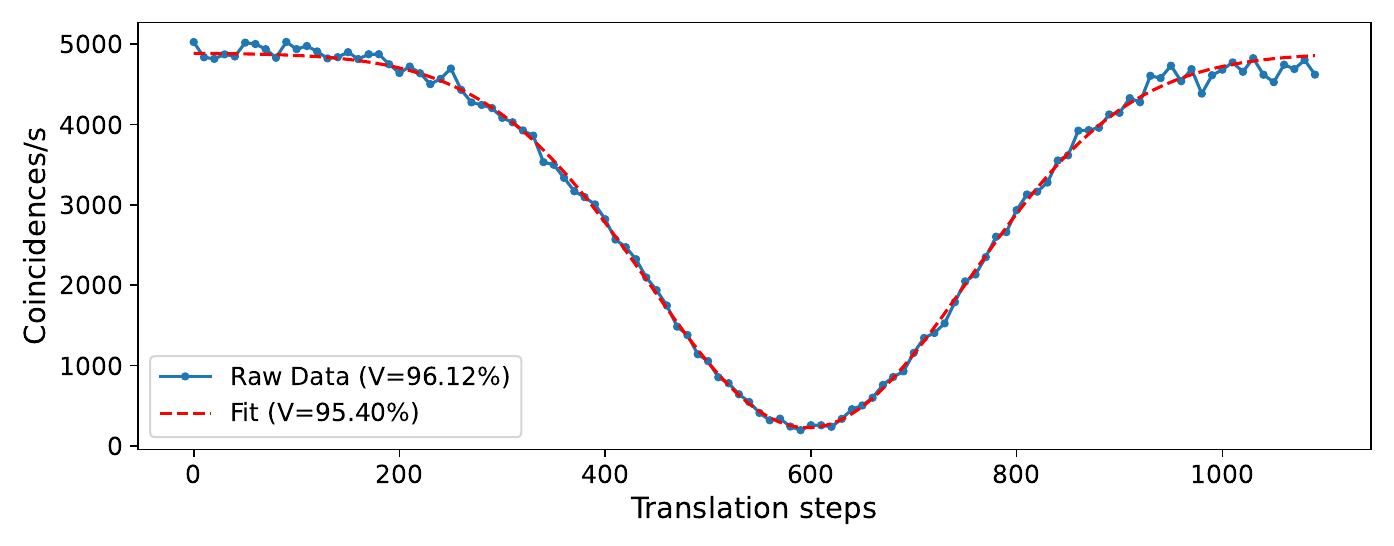}
    \caption{\label{supp fig: HOM} HOM sweep result. The visibility directly calculated from the raw coincidence data differ slightly from the fitting result to a Gaussian model. }
\end{figure*}

\section{Photonic integrated circuit}

\subsection{Setup}
The photonic integrated circuit (or chip) used in this work was fabricated by a multi-project wafer run at LioniX International BV. The TriPleX waveguide technology was used to fabricate waveguides based on alternating layers of silicon nitride and silicon dioxide~\cite{roeloffzenLowLossSi3N4TriPleX2018}. A reconfigurable linear optical circuit is implemented on the chip by use of thermo-optically tunable phaseshifters. These are sections of the waveguides with refractive index that can be thermally tuned to change the relative phase that photons pick up when passing through. In our chip, a $1$~mm long platinum resistor is placed on top of each phaseshifter and thermal tuning is achieved by passing current through the resistor. The power dissipation per phaseshifter is around $700$~mW, or around $12$~V and $60$~mA, for a $2\pi$ phase shift. This adds up to $70$~W maximum power consumption for a total of $100$ phaseshifters on the chip, though typically in practice we operate the chip at much lower power. 

The chip is optically packaged with polarisation-maintaining (PM) fibre arrays for both input and output coupling, provided by PHIX Photonics Assembly BV. Light is coupled from these fibre arrays into the chip by edge coupling. Coupling efficiency is measured to be as high as $88$\% per facet. The end-to-end transmission varies between different input and output modes and is typically around $50$\%, which includes effects of propagation loss and coupling loss. 

The remaining electrical and electronic packaging was done in-house. We use a $120$-channel source measurement system (XPOW-120AX-CCvCV-U by Nicslab Ops, Inc.) to supply independent voltage/current sourcing and measurement to the thermo-optical phaseshifters on the chip. We operate it in constant current mode, which has a $16$-bit resolution for each channel, and set the output range per channel to $0-100$~mA and $0-15$~V. The system is connected to a computer via USB and is controlled by a self-written Python package for calibration and control. 

The chip itself is mounted on a gold-plated copper base for thermal dissipation. Also mounted on the base is a printed circuit board for electrical connections. Inside the copper base is a thermistor for temperature sensing, which feeds back to a thermoelectric temperature controller (TEC) (Model 350B by Newport Corporation). Beneath the chip mount is a Peltier cooler and a copper heat sink. The Peltier cooler works by maintaining a constant temperature difference between its upper and lower surface. A fan is used to blow air through the copper heat sink for better heat dissipation. The TEC has maximum output power of $55$~W. While slightly short of the $70$~W maximum power dissipation power on chip, it suffices for the operations in this work. When in operation, we set the TEC to maintain the thermistor inside the copper base at $28.0\pm 0.1^{\circ}$C temperature. 

The chip implements a $10$-mode universal linear-optical interferometer that follows the Bell decomposition scheme~\cite{bell-2021-further_compact}. The basic unit for control in the Bell scheme is a symmetric Mach-Zehnder Interferometer (MZI) as shown in Fig.~\ref{supp fig: chip scheme}(c). Each MZI consist of two directional couplers (DCs) acting as balanced beamsplitters and two internal phaseshifters in between, one on each arm. 

A labelled schematic of the chip is shown in Fig.~\ref{supp fig: chip scheme}(a). The optical modes are labelled from $0$ to $9$. Each phaseshifter is labelled with a unique tuple $(i,j)$, where $i,j$ run from $0$ to $9$, $i$ being the row index (which is also mode index) and $j$ being the column index. The MZIs, for convenience, share the label of their upper phaseshifter. As a result, MZI labels $(i,j)$ must satisfy $i+j$ being an even number. In the Bell-scheme, there are standalone `external' phaseshifters on alternating columns on mode $0$ and $9$ for full phase programmability. These are labelled as $(0,j)$ and $(9,j)$ for odd $j$. 

\begin{figure*}[t]
    \includegraphics[width=0.9\textwidth]{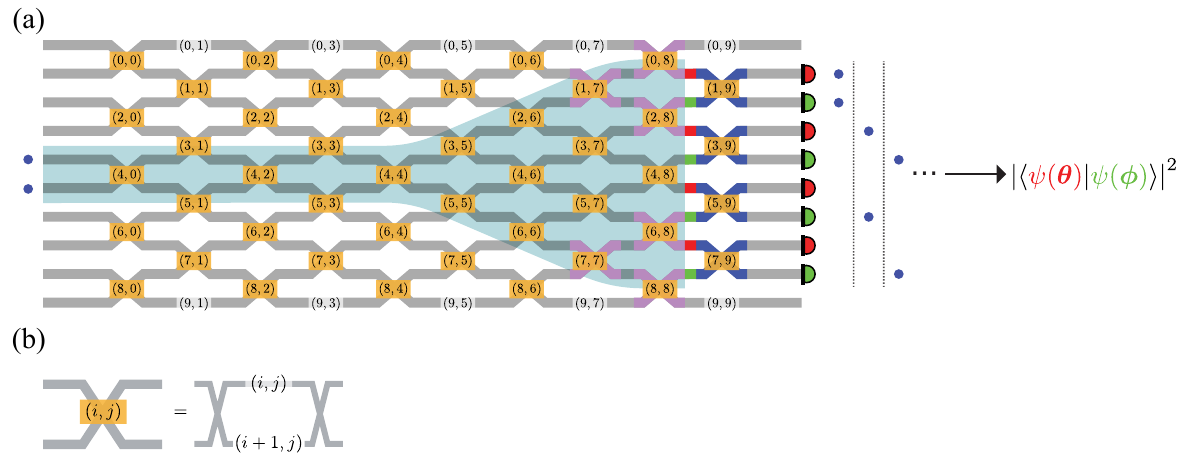}
    \caption{\label{supp fig: chip scheme}(a) Labelled scheme. MZIs and phaseshifters are labelled by orange and white background boxes respectively. Grey lines are waveguides, with DCs omitted and MZIs simplified as waveguide crossings. The six MZIs for active phase encodings are coloured by purple. After column $8$, one qudit state $|\psi(\bm{\theta})\rangle$ is encoded in modes $1,3,5,7$, coloured red; another qudit state $|\psi(\bm{\phi})\rangle$ is encoded in modes $2,4,6,8$, coloured green. The two qudit states are interfered mode-wise on four MZIs that implement balanced beamsplitters on column $9$, coloured blue. The output is measured by eight SNSPDs on modes $1$ to $8$. The output coincidence distribution directly estimates the state overlap. (b) Layout of a symmetric MZI labelled $(i,j)$, with internal phaseshifters labelled $(i,j)$ and $(i+1, j)$. }
\end{figure*}

\subsection{Control}
The action of an ideal phaseshifter applies some phase $\theta$ to a single optical mode, which can be represented by the transfer matrix
\begin{equation}
    \begin{pmatrix}
        1 & 0 & \cdots & \\ 
        0 & \ddots & &  \\ 
        \vdots &  & e^{i\phi} & \\
        & & & \ddots & \\ 
        & & & & 1 
    \end{pmatrix}. 
\end{equation}

In a thermo-optic phaseshifter, a change in temperature causes a proportionate change in refractive index depending on the thermo-optic coefficient of the material. This change is typically linear with negligible second order term~\cite{hryciwThermoopticTuningErbiumdoped2011, arbabiMeasurementsRefractiveIndices2013}. Therefore, we can assume a linear relationship between the dissipated electric power $P_{\text{el}}=VI$ in the resistor and the induced phase change $\Delta \theta$, i.e.
\begin{equation}
    \Delta \theta \propto \Delta n_{\text{eff}} \propto \Delta T \propto P_{\text{el}}, 
\end{equation}
where $\Delta n_{\text{eff}}$ is the change in effective refractive index of the waveguide segment, and $\Delta T$ is the change in temperature. The proportionality constant depends on the geometry of the waveguide and the material properties.

The XPOW source measurement system can measure both voltage and current. However, the voltage measurement does not accurately reflect the voltage drop across the phaseshifter due to non-zero resistance on the shared ground. Therefore, we operate the XPOW in constant current mode and only use the current measurement. While theoretically the disspated power should be $P_{\text{el}}=I^2 R$, the resistance of the phaseshifter also changes as it heats up and becomes a function of the current. This effect is non-negligible especially since the $2\pi$ power is as high as $700$~mW. Therefore, we write out the phase change as a polynomial function of the current: 
\begin{equation}
    \theta = b + k^{(2)} I^2 + k^{(3)} I^3 + \cdots,
\end{equation}
where $b$ is the residual phase when no current is applied. Given a set of calibration data of applied current versus phase change, a polynomial fit can be used to determine the coefficients up to a cut-off. In practice, we store the raw calibration data and use numerical interpolation to determine the required current for a target phase.

The transfer matrix of a directional coupler (DC) can be written in the form of~\cite{bandyopadhyayHardwareErrorCorrection2021}
\begin{equation}
    \begin{pmatrix}
        \cos(\frac{\pi}{4} + \alpha) & i \sin(\frac{\pi}{4} + \alpha) \\
        i \sin(\frac{\pi}{4} + \alpha) & \cos(\frac{\pi}{4} + \alpha)
    \end{pmatrix}, 
\end{equation}
where $\alpha$ is a small error term accounting for fabrication imperfections. For an ideal balanced DC, $\alpha=0$. The unitary transfer matrix of a symmetric MZI (Fig.~\ref{supp fig: chip scheme}(b)) with internal phaseshifters $\theta_1,\theta_2$ and DC errors $\alpha, \beta$ can then be derived as
\begin{equation}\label{supp eqn: MZI transfer matrix}
    \tilde{T}(\theta_1, \theta_2, \alpha, \beta)
    =
    \begin{pmatrix}
        \cos\beta & i\sin\beta \\
        i\sin\beta & \cos\beta
    \end{pmatrix}
    T(\theta_1, \theta_2)
    \begin{pmatrix}
        \cos\alpha & i\sin\alpha \\
        i\sin\alpha & \cos\alpha
    \end{pmatrix},
\end{equation}
where $T(\theta_1, \theta_2)$ is the ideal MZI transfer matrix (when $\alpha=\beta=0$), given by
\begin{equation}
    T(\theta_1, \theta_2)
    =
    i e^{i\Sigma}
    \begin{pmatrix}
        \sin\delta & \cos\delta \\
        \cos\delta & -\sin\delta
    \end{pmatrix},
\end{equation}
for MZI phase $\Sigma = (\theta_1 + \theta_2)/2$ and internal phase $\delta = (\theta_1 - \theta_2)/2$. If $\theta_{1,2}$ are fully tunable in range $[0,2\pi]$, then $\Sigma$ can cover $[0,2\pi]$ and $\delta$ can cover $[-\pi, \pi]$. The $\Sigma$ and $\delta$ phase are the two degrees of freedom for reconfiguring each MZI. The rest of the text sometimes refer to them as the `MZI phase' and `internal phase' respectively, or simply as `$\Sigma/\delta$ phase'. 

There are three special states of the MZI, namely the `cross' state, the `bar' state, and the `balanced' state, which correspond to setting the $\delta$ phase to $0$, $\frac{\pi}{2}$ or $\frac{\pi}{4}$ respectively. 

Section~\ref{supp section: calibration} will describe the calibration scheme for the chip. The goal of the calibration is to find a mapping between the current values supplied via the XPOW unit and the phases on each phaseshifter. If all phases are fully tunable between $[0, 2\pi]$, then the rectangular mesh of phaseshifters and MZIs shown in Fig.~\ref{supp fig: chip scheme}(a) is sufficient to implement any arbitrary $10\times 10$ unitary matrix up to a global phase on each mode. By adding an additional layer of stand-alone phaseshifters on the input and output modes respectively, our chip layout would recover the universal linear optical scheme proposed in Ref.~\cite{bell-2021-further_compact}. 

However, for linear-optical experiments where the input light is a multi-mode Fock state and the output of the circuit is directly measured by phase-insensitive photon detectors, the input and output phaseshifters do not change experimental outcome and can be omitted. In the next section, we will describe how an overlap estimation circuit is implemented on chip for our experiment. 

\subsection{Overlap estimation circuit}
The overlap estimation circuit is shown in Fig.~\ref{supp fig: chip scheme}(a). The input are two single photons sent into modes $4$ and $5$ respectively (modes labelled from $0$ to $9$). Hence, the input state can be described by $\hat{a}_4^\dagger \hat{a}_5^\dagger |0\rangle$, where $\hat{a}_{4,5}^\dagger$ are the creation operators for the two modes. 

We wish to encode two four-mode qudit states, which occupy modes $1,3,5,7$ and $2,4,6,8$ respectively, that are described by:  
\begin{equation}\label{supp eqn: encoded qudits}
\begin{split}
    |\bm{\theta}\rangle = |\psi(\bm{\theta})\rangle 
    &=
    \left(
    A_0 \hat{a}_{1}^\dagger + 
    A_1 e^{i\theta_1} \hat{a}_3^\dagger +
    A_2 e^{i(\theta_1+\theta_2)} \hat{a}_5^\dagger + 
    A_3 e^{i(\theta_1+\theta_2+\theta_3)} \hat{a}_7^\dagger
    \right) |0\rangle, \\
    &= 
    e^{i(\theta_1+\theta_2-\frac{\pi}{2})}
    \left(
    A_0 e^{-i(\theta_1+\theta_2-\frac{\pi}{2})} \hat{a}_{1}^\dagger + 
    A_1 e^{-i(\theta_2-\frac{\pi}{2})} \hat{a}_3^\dagger +
    A_2 e^{i\frac{\pi}{2}} \hat{a}_5^\dagger + 
    A_3 e^{i(\theta_3+\frac{\pi}{2})} \hat{a}_7^\dagger
    \right) |0\rangle
    \\
    |\bm{\phi}\rangle = |\psi(\bm{\phi})\rangle 
    &=
    \left(
    A_0 \hat{a}_{2}^\dagger + 
    A_1 e^{i\phi_1} \hat{a}_4^\dagger +
    A_2 e^{i(\phi_1+\phi_2)} \hat{a}_6^\dagger + 
    A_3 e^{i(\phi_1+\phi_2+\phi_3)} \hat{a}_8^\dagger
    \right) |0\rangle, \\
    &=
    e^{i(\phi_1+\frac{\pi}{2})}
    \left(
    A_0 e^{-i(\phi_1+\frac{\pi}{2})} \hat{a}_{2}^\dagger + 
    A_1 e^{-i\frac{\pi}{2}} \hat{a}_4^\dagger +
    A_2 e^{i(\phi_2-\frac{\pi}{2})} \hat{a}_6^\dagger + 
    A_3 e^{i(\phi_2+\phi_3-\frac{\pi}{2})} \hat{a}_8^\dagger
    \right) |0\rangle,
\end{split}
\end{equation}
where we denote the qudit states as $|\bm{\theta}\rangle, |\bm{\phi}\rangle$ as shorthand. In the second line of each qudit expression, we extracted a global phase that can be ignored in the encoding. The result is that the phases are set relative to output modes $4$ and $5$, which sit in the middle of the chip. This way, we can keep MZI $(4,8)$ at a low power setting. This separates the actively tuned phaseshifters in column $8$ and reduces the crosstalk effects. 

The amplitudes $\{A_i\}_{i=0}^3$ are fixed and chosen to be 
\begin{equation}\label{supp eqn: qudit amplitude values}
    \begin{cases}
        A_0 &= \sin(\varphi_1/2) \\
        A_1 &= \cos(\varphi_1/2) \sin(\varphi_2/2) \\
        A_2 &= \cos(\varphi_1/2) \cos(\varphi_2/2) \sin(\varphi_3/2) \\
        A_3 &= \cos(\varphi_1/2) \cos(\varphi_2/2) \cos(\varphi_3/2) 
    \end{cases}
    \quad 
    \textup{for} 
    \quad 
    \begin{cases}
        \varphi_1 &= 0.86231713 \\
        \varphi_2 &= 1.34230503 \\
        \varphi_3 &= 1.66199945
    \end{cases}.
\end{equation}
The values are chosen by optimising the Support Vector Machine's (SVM) classification accuracies for the three datasets in Section~\ref{supp section: classification task}. 

In columns $0$ to $8$ of the chip, we implement a unitary $\bm{U}$, such that 
\begin{equation}\label{supp eqn: circuit unitary expression}
\begin{split}
    \bm{U}(\bm{\theta}, \bm{\phi})\left(\hat{a}_4^\dagger \hat{a}_5^\dagger |0\rangle\right)
    =
    &\left(
    A_0 e^{i(\frac{\pi}{2}+\phi_{(0,8)} + \phi_{(1,7)})} \hat{a}_{1}^\dagger + 
    A_1 e^{i(\frac{\pi}{2}+\phi_{(2,8)} + \phi_{(1,7)})}  \hat{a}_3^\dagger +
    A_2 e^{i\frac{\pi}{2}} \hat{a}_5^\dagger + 
    A_3 e^{i(\frac{\pi}{2} + \phi_{(6,8)})} \hat{a}_7^\dagger
    \right)\\
    &\left(
    A_0 e^{i(-\frac{\pi}{2}+\phi_{(2,8)})} \hat{a}_{2}^\dagger + 
    A_1 e^{-i\frac{\pi}{2}} \hat{a}_4^\dagger +
    A_2 e^{i(-\frac{\pi}{2} + \phi_{(6,8)} + \phi_{(7,7)})} \hat{a}_6^\dagger + 
    A_3 e^{i(-\frac{\pi}{2}-\phi_{(8,8)}-\phi_{(7,7)})} \hat{a}_8^\dagger
    \right)
    |0\rangle.
\end{split}
\end{equation}
Implementation of this unitary is not unique. We adopt the circuit shown in Figure~\ref{supp fig: chip scheme}. In columns $0$ to $4$, the internal phases of MZIs on the route of the light are set to $\delta=\frac{\pi}{2}$, such that the two photons propagate almost uninterfered with each other. The internal phases of MZIs on columns $5$ to $8$ are set to implement the amplitude values $\{A_i\}_{i=0}^3$ in Equation~\ref{supp eqn: qudit amplitude values}. The MZI phases in columns $5$ and $6$ are set to equal on each active MZI. 

The relative phases $\phi_{(i,j)}$ in Equation~\ref{supp eqn: circuit unitary expression} are MZI phases relative to the column reference phase. To be more specific, the relative phase in column $8$ is the $\Sigma_{(4,8)}$ phase encoded on MZI $(4,8)$. In Column $7$, MZIs $(3,7)$ and $(5,7)$ are set to have the same $\Sigma$ phase and serve as the relative phase. Hence, by tuning the MZI phases (i.e. the $\Sigma$ phases) on the following six MZIs, we can encode arbitrary qudits in the form of Equation~\ref{supp eqn: encoded qudits}: MZIs $(0,8), (2,8), (1,7), (6,8), (8,8), (7,7)$. 


Finally, the four MZIs $(1,9), (3,9), (5,9), (7,9)$ in column 9 are set to have internal phase $\delta=\frac{\pi}{4}$ to implement balanced beamsplitters:
\begin{equation}
    T_{\textup{balanced}} \propto 
    \frac{1}{\sqrt{2}}
    \begin{pmatrix}
        1 & 1 \\ 1 & -1 
    \end{pmatrix}.
\end{equation}
The relative $\pi$ phase difference between neighbouring modes in Equation~\ref{supp eqn: circuit unitary expression} evens out the input phase difference in $T_{\textup{balanced}}$. 

The output modes from $1$ to $8$ are connected to eight SNSPDs. We measure and post-select two-photon outputs to trace out the effects of loss. Splitting the output modes into two registers, we showed in the main text that coincidence probability across the two registers allows us to estimate the overlap between the qudit states. In our scheme, the two registers happen to be odd-number-labelled modes $1,3,5,7$ and even-number-labelled modes $2,4,6,8$. Therefore, we divide the two-photon outputs into even-parity events, i.e. coincidence events between modes $k$ and $l$ where $k+l$ is an even number; and odd-parity events where $k+l$ is an odd number. If we collect $N$ total samples that measure two photons, which include $N_\textup{odd}$ odd-parity events, the overlap is estimated by 
\begin{equation}
    \left|
    \langle \psi(\bm{\theta})| \psi(\bm{\phi})\rangle \right|^2
    \approx 
    1 - 2(1-R) \frac{N_\textup{odd}}{N},
\end{equation}
where $R=\sum_{i=0}^3 A_i^4 < 1$. This equals the bunching probability, or probability of having two photons in the same output mode that cannot be distinguished by click detectors such as SNSPDs. If photon-number resolving detectors (PNRDs) were used, then we do not need this rescaling factor.

\subsection{Calibration}\label{supp section: calibration}

\begin{figure*}[t]
    \includegraphics[width=0.9\textwidth]{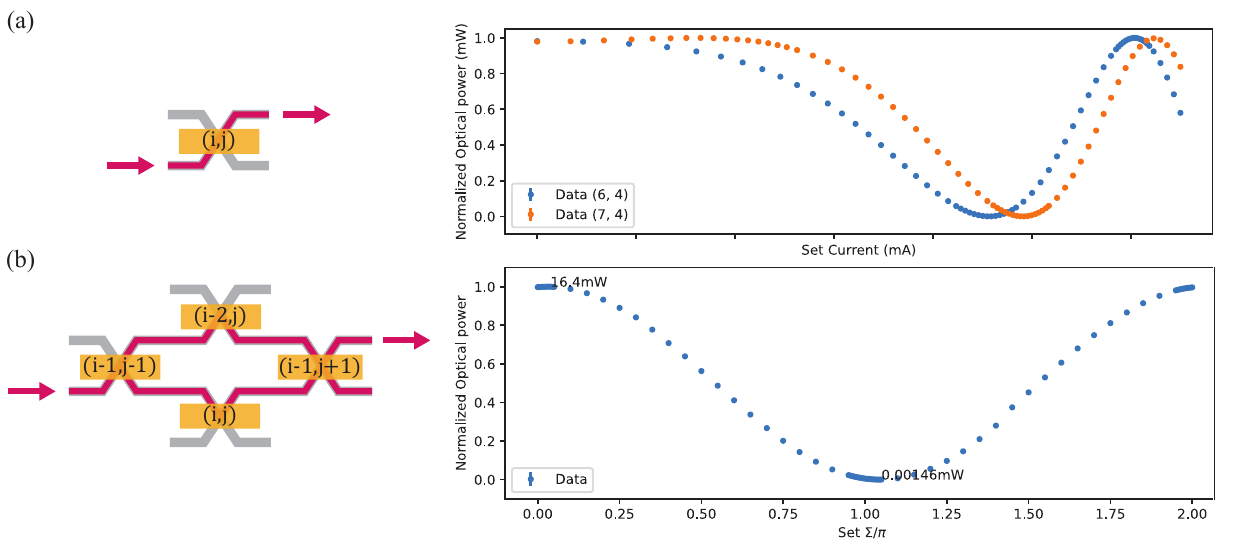}
    \caption{\label{supp fig: chip calibration} (a) Calibration of internal phaseshifters and the $\delta$ phase of MZI $(i,j)$. (b) Calibration of the $\Sigma$ phase of MZI $(i,j)$. }
\end{figure*}

The goal of calibration is to find a mapping between the current values applied to the electric channels and the phase values of each MZI and phaseshifter. 

For an MZI described by Equation~\ref{supp eqn: MZI transfer matrix}, if we send a coherent state with unit intensity into the lower input mode, then power in the upper output mode is: 
\begin{equation}
    P = \sin^2(\beta-\alpha) + 
    \frac{\cos^2(\beta+\alpha) - \sin^2(\beta-\alpha)}{2} 
    \left(\cos(\theta_1-\theta_2)+1\right). 
\end{equation}
We label the two internal phaseshifters as $(i,j)$ and $(i+1,j)$, and express the internal phases as 
\begin{equation}
    \theta_{i,j} = \Delta_{i,j}(I_{i,j})+b_{i,j}, 
    \quad 
    \theta_{i+1,j} = \Delta_{i+1,j}(I_{i+1,j})+b_{i+1,j}, 
\end{equation}
where $b$ is the residual phase when no current is applied, and $\Delta(I)$ is the phase change that is a function of the applied current $I$. 
If we sweep the current on each internal phaseshifter, the output power follows a sinusoidal curve as shown in Fig.~\ref{supp fig: chip calibration}(a). From this, we reconstruct the relationship between $\Delta(I)$ and $I$ for each internal phaseshifter, as well as the residual relative phase, $b_{i,j}-b_{i+1,j}$. This enables us to set the internal $\delta$ phase of the MZI. 

However, to set the $\Sigma$ phase of the MZI, we also need to know $b_{i,j}+b_{i+1,j}$ relative to the next MZI in the column. To do so, we construct a `meta-MZI' as shown in Fig.~\ref{supp fig: chip calibration}(b) using four MZIs, which are labelled $(i-2,j), (i,j), (i-1,j-1), (i-1,j+1)$. The target is MZI $(i,j)$, which is set to bar state that fixes $\delta_{i,j}=\frac{\pi}{2}$ and leaves $\Sigma_{i,j}$ as the free parameter to be calibrated. The MZI $(i-2, j)$ is also set to cross state, with $\delta_{i-2,j}=0$ and $\Sigma_{i-2,j}$ fixed. The MZIs $(i-1,j-1)$ and $(i-1,j+1)$ are set to balanced states, with $\delta_{i-1,j-1}=\delta_{i-1,j+1}=\frac{\pi}{4}$ and $\Sigma_{i-1,j-1}, \Sigma_{i+1,j-1}$ arbitrary as they do not affect the output power.

If light is sent into the lower input port of MZI $(i-1, j-1)$, then the upper output power of MZI $(i-1, j+1)$ can be modelled as 
\begin{equation}
    P = a + c \cos\left(\Sigma_{i,j} - \Sigma_{i-2,j} + \xi\right)
\end{equation}
where $a, c$ are constants that depend on the errors of the effective DCs as well as input power, channel efficiency and detector efficiency. The phase $\xi$ is an error term that's due to the DC errors inside MZI $(i-1,j-1)$ and MZI $(i-1,j+1)$. Imperfect DCs inside an MZI results in a small relative phase between its two outputs, which become a phase error in the meta-MZI. By sweeping the relative MZI phase, $\Sigma_{(i,j)}-\Sigma_{(i-2,j)}$, we get a sinusoidal output power curve as shown in Fig.~\ref{supp fig: chip calibration}(b), whose offset phase estimates the relative residual phase $b_{i,j}+b_{i+1,j} - b_{i-2,j} - b_{i-1,j}$. This enables $\Sigma$ phase tuning on MZI $(i,j)$. 

Additionally, we can also calibrate the directional coupler errors of the chip from the visibilities of the sinusoidal sweeps. Among the active MZIs used in this experiment, the median splitting ratio of directional couplers is $52.8:47.2$. 

Relative channel efficiencies can be estimated by directing light into different output modes by setting the appropriate MZIs and comparing the output power, accounting for the MZI encoding errors. 

Finally, crosstalk effects between MZIs were calibrated by repeating calibrations at various phase settings of neighbouring MZIs. The calibration results can be interpolated to reconstruct a look-up table that describes phase changes due to crosstalk for each location and and source of crosstalk. 

After calibration, the resulting circuit fidelity is shown in Fig.~\ref{supp fig: chip output}. Fig.~\ref{supp fig: chip output}(a) shows the fidelity of the output coincidence distribution: $F=\sum_{(i,j)} \sqrt{P^{\textup{(ideal)}}_{(i,j)} P^{\textup{(exp)}}_{(i,j)}}$, where $P_{(i,j)}$ is the coincidence probability between modes $i$ and $j$, and superscripts $\textup{(ideal)}$ and $\textup{(exp)}$ indicate ideal calculations and experimental estimates, respectively. The average fidelity is $0.980\pm0.006$ across the experiments conducted in this work. 

The resulting overlap estimation accuracy is visualised in Fig.~\ref{supp fig: chip output}(b). As is clear from the fidelity values as well as the overlap values, there is still room for improvement in the calibration scheme, especially for better crosstalk mitigation.

\begin{figure*}[t]
    \includegraphics[width=0.9\textwidth]{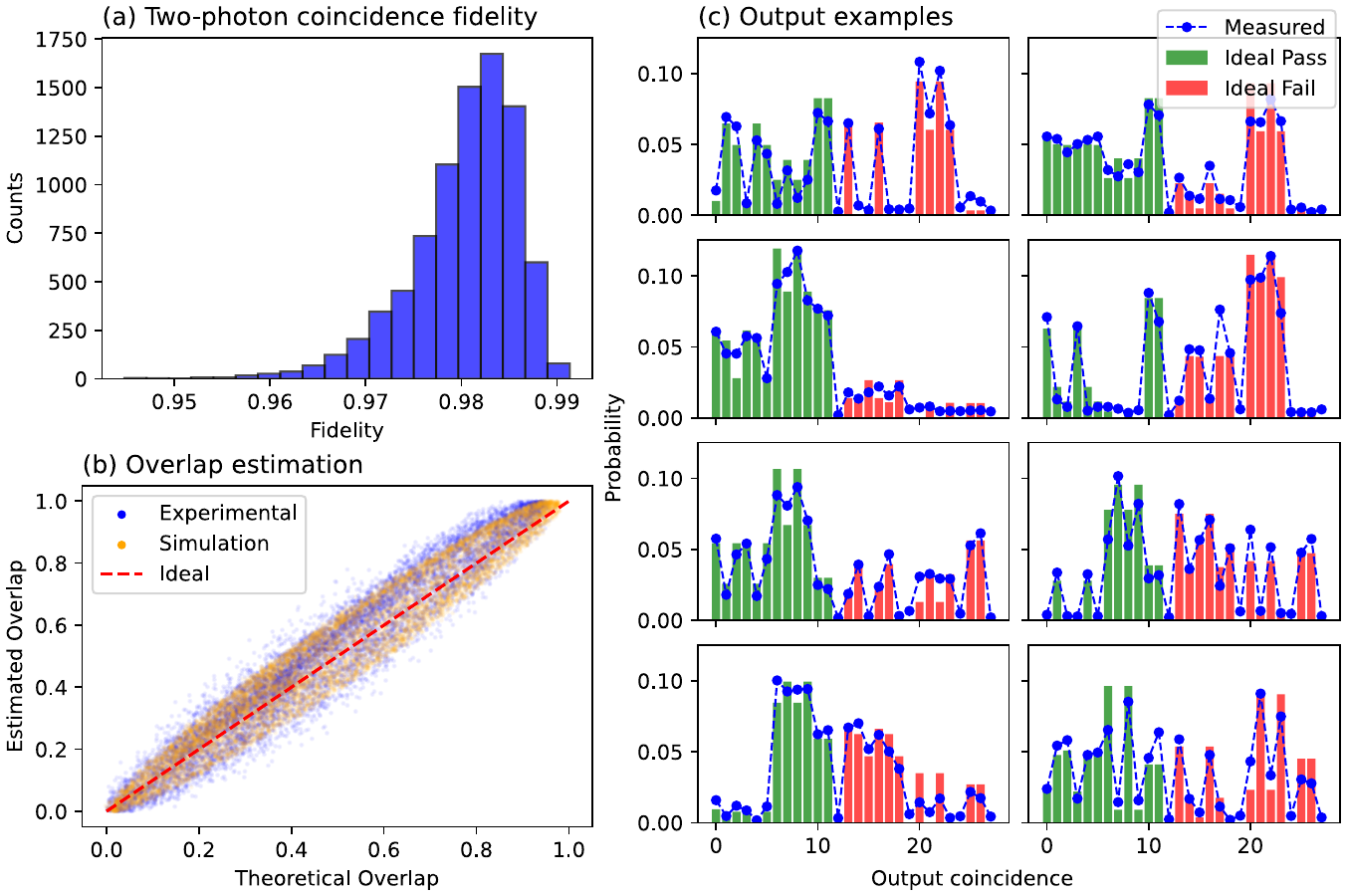}
    \caption{\label{supp fig: chip output} (a) Fidelity of the output coincidence distribution. Data in this figure are taken from the classification experiment for overlapping blobs dataset. (b) Scatter plot of experimentally estimated and simulated overlap values against theoretical overlap values. (c) Example output coincidence distributions for eight random circuit encodings. Green bars represent even-parity coincidence outcomes, denoted `Pass'. Red bars represent odd-parity coincidence outcomes, denoted `Fail'. Blue dashed line plots the experimentally measured output distribution. }
\end{figure*}

\subsection{Simulator}
We also designed a simple simulator that emulates the thermal crosstalk between phaseshifters. The single-phaseshifter sweeps in the previous section returns a phase-current relationship. If we set a phase according to this relationship, the phase value would be proportional to the local electric heat, but does not take into account any heat dissipation to neighbouring phaseshifters. If we denote the phase we set onto the phaseshifter $(i,j)$ according to single phaseshifter calibration result as $\tilde{\theta}_{i,j}$, and the actual phase change induced on the phaseshifter as $\theta_{i,j}$, then we use a simple mathematical relationship to emulate the crosstalk effect:
\begin{equation}\label{supp eqn: crosstalk model}
    \theta_{i,j}
    = 
    \tilde{\theta}_{i,j} + 
    \left(
    \sum_{l,m=0}^{9}
    K_{(i,j), (l.m)} 
    \tilde{\theta}_{l,m} \right)
    \left(
    1 + \eta_{i,j} \tilde{\theta}_{i,j}
    \right)
    + \epsilon_{i,j},
\end{equation}
where $K_{(i,j),(l,m)}$ is a coefficient that quantifies the proportion of heat leaked from phaseshifter $(l,m)$ to phaseshifter $(i,j)$, $\eta_{i,j}$ is a coefficient that we use to capture any non-linear crosstalk effects, and $\epsilon_{i,j}$ is a calibration error term. 

Theoretically, the summation term in Equation~\ref{supp eqn: crosstalk model} should be over all phaseshifters $(l,m)$. Practically, the $K$ coefficients should decay according to geometric distance. We model the coefficients as follows: 
\begin{equation}
K_{(i,j),(l,m)} 
= 
\begin{cases}
    & K^{(1)} \xi_{(i,j),(l,m)} 
    \quad 
    \text{if nearest neighbour}, \\
    & K^{(2)} \xi_{(i,j),(l,m)} 
    \quad 
    \text{if second-nearest neighbour}, \\
    & K^{(3)} \xi_{(i,j),(l,m)} 
    \quad 
    \text{if third-nearest neighbour}, \\
    & K^{(4)} \xi_{(i,j),(l,m)} 
    \quad 
    \text{otherwise}, \\
\end{cases}
\end{equation}
where $K^{(1,2,3,4)}$ are the baseline crosstalk coefficients for nearest, second-nearest, third-nearest neighbour (according to the phaseshifters' geometric layout on chip) and global crosstalk respectively, and $\xi$ is a normal distributed random value with mean $1$ and standard deviation $0.05$, i.e. $\xi\sim \mathcal{N}(1, 0.05)$. This simplifies the number of parameters in our simulator while also keeping variance between phaseshifters. 

Experimental tests revealed that the crosstalk behaviour on our chip between two phaseshifters, $(l,m)$ where the heat is initially applied to (or `aggressor phaseshifter'), and $(i,j)$ where the heat dissipates to (or `victim phaseshifter'), not only depends on $\tilde{\theta}_{l,m}$ of the aggressor phaseshifter, but also on the local $\tilde{\theta}_{i,j}$ of the victim phaseshifter. This is why an additional $\eta_{i,j}$ coefficient is included to capture this non-linear behaviour. Typically, this coefficient is very small but non-negligible. This adds difficulty to the crosstalk calibration step described in the previous section: we have to sweep neighbouring MZIs at sufficient points in their $2\pi$ range in order to construct a look-up table that fully captures this non-linear behaviour. In the digital simulator, we model the coefficients as a Gaussian random number with mean $\eta$ and standard deviation $0.1 \eta$, or $\eta_{i,j}\sim \mathcal{N}(\eta, 0.1\eta)$, where $\eta$ is a common non-linear coefficient for all phaseshifters. 

Finally, we model the calibration error as an additive error on the actual phase. We also set a global calibration error value, $\epsilon$, and for each phaseshifter $(i,j)$, the calibration error $\epsilon_{i,j}$ is a Gaussian random value with mean $\pm\epsilon$ (with equal probability) and standard deviation $0.01$ radians.  

Therefore, to summarise, this simple crosstalk simulator has the following hyperparameters we can tune: $K^{(1,2,3,4)}$ for baseline crosstalk coefficients, $\eta$ for baseline non-linear coefficient and $\epsilon$ for baseline calibration error. The simulator evolves the phase values we set without considering crosstalk, $\tilde{\theta}_{i,j}$, to phase values we would get if crosstalk behaves according to Equation~\ref{supp eqn: crosstalk model}. We can then simulate the output probabilities according to the input states. After rough tuning of the hyperparameters, we were able to simulate a spread of overlap errors that is similar to the experimental outcome, as shown by the orange scatter plots in Fig.~\ref{supp fig: chip output}(b). This involves setting $K^{(1)}=1.6\%, K^{(2)}=0.4\%, K^{(3)}=0.16\%, K^{(4)}=0.05\%,\eta=1\%$, and $\epsilon=0.02$~rad. This is the same simulator used to simulate the online learning protocol in the main text.

The crosstalk parameters are very low compared to the actual crosstalk behaviour we measured, which revealed typical values of $K^{(1)}=8\%, K^{(2)}=5\%$. This demonstrates the extent of success of our crosstalk mitigation method described in the previous section. However, the fact that we still have a non-negligible spread in the overlap values, as well as a limited fidelity in the output coincidence distribution, shows how significant the impact of thermal crosstalk is. 

Recently, machine-learning assisted routines have been proposed for high-fidelity photonic circuit calibration~\cite{fyrillasScalableMachineLearningassisted2024, youssryExperimentalGrayboxQuantum2024}. One direction for future improvement is adopting similar routines to directly learn the parameters in Equation~\ref{supp eqn: crosstalk model}. Then we can directly infer the relationship between $\tilde{\theta}$, the phase values we set, and $\theta$, the phase values we would actually get. Alternatively, black-box models such as neural networks or tree-based methods may also be adopted instead of the clear-box model we propose here. We leave this future work.

\section{Classification by support vector machines}\label{supp section: classification task}

Firstly, we introduce the classical support vector machine (SVM) algorithm.
Consider a set of data, $\{\bm{x}_i, y_i\}_{i=1}^m$, that consists of $m$ samples of $d$-dimensional data vectors, $\bm{x}_i\in \mathds{R}^{d}$, to each of which a binary label, $y_i=\pm1$, is assigned. In this supplementary section, we use subscript $i$ to label the different data instances, not to be confused with the mode label in earlier text. For now, we assume that the data are linearly separable in Euclidean space. In other words, we can find a hyperplane defined by some vector $\bm{w}\in \mathds{R}^d$ and some intercept $b\in \mathds{R}$, such that $y_i = \sign(\bm{w}^T\bm{x}_i + b)$, or that $y_i(\bm{w}^T\bm{x}_i + b) - 1 \geq 0$ for all $i\in[1, m]$. The goal of SVM is to find some $\bm{w}$ that maximises the separation between the two classes of data~\cite{cortesSupportvectorNetworks1995}. A textbook introduction of the algorithm can be found in e.g. Ref.~\cite{hastieElementsStatisticalLearning2009}.

To maximise the separation, the classification task is turned into a constraint optimisation problem: 
\begin{equation}\label{supp eqn: constraint optimisation for linear svm}
\begin{split}
    &\min_{\bm{w},b, \xi_i} \frac{1}{2} \|\bm{w}\|_2
    + C\sum_{i=1}^m \xi_i
    \\
    \text{subject to} \quad
    & \xi_i \geq 0
    \\
    &y_i(\bm{w}^T\bm{x}_i + b) \geq 1 - \xi_i 
    \; \forall \, i \in [1,m],
\end{split}
\end{equation}
where we introduced a soft-margin constraint with some slack constant $C$. A large $C$ corresponds to a lower tolerance for data points to fall on the wrong side of the decision boundary. A hard constraint, with all $\xi_i=0$, corresponds to $C=\infty$. The constrained optimisation problem can be solved by solving its Lagrangian dual program: 
\begin{equation}\label{supp eqn: dual program for linear svm}
\begin{split}
    &\max_{\beta_i}
    \sum_{i=1}^m \beta_i 
    -
    \frac{1}{2} \sum_{i=1}^m \sum_{j=1}^m 
    \beta_i \beta_j y_i y_j \bm{x}_i^T \bm{x}_j
    \\
    \text{subject to} \quad
    & 0 \leq \beta_i \leq C \\
    & \sum_{i=1}^m \beta_i y_i = 0. 
\end{split}
\end{equation}

To derive Equation~\ref{supp eqn: dual program for linear svm} from Equation~\ref{supp eqn: constraint optimisation for linear svm}, we first consider the hard-constraint problem with $\xi_i=0$. Taking Lagrange multipliers, $\beta_i>0$, constrained optimisation of Equation~\ref{supp eqn: constraint optimisation for linear svm} is equivalent to minimising the following Lagrangian without constraints: 
\begin{equation}
    \mathcal{L} = 
    \frac{1}{2} \bm{w}^T \bm{w}
    -
    \sum_{i} \beta_i 
    \left[y_i (\bm{w}^T \bm{x}_i + b) -1
    \right].
\end{equation}
Taking derivatives, we find, 
\begin{align}
    \frac{\partial \mathcal{L}}{\partial \bm{w}}
    &=
    \bm{w} - \sum_{i}\beta_i y_i \bm{x}_i = 0,
    \\
    \frac{\partial \mathcal{L}}{\partial b}
    &=
    - \sum_{i}\beta_i y_i = 0,
\end{align}
which we substitute back into the Lagrangian to recover the optimisation problem in Equation~\ref{supp eqn: dual program for linear svm}. Finally, if there exists some point
, $\bm{x}_i$, that cannot satisfy the hard constraint, they would theoretically have infinite support in $\beta_i = \infty$. In this case, we can clip their support at some slack constant, $C$, which recovers the soft margin of $0 \leq \beta_i \leq C$ in Equation~\ref{supp eqn: dual program for linear svm}.

A key observation is that the optimisation problem now \textit{only} depends on the inner product, $\bm{x}_i^T \bm{x}_j$, not on any individual data vectors. We can define this as a kernel function, $K(\bm{x}_i, \bm{x}_j)=\bm{x}_i^T \bm{x}_j$. If the data are no longer linearly separable in the original Euclidean space, instead of computing a non-linear classification problem, the `kernel trick' is to map the data vectors into a higher-dimensional space by defining a suitable alternative kernel function and then to solve the same linear problem in the new space. 

In our work, we adopt a quantum feature map, where three-dimensional unseen vectors specify phase-encoded qudits $\bm{x}\rightarrow |\psi(\bm{x})\rangle$, such that the kernel function becomes $K(\bm{x}_i, \bm{x}_j)=\left|\langle\psi(\bm{x}_i)|\psi(\bm{x}_j)\rangle\right|^2$. The data vectors $\bm{x}$ represent the three phase degrees of freedom of the qudit, $\bm{x}=\bm{\theta}$ (Equation 3 in main text). For notational consistency, we continue to use subscript-labelled $\bm{x}_i$ throughout this supplementary section, though in the main text we used superscripts instead to differentiate the data labels from the optical mode labels.

We denote the kernel function evaluations between the training data as an $m\times m$ overlap matrix $\bm{K}$, which has elements $K_{ij}=K(\bm{x}_i, \bm{x}_j)$, and the training labels and classifier weights as $m$-dimensional column vectors: $\bm{y}=(y_1, \hdots y_m)^T$, $\bm{\beta}=(\beta_1, \hdots, \beta_m)^T$. Then we can re-express the optimisation problem as the following quadratic program: 
\begin{equation}
\begin{split}
    &\min_{\bm{\beta}} 
    \frac{1}{2}\bm{\beta}^T 
    \left(\bm{y}^T \bm{K} \bm{y}\right)
    \bm{\beta} 
    + \bm{q}^T \bm{\beta}
    \\
    \text{subject to} \quad
    & \bm{G} (\bm{\beta}\oplus \bm{\beta}) \leq \bm{h} \\ 
    & \bm{y}^T \bm{\beta} = \bm{0},
\end{split}
\end{equation}
where vector $\bm{q}$ is a $m$-dimensional vector of $-1$s, i.e. $\bm{q}=(-1, \hdots, -1)^T$; matrix $\bm{G}$ is a direct sum of the $m\times m$ identity matrix and its additive inverse, i.e. $\bm{G}=\bm{I}_m\oplus (-\bm{I}_m)$; 
and vector $\bm{h}$ is a $2m$-dimensional vector where the first $m$ elements are $C$ and the last $m$ elements are $0$, i.e. $\bm{h}=(C, \hdots, C, 0, \hdots, 0)^T$. 
We then define a correction bias term $b$, given by 
\begin{equation}
    b = \frac{1}{m}\sum_{j=1}^m\left(y_j-\sum_{i=1}^m\beta_iy_iK(\bm{x}_j,\bm{x}_i) \right).
\end{equation}

In all three classification tasks shown in the main text, we train the classifier with $m=100$ training data vectors and we set the slack constant to $C=0.8$. After we measure the overlap matrix $\bm{K}$, we solve the quadratic program using the CVXOPT package in Python~\cite{cvxopt}. 

Finally, for unseen new data $\bm{x}$, we compute its kernel function with the training dataset and compute its label by
\begin{equation}
    \hat{y}(\bm{x}) = 
    \sign \left( 
    \sum_{i=1}^m \beta_i y_i K(\bm{x}, \bm{x}_i)+b
    \right). 
\end{equation}

Practically, the $\beta_i$ coefficients are non-zero only for a subset of the training data. The data vectors corresponding to non-zero $\beta_i$ are called \textit{support vectors} as they lie on the decision boundary of the hyperplane. The kernel function only needs to be evaluated against this subset of support vectors.

The quantum datasets used in our experiments are firstly generated in the parameter space of $\boldsymbol{\theta}$, i.e., the phases of the quantum states, while the amplitudes $A_i$ in Equation~\ref{supp eqn: qudit amplitude values} serve as hyperparameters. 
Through the scikit-learn package in python~\cite{JMLR:v12:pedregosa11a}, we generate three types of datasets: separate, spherical and overlapping.
Each dataset contains 100 training and 100 testing data vectors with proper parameter ranges.
We check these datasets by running classical SVM algorithm, in which the kernels are simulated using different choices of $\{A_i\}_{i=0}^{3}$.
Further, we optimize the hyperparameters to find the set $\{A_i\}_{i=0}^{3}$ in Equation~\ref{supp eqn: qudit amplitude values} that yields optimal simulated classification accuracies of $100\%$, $100\%$ and $97\%$ for separate, spherical and overlapping datasets, respectively.

\section{Online learning by simultaneous perturbation stochastic approximation}

\begin{figure*}[h]
    \includegraphics[width=0.9\textwidth]{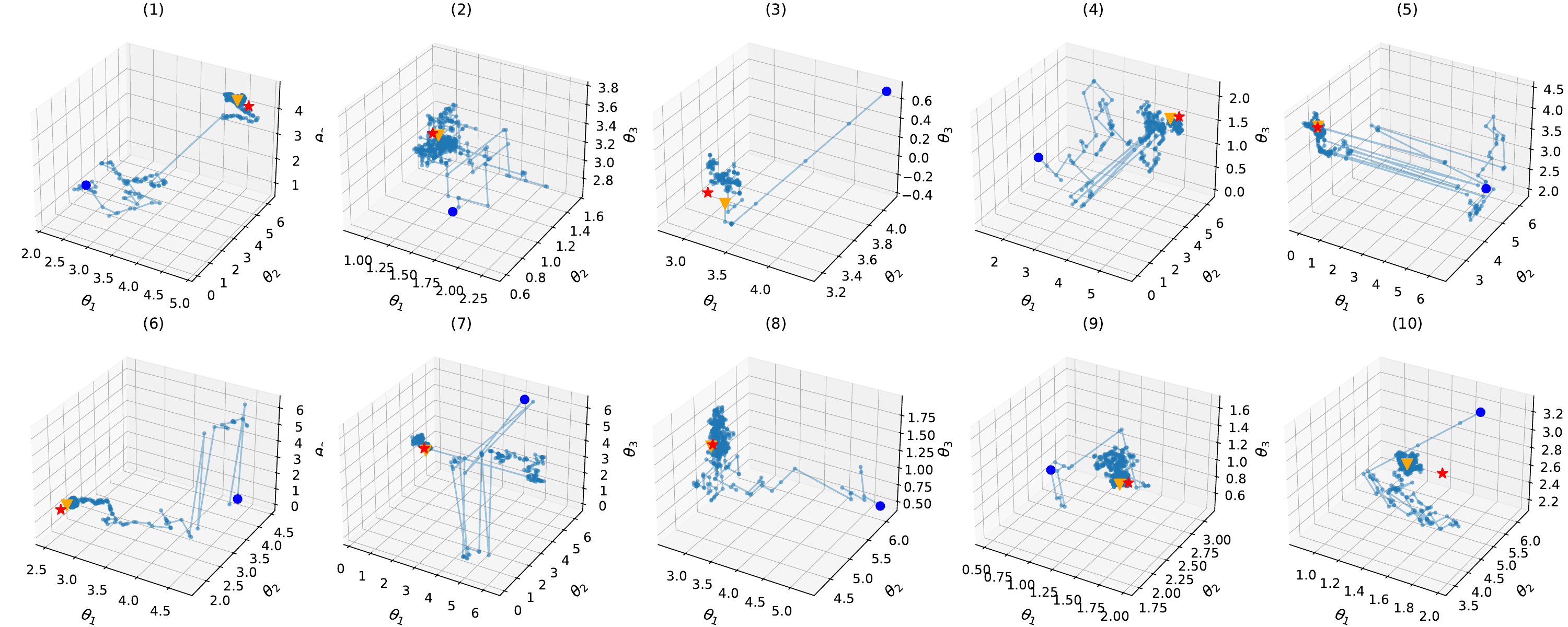}
    \caption{\label{supp fig: spsa all paths} The optimisation results for $10$ randomly selected target phases. Each panel represents one experiment. Data shown for $N=100$ samples per overlap measurement. The red star represents the phase $\bm{x}^{(t)}$ of the target state. The blue circle represents the initial guess, $\bm{\theta}^{(0)}$. The orange triangle represents the final guess after $500$ iterations of the SPSA algorithm. }
\end{figure*}

In our work we implemented an online learning protocol by simultaneous perturbation stochastic approximation (SPSA) algorithm~\cite{spall1998-overview-spsa, spall1998-spsa-implementations}. In this protocol, the unknown target state $|\psi(\bm{x}^{(t)})\rangle$, determined by encoded phases $\bm{x}^{(t)}=[\theta_1^{(t)}, \theta_2^{(t)}, \theta_3^{(t)}]$, is prepared on one of the two qudits.
Then, we iteratively update the phase of the other qudit, $|\psi(\bm{\theta})\rangle$, to maximise the overlap between the two qudits. When the overlap approaches unity, then the second qudit approximately recovers the target state $|\psi(\bm{x}^{(t)})\rangle$ that we wish to learn.
We define the cost function as 
\begin{equation}
    c(\bm{\theta}) = 1 - \left|\langle \psi(\bm{x}^{(t)})|\psi(\bm{\theta})\rangle\right|^2.  
\end{equation}

In the $0$-th iteration, we pick an initial guess $\bm{\theta}^{(0)}=[\theta_1^{(0)}, \theta_2^{(0)}, \theta_3^{(0)}]$, where each $\theta_i^{(0)}$ is randomly initialised between $0$ and $2\pi$. In the $k$-th iteration, we generate a random perturbation vector $\bm{\Delta}^{(k)} \in \{\pm1\}^{\otimes 3}$, where each component can be $\pm 1$ with equal outcome (i.e. follows the Bernoulli distribution). We evaluate the cost function at the simultaneous perturbation around the current phase values, $\bm{\theta}^{(k)}$: $c(\bm{\theta}^{(k)}\pm \bm{\Delta}^{(k)} t^{(k)})$, where $t^{(k)}$ is a perturbation coefficient we will define later. Operationally, this means encoding the phases $\bm{\theta}^{(k)}\pm \bm{\Delta}^{(k)} t^{(k)}$ and measuring the overlaps respectively. We repeat the experiments for $N=100, 1000$ and $10000$ samples per overlap measurement, as shown in the main text. 

The simultaneous perturbation allows us to approximate the local gradient at $\bm{\theta}^{(k)}$ to be:
\begin{equation}
    \bm{g}(\bm{\theta}^{(k)}) = 
    \frac{
    c\left(\bm{\theta}^{(k)} + \bm{\Delta}^{(k)} t^{(k)}\right) 
    - c\left(\bm{\theta}^{(k)} - \bm{\Delta}^{(k)} t^{(k)}\right)
    }{
    2 t^{(k)} \bm{\Delta}^{(k)}
    }.
\end{equation}
We then update the phase parameters for the $k+1$-th iteration to be 
\begin{equation}
    \bm{\theta}^{(k+1)} = \bm{\theta}^{(k)} 
    - a^{(k)} \bm{g}(\bm{\theta}^{(k)}),
\end{equation}
where $a^{(k)}$ is a gain coefficient that determines how much we update the phase parameters at each iteration. 

The perturbation and gain coefficients decay with iterations, and are defined as 
\begin{equation}
    a^{(k)} = \frac{a}{(A+k+1)^\alpha}, 
    \quad 
    t^{(k)} = \frac{t}{(k+1)^\gamma}. 
\end{equation}
The coefficients $\alpha, \gamma$ govern the rate of decay for both $a^{(k)}$ and $t^{(k)}$. Following the guidelines of Ref.~\cite{spall1998-spsa-implementations}, we choose $\alpha=0.602, \gamma=0.101$ respectively. The coefficient $A$ is a `stability constant' that allows for larger $a$ coefficient in the numerator of $a^{(k)}$ while ensuring stability in early iterations. Ref.~\cite{spall1998-spsa-implementations} recommends to choose a value of $A$ that is much less than the maximum number of iterations allowed. In our experiments, we run the protocol for $500$ iterations, and choose $A=10$. The $a$ coefficient determines the step size of each iteration. Heuristically, we found that setting $a=1.6$ gives a good convergence rate for our experiments. 

Finally, the $t$ coefficient determines the magnitude of the perturbation at each step. Ref.~\cite{spall1998-spsa-implementations} recommends a selection of $t$ that is approximately the standard deviation of the measurement noise, so that the perturbed cost function values are noise-robust. We follow this guideline. At the start of the $0$-th iteration, we evaluate the initial cost function, $c(\bm{\theta}^{(0)})$, five times and set $t$ to be twice the standard deviation of the measurement results. 

We repeat the optimisation routine for ten randomly selected target phases, corresponding to ten different target states. The results are summarised in the main text as well as Fig.~\ref{supp fig: spsa all paths}. 

\end{document}